\title{An Analysis of Decision Problems for Relational Pattern Languages under Various Constraints}
\author{Klaus Jansen\quad
Dirk Nowotka\quad
Lis Pirotton\quad
Corinna Wambsganz \quad
Max Wiedenhöft
\institute{Department of Computer Science, Kiel University, Kiel, Germany}
\email{\{kj,dn,lpi,cwa,maw\}@informatik.uni-kiel.de}
}
\DeclareMathOperator{\SubSeq}{SubSeq}
\DeclareMathOperator{\var}{var}
\DeclareMathOperator{\enc}{enc}
\def\N{\mathbb{N}}
\def\ta{\mathtt{a}}
\def\tb{\mathtt{b}}
\def\tc{\mathtt{c}}
\def\relpat{\mathtt{RelPat}_{\Sigma,R}}
\def\ValC{\mathtt{ValC}}
\newcommand{\al}{\operatorname{alph}}
\newcolumntype{C}{>{\hspace{3pt}}c<{\hspace{3pt}}}
\newtheorem{theorem}{Theorem}
\newtheorem{corollary}[theorem]{Corollary}
\newtheorem{lemma}[theorem]{Lemma}
\newtheorem{proposition}[theorem]{Proposition}
\newtheorem{definition}[theorem]{Definition}
\newtheorem*{remark}{Remark}
\begin{document}

\maketitle

\begin{abstract}
Patterns are words with terminals and variables. The language of a pattern is the set of words obtained by uniformly substituting all variables with words that contain only terminals. In their original definition, patterns only allow for multiple distinct occurrences of some variables to be related by the equality relation, represented by using the same variable multiple times. In an extended notion, called relational patterns and relational pattern languages, variables may be related by arbitrary other relations, achieved by using regular patterns and relating individual variables independently from the patterns structure separately. We extend the ongoing investigation of the main decision problems for patterns (namely, the equivalence problem, the inclusion problem, and the membership problem) to relational pattern languages under a wide range of relevant individual relations, providing a comprehensive foundation in all three research directions.
\vspace{0.2cm}

\textbf{Keywords:} Patterns, Pattern Languages, Relational Pattern Languages, Equivalence, Inclusion, Membership, Decidability
\end{abstract}

\section{Introduction}
A \emph{pattern} (with variables) is a finite word that consists only of symbols from a finite set of \emph{(terminal) letters} $\Sigma = \{\ta_1, ..., \ta_\sigma\}$ and from an infinite but countable set of \emph{variables} $X = \{x_1, x_2, ...\}$ with $\Sigma\cap X = \emptyset$. By applying terminal preserving morphisms, called \emph{substitutions}, to a pattern, we can obtain words that consist only of terminal letters. The \emph{language} of a pattern is then just the set of all words that can be obtained through the application of arbitrary substitutions.
Originally, pattern languages as defined and introduced by Angluin~\cite{Angluin1980} only considered words that are obtained by what is now called \emph{non-erasing substitutions}, i.e., substitutions where each variable is substituted by a word of at least length one. Hence, these languages are now also called \emph{non-erasing (NE) pattern languages}. Shinohara~\cite{Shinohara1983} extended this notion by allowing for the empty word to be used to substitute a variable, introducing so-called \emph{extended/erasing (E) pattern languages}. 

For example, consider the pattern $\alpha = x_1 \ta\ta x_1 \tb x_2$. Using the non-erasing substitution $h$ that sets $h(x_1) = \tb\tb$ and $h(x_2)= \ta$, we obtain the word $h(\alpha) = \tb\tb\ta\ta\tb\tb\tb\ta$. If we also considered the E-pattern language of $\alpha$, we could set all variables to the empty word and obtain the word $\ta\ta\tb$ which we could not obtain in the NE pattern language of $\alpha$.

As patterns form a natural and compact device for the definition of formal languages and due to their practical and simple definition, they occur in numerous areas in computer science and discrete mathematics. Among others, these include for example the areas of unavoidable patterns in words~\cite{Jiang1994,lothaire1997}, word equations~\cite{lothaire1997}, algorithmic learning theory~\cite{Angluin1980,FERNAU201844,Shinohara1995}, database theory~\cite{FreydenbergerP21,SchmidSchweikardtPODS2022}, or the theory of extended regular expressions with back references\cite{FREYDENBERGER20191}. Furthermore, many practical areas such as machine learning, database systems, or bio-informatics deal with the question of identifying or describing patterns in (sets of) strings~\cite{Mousawi2024}, for example in the analysis of protein data~\cite{Arikawa1993}, in pattern matching~\cite{10.1007/978-3-642-03784-9_29}, or the design of algorithms for program synthesis~\cite{10.1145/4472.4476}.

Three main decision problems emerge when it comes to the investigation of patterns and pattern languages. These are the general \emph{equivalence problem}, the \emph{inclusion problem}, and the \emph{membership problem} and its variations~\cite{gawrychowski_et_al:LIPIcs.MFCS.2021.48,Manea2022, Fleischmann2023}, all of which are considered in the non-erasing (NE) as well as erasing (E) cases. The equivalence problem determines whether the languages of two given patterns are equal. In Angluin's original work~\cite{Angluin1980}, it has been shown to be trivially decidable for all NE pattern languages. For erasing pattern languages, however, the general decidability of this problem has been the \emph{most significant open problem} in the field of decision problems for patterns~\cite{Jiang1995,Reidenbach2004-1,Reidenbach2004-2,Ohlebusch1996,Reidenbach2007}. The inclusion problem checks whether the language of a pattern is included as a subset in another pattern's language. It has been shown to be undecidable for unbounded alphabets by Jiang et al.~\cite{Jiang1995}. Their result has been extended by Freydenberger and Reidenbach in~\cite{Freydenberger2010} as well as by Bremer and Freydenberger in~\cite{BREMER201215}, where it has been shown to be undecidable for all alphabets $\Sigma$ of fixed size $|\Sigma|\geq 2$, for erasing and non-erasing pattern languages. The membership problem determines whether a word belongs to a pattern's language. It has been shown to be generally NP-complete in the non-erasing as well as erasing cases~\cite{Angluin1980,Jiang1994} and has been investigated extensively in various communities~\cite{AMIR2007514, Ordyniak2015, FERNAU2015287, Fernau2016, IBARRA1995179, Manea2019, Manea2020}.
In the special case of terminal-free patterns (containing only variables), the membership problem remains hard in both erasing and non-erasing cases, as does the non-erasing equivalence problem. However, inclusion for terminal-free E-pattern languages is NP-complete, as it reduces to checking for a morphism between patterns~\cite{Jiang1995,DBLP:journals/ipl/EhrenfeuchtR79a,File1988}, resulting also in decidability of equivalence. In contrast, Saarela~\cite{saarela:LIPIcs.ICALP.2020.140} showed that inclusion for terminal-free NE pattern languages is undecidable over a binary alphabet.

Various extensions to the notion of pattern languages have been introduced over time either to get closer to an answer for the remaining open problems or to obtain additional expressibility that is usable in some practical context. Some examples include the bounded scope coincidence degree, patterns with bounded treewidth, $k$-local patterns, or strongly-nested patterns (see \cite{Day2018} and references therein). Koshiba~\cite{Koshiba1995} introduced so-called \emph{typed patterns} where the substitutions of variables can be restricted by arbitrary recursive languages, so-called \emph{types}. This has been recently extended by Geilke and Zilles in~\cite{Geilke2011} to the notion of \emph{relational patterns} and \emph{relational pattern languages}. Here, relations other than equality may be used to restrict the valid substitution of patterns. 

This can be achieved by using regular patterns (each variable occurs only once) together with separate relational constraints on variables. The resulting language consists of all words obtained by substitutions satisfying these constraints, thereby extending classical pattern languages to arbitrary relations between variables. Consider, for example, the pattern $\beta = x_1\tc\tc x_2$. Assume $x_1$ and $x_2$ should be substituted by the reversal of each other, i.e., $(h(x_1),h(x_2))\in r_{rev}$, where $h$ is a substitution and $r_{rev}$ is the reversal relation defined by $r_{rev} := \{\ (w,w^R) \mid w\in\Sigma^*\ \}$ (where $w^R$ is the reversal of a word $w$). Then, a substitution $h$ with $h(x_1) = \ta\tb$ and $h(x_2) = \tb\ta$ is called \emph{valid} regarding the pattern $\beta$, we get $h(\beta) = \ta\tb\tc\tc\tb\ta$, and we say that $h(\beta)$ is in the relational pattern language of $\beta$ under the relational constraints that relate $x_1$ and $x_2$ by $r_{rev}$. If we considered a substitution $h'$, however, in which $h'(x_1) = \ta\tb$ and also $h'(x_2)=\ta\tb$, then clearly $h'(x_1) \neq h'(x_2)^R$ and therefore $h'$ is not a valid substitution in this case, resulting in $h'(\beta)$ not being in that relational pattern language of $\beta$.

Geilke and Zilles~\cite{Geilke2011} established general properties of relational pattern languages. For equal length, Holte et al.~\cite{pmlr-v167-holte22a} proved decidability of NE-equivalence for alphabets of size at least three, and Mousawi and Zilles~\cite{Mousawi2024,mousawi2026positivecharacteristicsetsrelational} obtained the corresponding E-result. An earlier extension to the binary-case is currently claimed to be flawed in their arXiv paper~\cite{mousawi2026positivecharacteristicsetsrelational}. However, they still claim a classification for a large subclass of patterns in the binary setting, keeping the general case open, though. For reversal, positive characteristic sets exist in the NE case, but no effective construction is known, while no such family exists in the binary E case. In addition to that, in the cases of two special kinds of relations/constraints, namely regular constraints \cite{Nowotka2024} or length constraints\cite{Nowotka2025}, it was recently shown, among other results, that the still open equivalence problem for erasing pattern languages actually becomes undecidable if these constraints are allowed in addition to equality between variables. This sets a first upper bound of constraints necessary to obtain undecidability for that problem. In comparison to other relations, however, these constraints appear to be very strong, leaving the question of what happens in the cases of weaker relations that do not restrict the substitution space with explicit values. Except the previously stated results, this area seems to be largely unexplored.

This paper considers the following relations: \emph{equality} ($r_{=}$), \emph{equal length} ($r_{|w|}$), \emph{subsequence} ($r_{ssq}$), \emph{abelian equivalence} ($r_{ab}$), \emph{alphabet permutation} ($r_{perm}$), \emph{reversal} ($r_{rev}$), (nonempty) \emph{commutation} ($r_{com^+}$, $r_{com^*}$), the \emph{star relation} ($r_{*}$), and the common (non)emptiness relation ($r_{\varepsilon_=}$). These relations are not meant to form an exhaustive list. Rather, they constitute a selection of natural and representative binary relations on words, covering several standard notions of similarity. Many other relations could be studied within this framework (e.g., see discussion in Section~\ref{sec:conclusion}). We leave such extensions for future work. 
We study the equivalence, inclusion, and membership problems under these relations, providing a comprehensive overview of many cases (Table~\ref{tab:complexity-full}).
Section 2 introduces the required notation and models. Sections 3 to 5 address the equivalence, inclusion, and membership problems, respectively. Section 6 discusses open problems and directions for future research. Due to space constraints, some proofs or extended sketches were moved into the appendix, marked with (\Rightscissors) and a reference to the respective subsection.

\begin{table}[ht]
    \centering
    \begingroup
    \newcommand{\resultref}[1]{
        \textsuperscript{\normalfont\scriptsize #1}
    }

    \setlength\tabcolsep{4pt}
    \renewcommand{\arraystretch}{1.2}
    \resizebox{\textwidth}{!}{
    \begin{tabular}{ | C | C | C | C | C | C | C | C | C | C | C | }
        \hline
         & $r_{=}$ 
         & $r_{|w|}$
         & $r_{ssq}$ 
         & $r_{ab}$ 
         & $r_{perm}$ 
         & $r_{rev}$ 
         & $r_{com^*}$
         & $r_{com^+}$
         & $r_{*}$
         & $r_{\varepsilon_=}$\\
        \hline

        E ($\in$)
        & NPC\resultref{a}
        & \textcolor{red}{NPC}\resultref{T15}
        & \textcolor{red}{NPC}\resultref{T15}
        & \textcolor{red}{NPC}\resultref{T15}
        & \textcolor{red}{NPC}\resultref{T15}
        & \textcolor{red}{NPC}\resultref{T15}
        & \textcolor{red}{NPC}\resultref{T15}
        & \textcolor{red}{NPC}\resultref{T15}
        & \textcolor{red}{NPC}\resultref{T15} 
        & \textcolor{red}{NPC}\resultref{T15} \\
        \hline

        E ($\subseteq$)
        & UD\resultref{b}
        & \textcolor{lightgray}{\textbf{?}}
        & \textcolor{red}{UD}\resultref{C8}
        & \textcolor{red}{UD}\resultref{T12}
        & \textcolor{lightgray}{\textbf{?}}
        & \textcolor{red}{UD}\resultref{T12}
        & \textcolor{lightgray}{\textbf{?}}
        & \textcolor{lightgray}{\textbf{?}}
        & \textcolor{red}{UD}\resultref{C8}
        & \textcolor{red}{D}\resultref{T10}\\
        \hline

        E ($=$)
        & \textcolor{lightgray}{\textbf{?}}
        & $\mathrm{D}_{|\Sigma|\geq 3}$\resultref{d}
        & \textcolor{red}{$\geq r_=$}\resultref{P7}
        & \textcolor{lightgray}{\textbf{?}}
        & \textcolor{lightgray}{\textbf{?}}
        & \textcolor{red}{$\geq r_=$}\resultref{P7}
        & \textcolor{lightgray}{\textbf{?}}
        & \textcolor{lightgray}{\textbf{?}}
        & \textcolor{red}{$\geq r_=$}\resultref{P7}
        & \textcolor{red}{D}\resultref{T5}\\
        \hline

        NE ($\in$)
        & NPC\resultref{e}
        & \textcolor{red}{NPC}\resultref{T15}
        & \textcolor{red}{NPC}\resultref{T15}
        & \textcolor{red}{NPC}\resultref{T15}
        & \textcolor{red}{NPC}\resultref{T15}
        & \textcolor{red}{NPC}\resultref{T15}
        & \textcolor{red}{NPC}\resultref{T15}
        & \textcolor{red}{NPC}\resultref{T15}
        & \textcolor{red}{NPC}\resultref{T15} 
        & \textcolor{red}{P}\resultref{C16}\\
        \hline

        NE ($\subseteq$)
        & UD\resultref{b}
        & \textcolor{lightgray}{\textbf{?}}
        & \textcolor{red}{UD}\resultref{C8}
        & \textcolor{red}{UD}\resultref{T9}
        & \textcolor{lightgray}{\textbf{?}}
        & \textcolor{red}{UD}\resultref{T9}
        & \textcolor{lightgray}{\textbf{?}}
        & \textcolor{lightgray}{\textbf{?}}
        & \textcolor{red}{UD}\resultref{C8}
        & \textcolor{red}{P}\resultref{S4.1}\\
        \hline

        NE ($=$)
        & P\resultref{e}
        & $\mathrm{D}_{|\Sigma|\geq 3}$\resultref{c}
        & \textcolor{red}{$?^{\dagger}$}\resultref{L1}
        & \textcolor{red}{P}\resultref{C4}
        & \textcolor{lightgray}{\textbf{?}}
        & \textcolor{red}{$?^{\dagger}$}\resultref{L1}
        & \multicolumn{2}{c|}{
            \textcolor{red}{P}\resultref{C4}
        }
        & \textcolor{red}{$?^{\dagger}$}\resultref{L1}
        & \textcolor{red}{P}\resultref{S3.1}\\
        \hline
    \end{tabular}
    }

    \caption{
        Current state of decision problems for relational pattern languages
        under the considered relations. Red entries are new results.
        NPC = NP-complete; P = polynomial-time decidable;
        D/UD = decidable/undecidable; E/NE = erasing/non-erasing;
        $\geq r_=$ denotes a polynomial-time many-one reduction from
        the equality case; $\dagger$ marks non-trivial necessary
        conditions without a decidability classification.%
    }
    \label{tab:complexity-full}

    \vspace{2mm}

    \begin{minipage}{.98\textwidth}
        \scriptsize
        \raggedright
        \textit{Pointers.}
        Superscripts T, C, P, L, and S refer to theorem, corollary,
        proposition, lemma, and section numbers in this paper, respectively.\\
        \resultref{a} Jiang et al.~\cite{Jiang1994};
        \resultref{b} Jiang et al.~\cite{Jiang1995},
        Freydenberger and Reidenbach~\cite{Freydenberger2010},
        and Bremer and Freydenberger~\cite{BREMER201215};\\
        \resultref{c} Holte et al.~
        \cite[Theorem~7 and Corollary~16]{pmlr-v167-holte22a};
        \resultref{d} Mousawi and Zilles~
        \cite[Theorem~5]{mousawi2026positivecharacteristicsetsrelational};
        \resultref{e} Angluin~\cite{Angluin1980}.
    \end{minipage}
    \endgroup
\end{table}

\vspace{-0.8cm} 

\vspace{0.5cm}

\section{Preliminaries}
Let $\N$ denote the natural numbers $\{1, 2, 3, \dots\}$ and let $\N_0 := \N \cup \{0\}$.
For $n,m \in \mathbb{N}$ set $[m,n] := \{k \in \mathbb{N} \mid m \leq k \leq n\}$. 
Denote $[n] := [1,n]$ and $[n]_0 := [0,n]$. 
An \emph{alphabet} $\Sigma$ is a non-empty finite set whose elements are called \emph{letters}.  
A \emph{word} is a finite sequence of letters from $\Sigma$. 
Let $\Sigma^*$ be the set of all finite words over $\Sigma$ and set $\Sigma^+ := \Sigma^* \setminus \{\varepsilon\}$.
We call the number of letters in a word $w \in \Sigma^*$ \emph{length} of $w$, denoted by $|w|$.
Therefore, we have $|\varepsilon| = 0$. 
Denote $\Sigma^k := \{w \in \Sigma^* \mid |w| = k\}$.
For $w \in \Sigma^*$, let $w[i]$ denote $w$'s $i^{th}$ letter for all $i \in [\vert w \vert]$. 
For compactness reasons, we denote $w[i] \cdots w[j]$ by $w[i \cdots j]$ for all $i,j \in [\vert w \vert]$ with $i < j$.
Denote by $|w|_\ta = |\{\ i\in [|w|] \mid w[i] = \ta\}|$ the number of times the letter $\ta$ occurs in $w$.
Set $\al(w) :=  \{\ta \in \Sigma \mid \exists i \in [\vert w \vert] : w[i] = \ta\}$ as $w$’s alphabet.
A \emph{subsequence} $u$ of a word $w$ is a word obtained by deleting arbitrary letters of $w$. Denote by $\SubSeq(w)$ the set of subsequences of $w$. Two words $u,v\in\Sigma^*$ are abelian equivalent, denoted by $u\equiv_{ab}v$, iff $|u|_\ta = |v|_\ta$ for all $\ta\in\Sigma$.

Let $X := \{x_1, x_2, ...\}$ be a countable set of variables such that $\Sigma \cap X = \emptyset$.
A \emph{pattern} is then a non-empty, finite word over $\Sigma \cup X$.
The set of all patterns over $\Sigma \cup X$ is denoted by $Pat_\Sigma$. 
For example, $x_1 \ta x_2 \tb \ta x_2 x_3$ is a pattern over $\Sigma = \{\ta,\tb\}$ with $x_1,x_2,x_3\in X$.
For a pattern $\alpha\in Pat_\Sigma$, let $\var(\alpha) := \{\ x \in X\ |\ |p|_x \geq 1\ \}$ denote the set of variables occurring in $p$.
A \emph{substitution of $\alpha$} is a morphism $h : (\Sigma \cup X)^* \to \Sigma^*$ such that $h(\ta) = \ta$ for all $\ta \in \Sigma$ and 
$h(x) \in \Sigma^*$ for all $x \in X$. 
If we have $h(x) \neq \varepsilon$ for all $x \in \var(p)$, we call $h$ a \emph{non-erasing substitution} for $\alpha$. 
Otherwise $h$ is an \emph{erasing substitution} for $\alpha$. The set of all substitutions w.r.t.~$\Sigma$ is denoted by $H_\Sigma$.
If $\Sigma$ is clear from the context, we may write just $H$.
For a pattern $\alpha\in Pat_\Sigma$, its erasing pattern language $L_E(\alpha)$ and non-erasing pattern language $L_{NE}(\alpha)$
are defined, respectively, by
\begin{align*}
	L_{E}(\alpha) &= \{\ h(\alpha)\ |\ h\in H, h(x) \in \Sigma^* \text{ for all } x\in\var(\alpha)\}, \text{ and } \\
	L_{NE}(\alpha) &= \{\ h(\alpha)\ |\ h\in H, h(x) \in \Sigma^+ \text{ for all } x\in\var(\alpha)\}.
\end{align*}
Let $RegPat_\Sigma = \{ \alpha\in Pat_\Sigma \mid \forall x\in X: |\alpha|_{x} \leq 1\}$ denote the set of all regular patterns, i.e., the set of all patterns where each variable occurs only once. Let $R$ be some set of relations and denote for each $n\in\N$ by $R_n\subset R$ the subset of $n$-ary relations in $R$. Define $Rel_{R} = \{ (r,x_{i_1},... ,x_{i_n}) \mid r\in R_n, x_{i_1},... ,x_{i_n} \in X, i_1,...i_n\in\N, n\in\N \}$ to be the set of all combinations to relate variables in $X$ over the set of relations $R$. Then, a pair $(\alpha,r_\alpha)\in RegPat_\Sigma \times 2^{Rel_R}$ is a relational pattern over the terminal alphabet $\Sigma$ and the set of relations $R$. We denote the set of all relational patterns over $\Sigma$ and $R$ by $\relpat$. For some $(\alpha,r_\alpha)\in \relpat$ and $h\in H$, we say that $h$ is a $r_\alpha$-valid substitution if for all $(r',x_{i_1},...,x_{i_n})\in r_\alpha$ we have $(h(x_{i_1}),...,h(x_{i_n})) \in r'$, i.e., all variables are substituted by words related to each other regarding the relation $r'$. 
The set of all $r_\alpha$-valid substitutions w.r.t.~$\Sigma$ is denoted by $H_{\Sigma,r_\alpha}$.
If $\Sigma$ or $r_\alpha$ are apparent, we may also just write $H_{r_\alpha}$ or $H$. 
For any $(\alpha,r_\alpha)\in\relpat$, we denote its \emph{erasing relational pattern language} $L_E(\alpha,r_\alpha)$ and \emph{non-erasing relational pattern language} $L_{NE}(\alpha,r_\alpha)$, respectively, by
\begin{align*}
	L_E(\alpha,r_\alpha) &= \{\ h(\alpha)\ |\ h\in H_{\Sigma,r_\alpha}\ \} \\
	L_{NE}(\alpha,r_\alpha) &= \{\ h(\alpha)\ |\ h\in H_{\Sigma,r_\alpha} \text{ and } h(x)\in\Sigma^+\text{ for all } x\in\var(\alpha)\ \}.
\end{align*}
In this paper, we primarily investigate the considered decision problems under binary relations, i.e., $|R| = 1$ and, if $R = \{r\}$, then $r$ is 2-ary/binary. In particular, we are interested in the selection of specific customary binary relations given in Table~\ref{tab:relations}. Denote by $\mathcal{R}$ the set of all these relations.

\begin{table}[h!]
    \centering
    \small
    \setlength{\tabcolsep}{6pt}
    \renewcommand{\arraystretch}{1.08}
    \begin{tabularx}{\textwidth}{@{}>{\raggedleft\arraybackslash}p{2.2cm}X@{}}
        \toprule
        \textbf{Relation} & \textbf{Description} \\
        \midrule
        $r_{=}$ 
            & equality of words, i.e., $(u,v)\in r_{=}$ iff $u = v$. \\

        $r_{|w|}$ 
            & length equality of words, i.e., $(u,v)\in r_{|w|}$ iff $|u| = |v|$. \\

        $r_{ssq}$ 
            & subsequence relation, i.e., $(u,v)\in r_{ssq}$ iff $u\in\SubSeq(v)$. \\

        $r_{ab}$ 
            & abelian equivalence of words, i.e., $(u,v)\in r_{ab}$ iff $u \equiv_{ab} v$. \\

        $r_{perm}$ 
            & alphabet permutation of words, i.e., $(u,v)\in r_{perm}$ iff $u$ can be obtained from $v$ by a bijective renaming of the letters. \\

        $r_{rev}$ 
            & reversal of words, i.e., $(u,v)\in r_{rev}$ iff $u = v^R$. \\

        $r_{com^*}$ 
            & commutation between words, i.e., $(u,v)\in r_{com^*}$ iff there exists $z\in\Sigma^*$ such that $u,v\in\{z\}^*$. \\

        $r_{com^+}$ 
            & nonempty commutation between words, i.e., $(u,v)\in r_{com^+}$ iff there exists $z\in\Sigma^+$ such that $u,v\in\{z\}^+$. \\

        $r_{*}$ 
            & word in the Kleene star of the other word, i.e., $(u,v)\in r_{*}$ iff $u\in\{v\}^*$. \\

        $r_{\varepsilon_=}$ 
            & common emptiness or common nonemptiness, i.e., $(u,v)\in r_{\varepsilon_=}$ iff $u = \varepsilon = v$ or $u,v\in\Sigma^+$. \\
        \bottomrule
    \end{tabularx}
    \caption{The set of relations $\mathcal{R}$ considered in this paper.}
    \label{tab:relations}
\end{table}

\section{The Equivalence Problem}
We first study equivalence, which in the erasing case already contains the long-standing open problem for $r_=$. In the non-erasing case, we characterize relations for which equivalence is determined by equality of the patterns and of the reflexive, symmetric, and transitive closures of the constraints, covering $r_{ab}$, $r_{com^+}$, and $r_=$. For $r_{rev}$, $r_{ssq}$, and $r_*$, we show that this characterization is insufficient.

In the erasing case, we identify $r_{\varepsilon_=}$ as another relation, besides $r_{|w|}$, with decidable equivalence, while for $r_{rev}$, $r_*$, and $r_{ssq}$, equivalence under $r_=$ polynomial-time many-one reduces to the corresponding equivalence problems.

\subsection{The Non-Erasing Case}

Given any relational pattern $(\alpha,r_\alpha)\in\relpat$, we denote by $[r_\alpha]$ the reflexive, symmetric, and transitive closure on the set of binary related variables in $r_\alpha$. For example, if $(r_=,x_1,x_2),(r_=,x_2,x_3)\in r_{\alpha}$, then we would have $(r_=,x_i,x_j)\in[r_\alpha]$, for any combination $i,j\in\{1,2,3\}$. For the remainder of this section, we assume w.l.o.g. a normal form on the patterns such that all variables are always introduced inside a pattern with increasing indices, starting from 1 (i.e. $x_1$, $x_2$, $x_3$, and so on). The first results gives us a strong condition for relations $r$ that are reflexive on all letters of the alphabet and do not relate two distinct letters.

\begin{lemma}\label{lemma:equivalence-length1-rel-property}
    Assume $|\Sigma|\geq 2$ and let $R = \{r\}$, for a relation $r$ with the following properties: For all $\ta\in\Sigma$, we have $(\ta,\ta)\in r$, and if $(\ta,\tb)\in r$, for two letters $\ta,\tb\in\Sigma$, then $\ta = \tb$. Let $(\alpha,r_\alpha),(\beta,r_\beta)\in\relpat$ be two patterns in normal form. If $L_{NE}(\alpha,r_\alpha) = L_{NE}(\beta,r_\beta)$ then $\alpha = \beta$ and $[r_\alpha] = [r_\beta]$.
\end{lemma}

\begin{proof}
    First, we see that $|\alpha| = |\beta|$ must hold. Suppose otherwise and assume w.l.o.g. that $|\alpha| < |\beta|$. Take some $r_\alpha$-valid substitution $h$ that substitutes each variable by just a single letter. Then $|h(\alpha)| = |\alpha|$. In the NE-case, there can't exist any $r_\beta$-valid substitution $h'$ with $|h'(\beta)| = |\alpha| < |\beta|$. A contradiction.

    Now suppose $\alpha \neq \beta$. As both are regular patterns in normal form, that means there exists a first position $i\in[|\alpha|]$ such that w.l.o.g. $\alpha[i] = \ta$, for some letter $\ta\in\Sigma$, and $\beta[i] = x_j$, for some variable $x_j\in X$. Now let $h$ be a $r_\beta$-valid substitution that substitutes each variable by just a single letter and, in particular, set $h(x_j) = \tb$, for some letter $\tb$ with $\ta\neq\tb$. As $|\alpha| = |\beta|$, any $r_\alpha$-valid substitution $h'$ with $h'(\alpha) = h(\beta)$ would have to match $h'(\alpha[k]) = h(\beta[k])$, for all $k\in[|\alpha|]$, and, in particular, would need to match $h'(\alpha[i]) = h'(\ta) = \ta \neq \tb = h(x_j) = h(\beta[i])$. A contradiction. That concludes $\alpha = \beta$.

    Now suppose $[r_\alpha] \neq [r_\beta]$. Then there exist two positions $i,j\in[|\alpha|]$ such that $\alpha[i] = \beta[i] = x_{i'}$, for some $i'\in[|\alpha|]$, and $\alpha[j] = \beta[j] = x_{j'}$, for some $j'\in[|\alpha|]$, and w.l.o.g. $(r,x_{i'},x_{j'})\in [r_\alpha]$ but $(r,x_{i'},x_{j'})\notin [r_\beta]$. Now choose a substitution $h$ that substitutes each variable just by a single letter such that $h(x_{i'}) = \ta$ and $h(x_{j'}) = \tb$, for two distinct letters $\ta,\tb\in\Sigma$. As $r$ forces all blocks of symmetrically and transitively connected variables to be substituted by the same letter, in the case of length 1 substitutions, we know that $h$ cannot be $r_\alpha$-valid. However, there exists such an $h$ that is $r_\beta$-valid by choosing substitutions of the other variables accordingly. Hence, there exists such an $h(\beta)\in L_{NE}(\beta,r_\beta)$, but we always have $h(\beta)\notin L_{NE}(\alpha,r_\alpha)$. A contradiction.
\end{proof}

This result already confirms a very specific structure for all relations with the above stated properties. In particular, this covers the relations $r_{=}$, $r_{ab}$, $r_{ssq}$, $r_{rev}$, $r_{com^+}/r_{com^*}$ (equivalent for non-erasing pattern languages), and $r_*$. This property, however, is not exhaustive to characterize equivalence. So, adding to the previous property, if we now consider equivalence relations in general, we obtain the following.

\begin{lemma}\label{lemma:equivalence-equivalencerel-langproperty}
    Let $R = \{r\}$ for an equivalence relation $r$. Let $(\alpha,r_\alpha)\in\relpat$ be some relational pattern in normal form. We have $L_{NE}(\alpha,r_\alpha) = L_{NE}(\alpha,[r_\alpha])$.
\end{lemma}
\begin{proof}
    ($\Rightarrow$): First, let $w\in L_{NE}(\alpha,r_\alpha)$. Then there exists some $r_\alpha$-valid substitution $h$ with $h(\alpha) = w$. Let $(r,x_i,x_j)\in r_\alpha$, for two variables $x_i,x_j\in\var(\alpha)$. Then $(h(x_i),h(x_j))\in r$ by definition. As $r$ is an equivalence relation, we also have $(h(x_j),h(x_i))\in r$, $(h(x_i),h(x_i))\in r$, and, if there exists some $(r,x_j,x_k)\in r_\alpha$ (hence $(h(x_j),h(x_k))\in r$), we also have $(h(x_i),h(x_k))\in r$. Hence, $h$ satisfies all constraints in $[r_\alpha]$ and thereby is $[r_\alpha]$-valid, resulting in $w\in L_{NE}(\alpha,[r_\alpha])$.

    ($\Leftarrow$): For the other direction, let $w\in L_{NE}(\alpha,[r_\alpha])$ and let $h$ be an $[r_\alpha]$-valid substitution such that $h(\alpha) = w$. Notice that $r_\alpha\subseteq[r_\alpha]$. Hence, all constraints in $r_\alpha$ are satisfied by $h$, resulting in $h$ being $r_\alpha$-valid. Thus, $w\in L_{NE}(\alpha,r_\alpha)$.
\end{proof}

Combining both previous results, we identify a set of relations for which equivalence of non-erasing languages can be characterized by equality of patterns and the corresponding closures.

\begin{theorem}\label{theorem:equivalence-subst1prop-equivrel-equality}
    Let $R = \{r\}$ for $r$ being an equivalence relation with the following properties: For all $\ta\in\Sigma$, we have $(\ta,\ta)\in r$, and if $(\ta,\tb)\in r$, for two letters $\ta,\tb\in\Sigma$, then $\ta = \tb$. Let $(\alpha,r_\alpha),(\beta,r_\beta)\in\relpat$ be two patterns in normal form. $L_{NE}(\alpha,r_\alpha) = L_{NE}(\beta,r_\beta)$ if and only if $\alpha = \beta$ and $[r_\alpha] = [r_\beta]$.
\end{theorem}
\begin{proof}
    We obtain the first direction immediately from \Cref{lemma:equivalence-length1-rel-property}. For the other direction, assume $\alpha = \beta$ and $[r_\alpha] = [r_\beta]$. By \Cref{lemma:equivalence-equivalencerel-langproperty} we know $L_{NE}(\alpha,r_\alpha) = L_{NE}(\alpha,[r_\alpha])$ and $L_{NE}(\beta,r_\beta) = L_{NE}(\beta,[r_\beta])$. As $\alpha = \beta$ and $[r_\alpha] = [r_\beta]$, we have $L_{NE}(\alpha,[r_\alpha]) = L_{NE}(\beta,[r_\beta])$. Thus, $L_{NE}(\alpha,r_\alpha) = L_{NE}(\beta,r_\beta)$.
\end{proof}

For the relations specifically considered in this paper we thus obtain the following.

\begin{corollary}[\Rightscissors \ref{proof:corollary:equivalence-ne-polydec-somerel}]\label{corollary:equivalence-ne-polydec-somerel}
    Let $R = \{r\}$ for some $r\in\{r_=, r_{ab}, r_{com^+}, r_{com^*}\}$. Given the relational patterns $(\alpha,r_\alpha),(\beta,r_\beta)\in\relpat$, we can check in polynomial time whether $L_{NE}(\alpha,r_\alpha) = L_{NE}(\beta,r_\beta)$.
\end{corollary}

\begin{remark}\label{remark:equivalence-nonerasing-commonempty-linear}
    Notice that the relation $r_{\varepsilon_=}$ has no effect in the non-erasing case. In particular, for any relational pattern $(\alpha,r_\alpha)\in\relpat$ with $R = \{r_{\varepsilon_=}\}$ we have $L_{NE}(\alpha,r_\alpha) = L_{NE}(\alpha)$ for the regular pattern $\alpha$. Hence, equivalence between two such patterns is decidable in linear time in this case~\cite{Angluin1980}.
\end{remark}

The other relations considered in \Cref{lemma:equivalence-length1-rel-property} that are not covered in \Cref{corollary:equivalence-ne-polydec-somerel}, namely $r_{ssq}$, $r_{rev}$, $r_{*}$, are clearly not equivalence relations. Hence, decidability is not immediately clear in these cases. Thus, they probably require a deeper investigation to answer decidability in general. We conjecture they are all decidable.

\subsection{The Erasing Case}
In this part, our primary focus lies on the relation $r_{\varepsilon_=}$. 
First, note that $r_{\varepsilon_=}$ is reflexive, symmetric, and transitive. We extend the notation of the prior section by defining equivalence classes $\mathcal{C}_\alpha = \{C_{\alpha,1},\dots, C_{\alpha,k_\alpha}\}$ of variables $C_{\alpha,i} \subseteq \var(\alpha)$, $i \in [k_\alpha]$ that are in relation with one another. 

In the following, we prove that the equivalence problem for erasing relational pattern languages is indeed decidable in this case. First, we present a general decidability procedure. After that, we identify multiple cases that even result in linear-time and polynomial-time algorithms to decide equivalence.

\begin{theorem}[\Rightscissors\ref{proof:theorem:equivalence-commonEmptiness-decidable-pspace}]\label{theorem:equivalence-commonEmptiness-decidable-pspace}
    Let $R = \{r_{\varepsilon_=}\}$ and $\Sigma$ be some alphabet. Let $(\alpha,r_\alpha),(\beta,r_\beta)\in\relpat$ be two relational patterns. Checking $L_{E}(\alpha,r_\alpha) = L_{E}(\beta,r_\beta)$ is decidable.
\end{theorem}
\begin{proof}[Proof Sketch.]
First, consider $(\alpha,r_\alpha)$ and for each subset $S \subseteq \mathcal{C}_\alpha$, replace the variables $v \in \bigcup_{C_{\alpha,i} \in S} C_{\alpha,i}$ by any word in $\Sigma^+$ and any other variable $v \notin \bigcup_{C_{\alpha,i} \in S} C_{\alpha,i}$ by $\varepsilon$. We call the resulting language $L_{E}(\alpha,r_\alpha,S)$. Using that, we obtain $L_{E}(\alpha,r_\alpha) = \bigcup_{S \subseteq \mathcal{C}_\alpha} L_{E}(\alpha,r_\alpha,S)$. 
Do the same for $\beta$. Note there exist at most $2^{k_\alpha} \leq 2^{|\alpha|}$ subsets $S$ for $\alpha$ and at most $2^{k_\beta}\leq 2^{|\beta|}$ subsets for $\beta$. For each $S \subseteq \mathcal{C}_\alpha$, we can take a sub-pattern $\alpha_S$ of $\alpha$ that only contains the variables that are not substituted by the empty word. As $\alpha_S$ is a regular pattern, it can be observed that $L_E(\alpha,r_\alpha,S) = L_{NE}(\alpha_S)$ and we can construct an NFA $A_{\alpha_S}$ with $L(A_{\alpha_S}) = L_{NE}(\alpha_S) = L_{NE}(\alpha,r_\alpha,S)$. Doing that for each such $S$, combining all these NFAs into one big NFA $A_\alpha$ for $L_E(\alpha,r_\alpha)$ and doing the same for $(\beta,r_\beta)$ in an NFA $A_{\beta}$, we can compare the languages $L_{E}(\alpha,r_\alpha)$ and $L_{E}(\beta,r_\beta)$ by comparing $L(A_\alpha) = L(A_\beta)$, concluding the result.
\end{proof}

Next, we introduce approaches with an improved running time for some pattern structures.

\begin{lemma}[\Rightscissors \ref{proof:lemma:empty-together-poly-equivalence}]\label{lemma:empty-together-poly-equivalence}
    Let $R = \{r_{\varepsilon_=}\}$ and $\Sigma$ be some alphabet. Let $(\alpha,r_\alpha),(\beta,r_\beta)\in\relpat$ be some relational patterns.
    \begin{enumerate}
        \item If $\alpha$ and $\beta$ are \emph{terminal-free}, we can check in linear time whether $L_{E}(\alpha,r_\alpha) = L_{E}(\beta,r_\beta)$.
        \item If $|\Sigma| = 1$, we can check in linear time whether $L_{E}(\alpha,r_\alpha) = L_{E}(\beta,r_\beta)$.
        \item If there exists $c \in \Sigma$ that is not a terminal in $\alpha$ and $\beta$, we can check in polynomial time whether $L_{E}(\alpha,r_\alpha) = L_{E}(\beta,r_\beta)$.
    \end{enumerate}
\end{lemma}
\begin{proof}[Proof Sketch.]
The key observation is that every non-empty equivalence class contributes at least one symbol, while otherwise variables are unrestricted. Hence, in 1. the accepted languages of $\alpha$ and $\beta$ are completely determined by the minimum size of an equivalence class (and, in 2., additionally by the number of terminals). These quantities can be computed by a single scan over the patterns.
In 3., the patterns are decomposed into terminal and variable blocks, and each equivalence class is represented by its occurrence vector across the variable blocks. By removing dominated classes, we obtain a minimal pattern in polynomial time. The two languages are equivalent if and only if these minimal patterns coincide.
\end{proof}

Finally, we consider some other relations in the context of the equivalence problem for erasing relational pattern languages. We can show that this problem must be at least as hard as in the case of $r_=$ for the relations $r_{*}$, $r_{ssq}$, and $r_{rev}$, in terms of time complexity. While the first two follow trivially by simulating $r_=$, the last one requires a more technical construction.

\begin{proposition}\label{proposition:equivalence-equalityhardrelations}
    Let $R = \{r\}$, $r\in\{r_{ssq},r_*,r_{rev}\}$. and $(\alpha,r_\alpha),(\beta,r_\beta)\in\relpat$. The  equivalence problem for erasing pattern languages polynomial-time many-one reduces to the equivalence problem for erasing relational pattern languages over $R$.
\end{proposition}
\begin{proof}
We reduce from the case $R={r_=}$. Let $(\alpha,r_\alpha)$ be a relational pattern over equality. For $r\in\{r_*,r_{ssq}\}$, replace each constraint $(r_=,x_i,x_j)\in r_\alpha$ by $(r,x_i,x_j)$ and $(r,x_j,x_i)$. This preserves valid substitutions, since equality is equivalent both to being mutual subsequences and to being mutual powers. Applying this change to both input patterns, respectively, gives the reductions for $r_*$ and $r_{ssq}$.

For $r=r_{rev}$, introduce for every $x_i\in\var(\alpha)$ a fresh variable $x_i'$. Let $\alpha^{\leftarrow}$ be the reversal of $\alpha$ with each $x_i$ replaced by $x_i'$, and set $\alpha^{\leftrightarrow}:=\alpha\alpha^{\leftarrow}$. Define $r_{\alpha^{\leftrightarrow}}$ by adding $(r_{rev},x_i,x_i')$ for all $x_i\in\var(\alpha)$ and $(r_{rev},x_i,x_j'),(r_{rev},x_j,x_i')$ for every $(r_=,x_i,x_j)\in r_\alpha$. Then
\[ L_E(\alpha^{\leftrightarrow},r_{\alpha^{\leftrightarrow}}) = \{ww^R\mid w\in L_E(\alpha,r_\alpha)\}. \]
Indeed, the constraints $(r_{rev},x_i,x_i')$ force $x_i'$ to be substituted by the reversal of $x_i$, while each constraint $(r_{rev},x_i,x_j')$ enforces $h(x_i)=h(x_j)$ using $h(x_j')=h(x_j)^R$. Conversely, every $r_\alpha$-valid substitution can be extended by setting $h(x_i'):=h(x_i)^R$. Thus, the constructed language is exactly the image of $L_E(\alpha,r_\alpha)$ under the injective map $f(w)=ww^R$. Applying the construction to both input patterns therefore preserves language equality. The construction is polynomial-time computable.
\end{proof}

This concludes our examination of the equivalence problem. For the non-erasing cases, we either obtain efficient decidability results or structural restrictions that may support future decidability proofs. For the erasing cases, however, most questions remain open, indicating that even for relatively weak structure-enforcing relations, equivalence is far from trivial.

\section{The Inclusion Problem}
In this section, we continue with the inclusion problem for relational pattern languages. We recall that the inclusion problem for non-erasing as well as erasing pattern languages over $r_=$ is generally undecidable for all alphabets $\Sigma$ of size $|\Sigma| \geq 2$, as shown and refined in~\cite{Jiang1995,Freydenberger2010,BREMER201215, Bremer2025Melting}. As we will see in this section, for several of the considered relations in $\mathcal{R}$, we also obtain that this problem is undecidable in the non-erasing or erasing cases. In addition to that, we also identify decidable cases.

First, consider the relations $r_{ssq}$ and $r_*$. As we observed before, we can simulate the exact behavior of $r_=$ (e.g., see notes in the beginning of the proof of Proposition~\ref{proposition:equivalence-equalityhardrelations}). 
By the fact that the inclusion problem is undecidable for $r_=$ in both the erasing and non-erasing cases, we obtain the following.

\begin{corollary}\label{thm:eqrels}\cite{Jiang1995,Freydenberger2010,BREMER201215}
    Let $\Sigma$ be any finite alphabet with $|\Sigma|\geq 2$. Let $r\in\{r_=,r_{ssq},r_*\}$ and let $R = \{r\}$. Given $(\alpha,r_\alpha),(\beta,r_\beta)\in\relpat$, it is undecidable to answer whether $L_\mathcal{X}(\alpha,r_\alpha)\subseteq L_\mathcal{X}(\beta,r_\beta)$, for $\mathcal{X}\in\{E,NE\}$.
\end{corollary}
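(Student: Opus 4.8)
The plan is to reduce the inclusion problem for classical ($r_=$) non-erasing and erasing pattern languages — known to be undecidable for every alphabet of size at least two by \cite{Jiang1995,Freydenberger2010,BREMER201215} — to the inclusion problem for relational pattern languages over $\tilde{r}\in\{r_=,r_{ssq},r_*\}$. The key observation, already flagged in the excerpt, is that for each of these three relations, relating two variables $x,y$ in \emph{both} directions collapses to forced equality: if $(x,y),(y,x)\in r_\alpha$, then a substitution $h$ is $r_\alpha$-valid if and only if $h(x)=h(y)$. For $r_=$ this is immediate. For $r_{ssq}$ it holds because $h(x)\in\SubSeq(h(y))$ and $h(y)\in\SubSeq(h(x))$ together force $|h(x)|=|h(y)|$ and hence $h(x)=h(y)$ (a word that is a subsequence of another word of the same length must equal it). For $r_*$, $h(x)\in\{h(y)\}^*$ and $h(y)\in\{h(x)\}^*$ likewise force equal lengths — handling the degenerate case where one of them is $\varepsilon$ (in which case both are) — and then equality.

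Next I would make the reduction explicit. Given a pair of classical patterns $\gamma,\delta\in Pat_\Sigma$, rename variables so that repeated occurrences are replaced by fresh distinct variables: produce a regular pattern $\gamma'$ (every variable occurs once) together with a set of pairs forcing the fresh copies of each original variable to coincide, and similarly $\delta'$ with its own pairing set; concretely, if $x$ occurs $k$ times in $\gamma$ we introduce $x^{(1)},\dots,x^{(k)}$ and add the pairs $(x^{(i)},x^{(i+1)})$ and $(x^{(i+1)},x^{(i)})$ for $i\in[k-1]$ to $r_\gamma$. Let $r_\gamma,r_\delta$ be the resulting relational constraints, built only from the relation $\tilde{r}$. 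Then $(\gamma',r_\gamma)\in\relpat$ and $(\delta',r_\delta)\in\relpat$, and by the collapse observation above, $L_{\mathcal X}(\gamma',r_\gamma)=L_{\mathcal X}(\gamma)$ and $L_{\mathcal X}(\delta',r_\delta)=L_{\mathcal X}(\delta)$ for $\mathcal X\in\{E,NE\}$, since the $r_\gamma$-valid substitutions of $\gamma'$ are exactly the substitutions of $\gamma'$ that give equal values to all fresh copies of a common original variable, which correspond bijectively (via $h(\gamma)=h'(\gamma')$) to arbitrary substitutions of $\gamma$. Hence $L_{\mathcal X}(\gamma)\subseteq L_{\mathcal X}(\delta)$ if and only if $L_{\mathcal X}(\gamma',r_\gamma)\subseteq L_{\mathcal X}(\delta',r_\delta)$, and the reduction is clearly computable, so undecidability of the classical inclusion problem transfers.

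The main (and really the only) subtlety is verifying the collapse lemma for $r_{ssq}$ and $r_*$ cleanly, including the erasing case where variables may be sent to $\varepsilon$: one must check that mutual subsequence (resp. mutual star membership) genuinely implies equality even when the common value is the empty word, and that nothing breaks when $\tilde{r}$ is used in both $r_\gamma$ and $r_\delta$ simultaneously. Everything else is bookkeeping about the variable-renaming morphism. Since this is routine, the formal details are deferred to Appendix~\ref{appendix:proof-inclusion-undec-basic}, and the statement follows.
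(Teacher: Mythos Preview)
Your proposal is correct and follows essentially the same approach as the paper: reduce classical pattern-language inclusion to relational inclusion by replacing repeated occurrences of each variable with fresh distinct variables and then forcing equality via bidirectional constraints under $\tilde r$, using the collapse observation that mutual $r_{ssq}$- or $r_*$-relatedness implies equality. The only cosmetic difference is that the paper relates each fresh copy directly to the original variable (a star-shaped constraint set) rather than chaining consecutive copies as you do; both schemes enforce the same equalities, and your handling of the $\varepsilon$ case for $r_*$ is slightly more explicit than the paper's.
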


Now, we continue with the analysis for relations for which we cannot immediately infer the undecidability from such trivial relationships.

\subsection{The Non-Erasing Cases}
In the non-erasing case, first notice that, similar to the equivalence problem, in the case of $r_{\varepsilon_=}$, we compare two regular patterns (as $\varepsilon$ substitutions are excluded anyways). Here, inclusion is known to be decidable in polynomial time~\cite{Shinohara1982,Terada2005}. For the relations $r_{ab}$ and $r_{rev}$ we observe something interesting. Reconstructing the reduction from \cite{BREMER201215} in the setting of relational pattern languages, the inclusion problem 
for NE relational pattern languages under $r_{ab}$ or $r_{rev}$ is indeed also undecidable, even though $r_{ab}$ generally implies a significantly less restrictive behavior than $r_=$, and $r_{rev}$ implies a generally distinct structure.

\begin{theorem}[\Rightscissors \ref{appendix:proof-inclusion-nonerasing-abel-undec}]\label{prop:inclusion_non-erasing_abel-undec}
    Let $R := \{r\}$ for $r\in\{r_{ab},r_{rev}\}$ and $|\Sigma| \geq 2$ for some alphabet $\Sigma$. For relational patterns for $(\alpha,r_\alpha),(\beta,r_\beta)\in\relpat$, it is undecidable whether $L_{NE}(\alpha,r_\alpha) \subseteq L_{NE}(\beta,r_{\beta})$.
\end{theorem}
\begin{proof}[Proof Sketch.]
    We adapt the construction of Bremer and Freydenberger~\cite{BREMER201215} for the undecidability of inclusion for non-erasing pattern languages over a binary alphabet. The key observation is that, on the words used in their construction, the relations $r_{ab}$ and $r_{rev}$ can simulate equality whenever needed. If two related variables are forced to be substituted by single letters, then both relations force these letters to be equal. And, if one of two related variables is substituted by a unary word, then the other one must be substituted by the same unary word as well. Thus, all equality tests used in the original construction can be reproduced by $r_{ab}$ and $r_{rev}$ on the relevant parts of the pattern.

    We first consider a binary alphabet $\{0,\#\}$ and write $\bar{r}\in\{r_{ab},r_{rev}\}$. Let $U$ be the universal Turing machines used in~\cite{BREMER201215}. We construct relational patterns $(\alpha,r_\alpha)$ and $(\beta,r_{\beta})$ such that
    \[ L_{NE}(\alpha,r_\alpha) \subseteq L_{NE}(\beta,r_\beta) \]
    if and only if $U$ has no accepting computation. Hence, the undecidable emptiness problem for universal Turing machines reduces to the inclusion problem for non-erasing relational pattern languages under $\bar{r}$.

    The pattern $\beta$ is the relational-pattern analogue of the pattern from ~\cite{BREMER201215}. It has the form
    \[ \beta := a_1b_1\ \#^5\ a_2x_{1,1}x_{2,1}\cdots x_{\mu,1}b_2\ \#^5\ r_1\hat{\beta}_1r_2\hat{\beta}_2\cdots r_\mu\hat{\beta}_\mu r_{\mu+1} \]
    where, $a_j,b_j,r_k,x_{\ell,1}$ are variables, for $j\in[2]$, $k\in[\mu+1]$, $\ell\in[\mu]$, and for each $i\in[\mu]$,
    \[ \hat{\beta}_i := 0x_{i,2}\cdots x_{i,5}0\ \upgamma_i\ 0x_{i,6}\cdots x_{i,9}0\ \updelta_i\ 0x_{i,10}\cdots x_{i,13}0. \]
    for variables $x_{i,j}$, $j\in[2\cdots 13]$, and patterns $\upgamma_i,\updelta_i$. For now, the relevant constraints in $r_{\beta}$ are
    \[ (\bar r,a_1,a_2),\quad (\bar r,b_1,b_2),\quad (\bar r,x_{i,j},x_{i,j'}) \quad\text{for all } i\in[\mu]\text{ and }j,j'\in[13]. \]
    The variables $r_1,\ldots,r_{\mu+1}$ are unconstrained. The patterns $\upgamma_i$ and $\updelta_i$ are the components from ~\cite{BREMER201215} encoding the possible reasons why a string is not an accepting computation encoding. In particular, $\upgamma_i$ and $\updelta_i$ contain no factor $\#^4$. Moreover, for each $i\in[\mu]$, all variables occurring in $\upgamma_i\updelta_i$ are not related to any other variable occurring in some other  $\upgamma_j\updelta_j$ for $i\neq j$ and $i,j\in[\mu]$, nor to any variable $x_{k,\ell}$, for $k\in[\mu]$ and $j\in[13]$, as well as the variables $a_j$ and $b_j$.

    The pattern $\alpha$ is defined by
    \[ \alpha := 0^{\mu+1}\ \#^5\ 0^\mu\#0^\mu\ \#^5\ tv0\alpha_10v0\alpha_20vt \]
    where $v = 0\#^4 0$, and $\alpha_1$, $\alpha_2$ are relational pattern analogues to the ones used in ~\cite{BREMER201215}. In particular, $\alpha_1$ contains no factor $\#^3$, while $\alpha_2$ contains no occurrence of $\#$ at all. We have $r_\alpha = \emptyset$. Finally, $t$ is obtained from the suffix $r_1\hat{\beta}_1\cdots r_\mu\hat{\beta}_\mu r_{\mu+1}$ by replacing every variable by $0$.

    It remains to explain why this construction still works with $\bar{r}$ instead of equality. The factors $\#^5$ force any matching of a word generated by $\alpha$ against $\beta$ to match each of the three by $\#^5$ divided parts against each other. In particular, the middle block $0^\mu\# 0^\mu$ has to match
    \[ a_2x_{1,1}x_{2,1}\cdots x_{\mu,1}b_2. \]
    Since $a_2$ and $b_2$ are related to variables that match unary words in the first block, they are forced to behave as in the equality construction. Consequently, exactly one variable $x_{i,1}$ is substituted by $\#$, while all other variables $x_{k,1}$ are substituted by $0$. Because all variables $x_{i,j}$, $j\in[13]$, are mutually related, the selected component $\hat{\beta}_i$ receives the marker $\#$ in all $x_{i,j}$ positions, while all other components $\hat{\beta}_k$, $k\neq i$, receive a $0$ in all $x_{k,j}$ positions. Thus, the construction selects exactly one component $\hat{\beta}_i$ to match against $v0\alpha_10v0\alpha_20v$ in $\alpha$.

    This selected component checks one possible error regarding encodings of accepting computations of $U$. More precisely, the matching forces 
    \[ h(0\alpha_10)=h'(\upgamma_i) \quad\text{and}\quad h(0\alpha_20)=h'(\updelta_i) \]
    for some $r_\beta$-valid substitution $h'$. Hence, a word generated by $\alpha$ is covered by $\beta$ exactly when the encoded computation violates at least one of the conditions of accepting computations, encoded in some component $\hat{\beta}_i$. Hence, an accepting computation encoding satisfies none of these conditions, therefore resulting in a word in $L_{NE}(\alpha,r_\alpha)$ that is not in $L_{NE}(\beta,r_\beta)$.

    The next difference from the equality-based construction is the use of related variables inside the components $\upgamma_i$ and $\updelta_i$. These variables are used either to propagate selected single letters such as $0$ and $\#$, or to compare unary words that encode counter values. Both cases are handled by $r_{ab}$ and $r_{rev}$ analogously to $r_=$. Thus, each component $\hat{\beta}_i$ can be constructed analogously as in ~\cite{BREMER201215}. Therefore
    \[ L_{NE}(\alpha,r_\alpha) \subseteq L_{NE}(\beta,r_\beta) \]
    holds if and only if $U$ has no accepting computation. This proves undecidability over the binary alphabet. For larger alphabets, the extension from ~\cite{BREMER201215} also applies. Additional components $\hat{\beta}_i$ can be introduced to handle the occurrence of any other letter than $0$ or $\#$, making a selection of the first $\mu$ components only necessary if these letters do not occur in a substitution of $\alpha$. The full formal verification of the adapted construction, due to its extensive length, is given in Appendix \ref{appendix:proof-inclusion-nonerasing-abel-undec}.
\end{proof}

\subsection{The Erasing Cases}
\label{subsec:inc-era}
We continue with erasing relational pattern languages. Similar to the non-erasing cases, we show that the inclusion problem is decidable under $r_{\varepsilon_=}$ and undecidable under the reversal relation $r_{rev}$ as well as under abelian equivalence $r_{ab}$. Indeed, in the case of $r_{\varepsilon_=}$, we can use the exact same approach as for the equivalence problem from Theorem~\ref{theorem:equivalence-commonEmptiness-decidable-pspace} to obtain decidability using the construction of two NFAs that represent $L_{E}(\alpha,r_\alpha)$ and $L_{E}(\beta,r_\beta)$ respectively.

\begin{theorem}
    Let $R = \{r_{\varepsilon_=}\}$. Given the relational patterns $(\alpha,r_\alpha),(\beta,r_\beta)\in\relpat$, checking $L_{E}(\alpha,r_\alpha) \subseteq L_{E}(\beta,r_\beta)$ is decidable.
\end{theorem}

Again, for some pattern structures, we introduce some approaches with improved polynomial running time.

\begin{lemma}[\Rightscissors \ref{proof:lemma:empty-together-poly-inclusion}]\label{lemma:empty-together-poly-inclusion}
    Let $R = \{r_{\varepsilon_=}\}$. Given some alphabet $\Sigma$ and relational patterns $(\alpha,r_\alpha),(\beta,r_\beta)\in\relpat$.
    \begin{enumerate}
        \item If $\alpha$ and $\beta$ are \emph{terminal-free}, we can check in linear time whether $L_{E}(\alpha,r_\alpha) \subseteq L_{E}(\beta,r_\beta)$.
        \item If $|\Sigma| = 1$, we can check in linear time whether $L_{E}(\alpha,r_\alpha) \subseteq L_{E}(\beta,r_\beta)$.
    \end{enumerate}
\end{lemma}
\begin{proof}[Proof Sketch.]
    By the same arguments as in \Cref{lemma:empty-together-poly-equivalence}, inclusion in the terminal-free case holds if and only if the minimum class size of $\alpha$ is at least that of $\beta$.
    For unary alphabets, the language is additionally determined by the number of terminals. Hence, inclusion holds if and only if $\alpha$ has at least as many terminals as $\beta$ and its minimum equivalence class size is at least that of $\beta$. In both cases, we can check the properties in linear time.
\end{proof}

For the undecidability proofs, we adapt the framework of Freydenberger and Reidenbach~\cite{Freydenberger2010}, which shows undecidability in the erasing case under $r_=$ for all alphabets $\Sigma$ with $|\Sigma|\geq 2$. The required adaptations for $r_{ab}$ and $r_{rev}$ rely on the same principle as in Theorem~\ref{prop:inclusion_non-erasing_abel-undec} (related single letters and unary words enforce equality). We therefore give only the main ideas for \Cref{thm:uirev} below. The full proof, due to its extensive length, is provided in Appendix~\ref{appendix:proof-inclusion-erasing-rev} for verification reasons.

\begin{theorem}[\Rightscissors \ref{appendix:proof-inclusion-erasing-rev}]\label{thm:uirev}
    Let $\bar{r} \in \{r_{rev},r_{ab}\}$ and let $R = \{\bar{r}\}$. For all alphabets $\Sigma$ with $|\Sigma|\geq 2$, given $(\alpha,r_{\alpha}),(\beta,r_{\beta}) \in \relpat$, it is generally undecidable whether $L_E(\alpha,r_\alpha) \subseteq L_E(\beta,r_{\beta})$.
\end{theorem}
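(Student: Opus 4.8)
The plan is to reconstruct the undecidability proof of Freydenberger and Reidenbach~\cite{Freydenberger2010} for the erasing inclusion problem over $r_=$ inside the framework of relational patterns, replacing each use of a repeated variable by a pair of variables related by $r_{rev}$ in both directions, and then repairing the few places where this substitution is not transparent. The starting point is the reduction from a problem with undecidable positive instances (there, computations of a suitable machine; here we may reuse the universal Turing machine $U$ or the $2$-counter machine, whichever the cited construction uses) to the question $L_E(\alpha,r_\alpha)\subseteq L_E(\beta,r_\beta)$. As observed already before Corollary~\ref{thm:eqrels}, for any two variables $x,y$ with $(x,y),(y,x)\in r_\alpha$ under $\tilde r\in\{r_=,r_{ssq},r_*\}$ one forces $h(x)=h(y)$; for $r_{rev}$, relating $x$ and $y$ in both directions forces $h(x)=h(y)^R=h(x)^{R}{}^{R}$, which only forces $h(x)=h(y)^R$, \emph{not} $h(x)=h(y)$ in general. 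Hence the first step is to isolate exactly which repeated variables in the original construction are substituted only by palindromes or, more usefully, only by unary words or single letters --- for those, $w=w^R$, so relating the two copies by $r_{rev}$ reproduces equality faithfully. This is the same phenomenon exploited for Corollary~\ref{cor:inclusion_non-erasing_rev-undec}, and the bulk of the encoding (counter values, tape contents encoded in unary blocks, separator letters) should fall into this category by design.

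Next I would handle the variables that genuinely carry binary content and are therefore \emph{not} invariant under reversal. The key idea is that a repeated variable $x\cdots x$ enforcing $h(x)=h(x)$ can be replaced by a gadget $x\, x'$ with $(x,x'),(x',x)\in r_\beta$ (so $h(x)=h(x')^R$), used \emph{twice more} in a mirrored position so that the word fragment on the $\alpha$ side that must be matched is itself a palindrome-structured block $u\,\#\,u^R$; then the reversal relation lines up the two halves correctly and the effect is again an equality constraint, now between $h(x)$ and the reversal of a fragment of the candidate word. Concretely, wherever the original proof writes a factor of the form $w_i \cdots w_i$ carrying machine data, I rewrite it as $w_i \cdots \overline{w_i}$ with the two variables reversal-related and simultaneously double every relevant terminal factor $v$ appearing opposite it to $v\,v^R$ (or insert a fresh mirrored copy), so that the only $r_\beta$-valid substitutions are those that would have been valid equality-substitutions in the original. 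The membership direction ($L_E(\alpha,r_\alpha)\subseteq L_E(\beta,r_\beta)$ iff the machine has no valid computation) then carries over verbatim: a valid computation yields, exactly as in~\cite{Freydenberger2010}, a word in $L_E(\alpha,r_\alpha)\setminus L_E(\beta,r_\beta)$, and conversely any word in $L_E(\alpha,r_\alpha)$ is forced by the reversal constraints into a shape that $(\beta,r_\beta)$ can generate unless the computation data it encodes is inconsistent.

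The main obstacle, and where most of the work lies, is the one place in the Freydenberger--Reidenbach construction that essentially uses a variable matching \emph{arbitrary} binary content against \emph{itself} without any mirrored partner available --- typically the block that copies an entire configuration or the block implementing the ``cheating'' detection where $\beta$ must be able to absorb any word that is not a faithful encoding. Under $r_{rev}$ such a self-copy becomes a self-reversal, i.e.\ a palindrome constraint, which is strictly weaker; the fix is to thread \emph{two} fresh reversal-related variables through the construction so that the composite constraint $h(y_1)=h(y_2)^R$ and $h(y_2)=h(y_3)^R$ chains into $h(y_1)=h(y_3)$ across a correctly mirrored terminal scaffold, at the cost of roughly doubling the length of the affected factors and carefully checking that the doubling does not create spurious valid substitutions (e.g.\ that a substitution could not ``split the difference'' across the two halves). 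I expect the erasing case to be more delicate here than the non-erasing Corollary~\ref{cor:inclusion_non-erasing_rev-undec} precisely because erasing substitutions can collapse the scaffolding: one must verify that setting an auxiliary variable to $\varepsilon$ does not destroy the mirror alignment, which forces a few extra non-empty anchor letters into the patterns. Once this single binary-content gadget is rebuilt, the remaining correctness argument is a routine but lengthy transcription of~\cite{Freydenberger2010}, which is deferred to Appendix~\ref{appendix:proof-inclusion-erasing-rev}.
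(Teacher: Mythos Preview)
Your overall shape is right --- adapt the Freydenberger--Reidenbach erasing-inclusion reduction and show that every relational constraint can be realised via $r_{rev}$ --- but you have misidentified where the real work is, and the ``main obstacle'' you describe does not actually occur.

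In the FR construction (and the paper's adaptation) there is \emph{no} repeated variable that carries arbitrary non-palindromic binary content. The only places where related variables appear are: the selection variables $x_i,x_{i'},x_{i_1},\ldots$, which are all substituted by the marker $v=0\#^30$ (a palindrome); the parameter variables $\hat{x}_{i,j}$ inside simple predicates, which take only unary values in $0^*$; and the variables encoding the terminals $0$ and $\#$ inside the terminal-free predicate patterns, which are single letters. In every one of these cases $w=w^R$, so a single reversal constraint already enforces equality and your mirroring/chaining gadgets $h(y_1)=h(y_2)^R$, $h(y_2)=h(y_3)^R$ are never needed. The variables $x$ and $y$ of $\alpha_A$ that \emph{do} carry arbitrary content occur exactly once and are unrelated, so there is nothing to repair there either.

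What you are glossing over as ``should fall into this category by design'' is precisely the non-trivial modification. In the original FR proof under $r_=$, a single variable $z$ can be reused across $\gamma_i,\delta_i,\eta_i$ and forced to equal a specific terminal; under $r_{rev}$ you can only force $h(z')=h(z)^R$, so you need a mechanism that pins certain variables to \emph{single letters} in the first place. The paper does this by redefining the target word $u$ as $0\#^50$ and setting $\eta_i := z_i\,\hat z_{i_1}\hat z_{i_2}\hat z_{i_3}\hat z_{i_4}\hat z_{i_5}\,z_{i'}$ with $(z_i,z_{i'})$ and all $(\hat z_{i_1},\hat z_{i_j})$ in $R$; a short case analysis (the table in the proof of Theorem~\ref{thm:uirab} is the abelian analogue) shows the only reversal-valid match with $u$ forces $z_i=z_{i'}=0$ and each $\hat z_{i_j}=\#$. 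Those single-letter variables can then be propagated via further reversal constraints to play the role of the terminals inside the terminal-free predicate patterns. Together with the choice of the palindromic marker $v$ and the longer delimiter $\#^6$, this is what makes the whole construction go through without any of the doubling $v\,v^R$ or palindrome-scaffolding you propose. Your plan is not wrong, but it spends its effort on an obstacle that is not there while treating the actual crux as automatic.
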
 
\begin{proof}[Proof Sketch.]
    Let $A$ be a nondeterministic 2-counter automaton (see Appendix~\ref{appendix:def-n2cmwi} for a definition - for this sketch we can work with $A$ abstractly and just acknowledge that the emptiness problem for nondeterministic 2-counter automata is undecidable). Consider, first, the following construction under the equality relation. Let
    \[\alpha_A := v\ v\ \#^6\ v\ x\ v\ y\ v\ \#^6\ v\ u\ v\]
    where $x,y$ are distinct variables, $v = 0\#^30$ and $u = 0\#^50$ and 
    \[\beta_A:=x_1x_{1'}\ \ldots x_{\mu}x_{\mu'}\#^6\hat{\beta}_1\ldots\hat{\beta}_{\mu}\#^6\ddot{\beta_1}\ldots\ddot{\beta_{\mu}}\]
    with, for all $i \in \{1, \ldots, \mu\}$, and $\ddot{\beta_i} := x_{i_4}\ \eta_i\ x_{i_5}$, where $x_i$, $x_{i'}$, $x_{i_4}$, $x_{i_5}$, $\ldots$ are distinct variables with $(r_=,x_i, x_{i'})$, $(r_=,x_i, x_{i_1})$, $(r_=,x_i, x_{i_2})$, $(r_=,x_i,x_{i_3})$, $(r_=,x_i,x_{i_4})$, $(r_=,x_i,x_{i_5}) \in r_{\beta_A}$ and all $\eta_i \in X^*$ are terminal-free patterns.
    We have
    $\eta_i := z_{i}\ \hat{z}_{i_1}\ \hat{z}_{i_2}\ \hat{z}_{i_3}\ \hat{z}_{i_4}\ \hat{z}_{i_5}\ z_{i'}$ with distinct variables $z_i,z_{i'},\hat{z}_{i_1},\hat{z}_{i_2},\hat{z}_{i_3},\hat{z}_{i_4},\hat{z}_{i_5}$ and $(r_=,z_i,z_{i'})$, $(r_=,\hat{z}_{i_1},\hat{z}_{i_2})$, $(r_=,\hat{z}_{i_1},\hat{z}_{i_3})$, $(r_=,\hat{z}_{i_1},\hat{z}_{i_4})$, $(r_=,\hat{z}_{i_1},\hat{z}_{i_5}) \in r_{\beta_A}$.
    
    The main concept of the proof of \Cref{thm:uirev} is to show that $L_E(\alpha_A,r_{\alpha_A}) \subseteq L_E(\beta_A,r_ {\beta_A})$ iff $A$ has no accepting computation. All possible properties of non accepting computations are listed in the $\hat{\beta}_1 \ldots \hat{\beta}_{\mu}$ part where each $\hat{\beta}_i, i \in \{1, \ldots, \mu \}$ encodes property of a non accepting computation together with selecting variables $x_{i_j}$ for $j \in {1,2,3}$. The part of $\beta_A$ and $\alpha_A$ before the first $\#^6$ ensures that exactly one pair $x_i,x_i', i \in \{1, \ldots, \mu\}$ is selected, resulting in that exactly one non accepting computation is selected and that exactly one $\ddot{\beta}_i, i \in \{1, \ldots, \mu\}$ is selected. With this, we get that $\eta_i, i \in \{1, \ldots, \mu\}$ matches $u$ and thus, that $z_i = 0$ and $\hat{z}_{i_1} = \#$. We use these two variables to encode the $0$ and the $\#$ in the (non accepting) computation encodings. If we can show that this selecting process also works in the setting $R = \bar{r}$ for $\bar{r} \in \{r_{rev},r_{ab}\}$, i.e., we can construct two variables with which we can encode the letters $0$ and $\#$, and we can model an exact match for two unary words to match counter value encodings of configuration of $A$ inside some $\hat{\beta}_i$, then this proof also works in this setting. For two variables $x,y$, we have if $x = 0$ (or $x = \#$) and $(x,y) \in \bar{r}$ for $\bar{r} \in \{r_{rev},r_{ab}\}$, then $y = 0$ (or $y = \#$ respectively). Thus, we can model with the relation $r_{rev}$ as well as with the relation $r_{ab}$ identical variables if we know a one letter substitution of one variable. Moreover, we get for two variables $x,y$ with $(x,y) \in \bar{r}$ for $\bar{r} \in \{r_{rev},r_{ab}\}$ that $x=y$ must hold if $x$ or $y$ gets substituted by a unary word.

    If we want $L_E(\alpha_A,r_{\alpha_A}) \subseteq L_E(\beta_A,r_{\beta_A})$, we need to match the factor $v\ v$ to $x_1x_{1'} \ldots x_{\mu}x_{\mu'}$ and $v\ u\ v$ to $\ddot{\beta}_1 \ldots \ddot{\beta}_{\mu}$ due to our delimiter $\#^6$. We start with the match of $v\ v$ to $x_1x_{1'} \ldots x_{\mu}x_{\mu'}$. We have $v\ v = 0\ \#^3 0\ 0 \#^3 0$. Since the variables are pairwise in relation, we can always assume that we start substituting with the left variable. If a variable gets substituted to something different from $\varepsilon$, then its corresponding variable also gets substituted to something different from $\varepsilon$. For $r_{rev}$, we get the following reasonable possibilities for the beginning of the match:
    \begin{table}[h!]
        \centering
            \begin{tabular}{l|l|l}
                $x_i$ &  $x_{i'}$ & Possible prefix for $v\ v = 0\#^300\#^30$? \\
                \hline
                 $0$ & $0$ & $00 \neq 0\#$ \\
                 $0\#$ & $\# 0$ & $\ 0\#^20 \neq 0\#^3$ \\
                 $0\#^2$ & $\#^20$ & $0\#^4 0 \neq 0\#^300$ \\
                 $0\#^3$ & $\#^30$ &  $\ 0\#^60 \neq 0\#^300\#^2$ \\
                 $0\#^30$ & $0\#^3 0$ & $0 \#^300\#^30 = v\ v$
            \end{tabular}
        \label{tab:posvrev}
    \end{table}
    
    \noindent For $r_{ab}$, we get the following reasonable possibilities for the beginning of the match:
    \begin{table}[h!]
        \centering
            \begin{tabular}{l|l|l}
                $x_i$ &  $x_{i'}$ & Possible prefix for $v\ v = 0\#^300\#^30$? \\
                \hline
                 $0$ & $0$ & $00 \neq 0\#$ \\
                 $0\#$ & $0\#$ or $\# 0$ & $0\#0\#,\ 0\#^20 \neq 0\#^3$ \\
                 $0\#^2$ & $0\#^2$ or $\#0\#$ or $\#\#0$ & $0\#^20\#^2,\ 0\#^30\#,\ 0\#^4 0 \neq 0\#^300$ \\
                 $0\#^3$ & $0\#^3$ or $\#0\#^2$ or $\#^20\#$ or $\#^30$ &  $0\#^30\#^3,\ 0\#^3\#0\#^2,\ 0\#^50\#,\ 0\#^60 \neq 0\#^300\#^2$ \\
                 $0\#^30$ & $0\#^3 0$ & $0 \#^300\#^30 = v\ v$
            \end{tabular}
        \label{tab:posv}
    \end{table}
    
    Thus, the only possibility is that for exactly one $i \in \{1, \ldots, \mu\}$ the variables $x_i$ and $x_{i'}$ get substituted with $v$ and all other variables $x_j,x_{j'}$ for $j \neq i, j \in \{1, \ldots, \mu\}$ get substituted with $\varepsilon$. This together with the matching from $v\ u\ v$ to $\ddot{\beta}_1 \ldots \ddot{\beta}_{\mu}$ implies that $\eta_i$ needs to match $u$. For this matching between $\eta_i = z_{i}\ \hat{z}_{i_1}\ \hat{z}_{i_2}\ \hat{z}_{i_3}\ \hat{z}_{i_4}\ \hat{z}_{i_5}\ z_{i'}$ and $u = 0\#^50$, we get the following possibilities for both relations, $r_{ab}$ and $r_{rev}$ (recall that $(\bar{r},z_i,z_{i'}),(\bar{r},\hat{z}_{i_1},\hat{z}_{i_2}),(\bar{r},\hat{z}_{i_1},\hat{z}_{i_3}),(\bar{r},\hat{z}_{i_1},\hat{z}_{i_4}),(\bar{r},\hat{z}_{i_1},\hat{z}_{i_5}) \in r_{\beta_A}$ in this setting):
    
    \begin{table}[ht]
        \centering
        \begin{tabular}{l|l|l}
           $z_i$  & $\hat{z}_{i_1}$ & Possible substitution for $u = 0\#^50$? \\ \hline
            $\varepsilon$ & $0$ or $\#$ & $0^5,\ \#^5 \neq 0\#^50$ \\
            $0$ & $\#$ & $0\#^50 = 0\#^5 0 = u$ \\
            $0\#$ & $\#$ & $0\#\#^5\#0 \neq 0 \#^50$ \\
            $0\#\#$ & $\#$ & $0\#^2\#^5\#^20 \neq 0\#^50$ \\
            $0\#\#\#$ & $\varepsilon$ & $0\#^60 \neq 0\#^50$
        \end{tabular}
        \label{tab:posu}
    \end{table}

    Thus, the only possibility is that $z_i$ and $z_{i'}$ get substituted with $0$ and $\hat{z}_{i_k}, k \in \{1,2,3,4,5\}$ get substituted with $\#$. With this, we can encode the $0$ and the $\#$ and thus, the proof of Freydenberger and Reidenbach~\cite{Freydenberger2010} works in this setting analogously.
\end{proof}

This concludes our results for the inclusion problem. For the remaining cases $r_{|w|}$, $r_{perm}$, $r_{com^*}$, and $r_{com^+}$, decidability remains open, as the reduction properties used above seem not to carry over: $r_{|w|}$ and $r_{perm}$ do not enforce equality of letters or unary words over at least binary alphabets, while $r_{com^*}$ or $r_{com^+}$ do not even enforce equal lengths. Thus, the decidability boundary for structurally weaker relations remains open.

\section{The Membership Problem}
In this final section, we briefly discuss the membership problem -- also known as the matching problem -- for relational pattern languages for all relations in $\mathcal{R}$ and general patterns.
For the case of $r_{=}$, NP-completeness of the membership problem was originally shown in \cite{Angluin1980} and \cite{Jiang1994} for the non-erasing and erasing cases respectively. In fact, independent communities have shown this under several considered restrictions (cf. \cite{FERNAU2015287,Manea2019}). As shown in this section, all but one cases remain NP-complete.

\begin{proposition}\label{prop:membership-innp-relational}
   Let $r\in\mathcal{R}$ and $R = \{r\}$. Let $w\in\Sigma^*$ and $(\alpha,r_\alpha)\in\relpat$. Deciding $w\in L_{X}(\alpha,r_\alpha)$, for $X\in\{E,NE\}$, is in NP.
\end{proposition}

Indeed, NP-containment follows by the known arguments from the cases arising in $r_=$. We can guess a substitution $h$ with $|h(\alpha)| = |w|$, check the relational constraints in polynomial time, and finally check whether $h(\alpha) = w$. Regarding NP-hardness, we revisit the exposition given in \cite{Manea2019}, where a reduction is given from the known NP-complete positive 1-in-3-SAT (no negated variables, 3-CNF clauses, satisfiable with exactly one literal per clause set to true) problem. For example, assuming $\Sigma = \{0,\#\}$, the formula $$\{\{X_1,X_2,X_3\},\{X_2,X_4,X_5\},\{X_1,X_4,X_5\},\{X_3,X_4,X_6\}\}$$ is in 3-CNF and 1-in-3 satisfiable if and only if the word $w = 0\#0\#0\#0$ is in the erasing pattern language of the pattern $\alpha = x_1x_2x_3\#x_2x_4x_5\#x_1x_4x_5\#x_3x_4x_6$. Using $w = 0^4\#0^4\#0^4\#0^4$, this construction works for the NE case as well. For most of the relations considered in this paper, this framework suffices.

\begin{lemma}[\Rightscissors \ref{proof:lemma:membership-trivial-hardness}]\label{lemma:membership-trivial-hardness}
    Let $r\in\{r_=, r_{|w|}, r_{ssq}, r_{ab}, r_{perm}, r_{rev}, r_{*},r_{com^+},r_{\varepsilon_=}\}$ s.t. $R = \{r\}$ and let $|\Sigma|\geq 2$. For $w\in\Sigma^*$, $(\alpha,r_{\alpha})\in\relpat$, deciding $w\in L_{E}(\alpha,r_\alpha)$ is NP-hard. If $r \notin \{r_{com^+},r_{\varepsilon_=}\}$, deciding $w\in L_{NE}(\alpha,r_\alpha)$ is NP-hard.
\end{lemma}

This leaves us with the cases of $r_{com^+} (r_{com^*})$ in the non-erasing case and with $r_{com^*}$ in the erasing case. For these, a slightly different reduction is necessary but possible to achieve. As the general idea is still a natural reduction from satisfiability, in this case the variant 3-CNF-SAT (or just 3-SAT), we omit the formal proof to Appendix~\ref{proof:theorem:membership-np-complete-cases}. This concludes the following.

\begin{theorem}[\Rightscissors \ref{proof:theorem:membership-np-complete-cases}]\label{theorem:membership-np-complete-cases}
    Let $r\in\mathcal{R}$ s.t. $R = \{r\}$ and assume $|\Sigma|\geq 2$. For $w\in\Sigma^*$ and $(\alpha,r_{\alpha})\in\relpat$, deciding $w\in L_{E}(\alpha,r_\alpha)$ is NP-complete. If $r \neq r_{\varepsilon_=}$, deciding $w\in L_{NE}(\alpha,r_\alpha)$ is NP-complete.
\end{theorem}

Finally, we recall from \Cref{remark:equivalence-nonerasing-commonempty-linear} that the relation $r_{\varepsilon_=}$ has no effect in the non-erasing case, resulting in each pattern language $L_{NE}(\alpha,r_\alpha)$, for $(\alpha,r_\alpha)\in\relpat$ and $R = \{r_{\varepsilon_=}\}$, being equal to the pattern language $L_{NE}(\alpha)$ of the regular pattern $\alpha$. The membership problem for regular patterns has been shown to be decidable in linear time by Shinohara in \cite{Shinohara1983}. Hence, the following immediately holds.

\begin{corollary}\label{corollary:membership-poly-case}
    Let $R = \{r_{\varepsilon_=}\}$. For $w\in\Sigma^*$ and $(\alpha,r_{\alpha})\in\relpat$, we can decide in linear time whether $w\in L_{NE}(\alpha,r_\alpha)$.
\end{corollary}

This concludes our discussion of the membership problem. Except for $r_{\varepsilon_=}$ in the non-erasing case, all considered cases are NP-complete. Thus, hardness persists even for structurally weak relations such as $r_{\varepsilon_=}$, $r_{com^*}$, and $r_{com^+}$, confirming that matching words against compact relational pattern descriptions is generally computationally hard.

\section{Final Discussion}
\label{sec:conclusion}
We considered the three main decision problems equivalence, inclusion, and membership problems for relational pattern languages under various typical binary relations, i.e., $r_=$, $r_{ssq}$, $r_{ab}$, $r_{perm}$, $r_{rev}$, $r_{com^+}$, $r_{com^*}$, $r_{*}$, and $r_{\varepsilon_=}$. For all problems, we obtained a selection of combinatorial and complexity results that can serve as a basis for several settings where encoding problems into the relational pattern languages with different relations might be a natural choice. In addition to that, these results together with the remaining open cases discussed in the respective sections can serve as a basis for future research in all the directions considered in this paper.

In addition to these results, the problems considered in this paper could also be considered for other natural relations that have not been considered in this context, yet. Among these, for example, the non-equal relation that allows for everything but equal substitutions among related variables is an interesting candidate that also finds relevance in the literature (e.g., see \cite{Manea2020}). There, it has been shown that it is NP-complete to decide whether a word $w$ can be factorized into $k$ distinct factors. By constructing a pattern $x_1...x_k$ and relating all variables with each other using non-equality, this directly reduces to the relational pattern language setting using that relation, resulting in NP-completeness of Membership. It would be interesting to know, how the inclusion and equivalence problems behave with respect to this relation. Another candidate could be a cyclic shift relation that restricts substitutions of related variables to conjugates of a word. As an equivalence relation, we expect similar results as obtained in this paper to also be obtainable in this case. 

Finally, we only considered patterns using a single relation. Allowing multiple relations could increase expressiveness substantially: decidable single-relation cases might become undecidable when relations are combined, and open cases such as erasing equivalence under equality might become undecidable as well when additional relations weaker than regular or length constraints are allowed.

\subsubsection*{Acknowledgments.}
We thank the anonymous reviewers for their valuable comments and suggestions.

\bibliographystyle{eptcsini}
\bibliography{decision_problems_relational_patterns}

\newpage

\appendix

\section{Additional Content for the Equivalence Problem}
\label{appendix:equivalence}
\subsection{Proof of Corollary~\ref{corollary:equivalence-ne-polydec-somerel}}\label{proof:corollary:equivalence-ne-polydec-somerel}
\begin{proof}
    Since every relation in ${r_=, r_{ab}, r_{con^+}}$ is an equivalence relation enforcing equality of related letters, we first normalize $(\alpha,r_\alpha)$ and $(\beta,r_\beta)$ by renaming variables in order of first occurrence to $x_1,x_2,\dots$. We then test whether $\alpha=\beta$. If not, we return false. Otherwise, we compute $[r_\alpha]$ and $[r_\beta]$ via their reflexive, symmetric, and transitive closures by a single scan of the patterns. These classes have total size at most the number of variables, so this also takes linear time. We return true exactly if $[r_\alpha]=[r_\beta]$, and false otherwise, by \Cref{theorem:equivalence-subst1prop-equivrel-equality}.
\end{proof}

\subsection{Proof of Theorem~\ref{theorem:equivalence-commonEmptiness-decidable-pspace}}
\label{proof:theorem:equivalence-commonEmptiness-decidable-pspace}
\begin{proof}
    Again, consider $(\alpha,r_\alpha)$ and for each subset $S \subseteq \mathcal{C}_\alpha$, replace the variables $v \in \bigcup_{C_{\alpha,i} \in S} C_{\alpha,i}$ by any word in $\Sigma^+$ and any other variable $v \notin \bigcup_{C_{\alpha,i} \in S} C_{\alpha,i}$ by $\varepsilon$. We call the resulting language $L_{E}(\alpha,r_\alpha,S)$. Using that, we obtain $L_{E}(\alpha,r_\alpha) = \bigcup_{S \subseteq \mathcal{C}_\alpha} L_{E}(\alpha,r_\alpha,S)$. 
    Do the same for $\beta$. Note there exist at most $2^{k_\alpha} \leq 2^{|\alpha|}$ subsets $S$ for $\alpha$ and at most $2^{k_\beta}\leq 2^{|\beta|}$ subsets for $\beta$.

    Here, we provide a more detailed construction of the previously mentioned NFAs: Starting with $(\alpha,r_\alpha)$, for each such $S$ defined above, we can take a sub-pattern $\alpha_S$ of $\alpha$ that only contains the variables that are not substituted by the empty word. As $r_{\varepsilon_=}$ imposes no structural restriction other than being substituted (non)emptily commonly, essentially $\alpha_S$ is a regular pattern and we obtain the relationship $L_{NE}(\alpha_S) = L_{NE}(\alpha,r_\alpha,S)$.  So, for each such regular sub-pattern $\alpha_S$, we can create an NFA $A_{\alpha_S} = (Q,\Sigma,\delta,q_0),F$ with $L(A_{\alpha_S}) = L_{NE}(\alpha_S)$ using a straightforward construction that represents that regular pattern. Consider the following sketch. We set $Q = \{q_{\alpha_S,0},...,q_{\alpha_S,|\alpha_S|}\}$ where each state $q_{\alpha_S,i}$ represents a position between some letters in the pattern $\alpha_S$ (or a position in front of the first letter or after the last letter). Then we set the state corresponding to the last position in $\alpha_S$ (i.e., $q_{\alpha_S,|\alpha_S|}$) to the only final state. Hence, $F = \{q_{\alpha_S,|\alpha_S|}\}$. For all other positions $i < |\alpha_S|$, if $\alpha_S[i+1]$ is some terminal letter $\ta\in\Sigma$, we add only one outgoing transition and set $\delta(q_{\alpha_S,i},\ta) = \{q_{\alpha_S,i+1}\}$. On the other hand, if $\alpha_S[i+1]\in\var(\alpha_S)$ is some variable, add two outgoing transitions for each letter, one going to the state representing the next position, the other one staying in the same state. Thus, set $\delta(q_{\alpha_S,i},\ta) = \{q_{\alpha_S,i}, q_{\alpha_S,i+1}\}$, for all $\ta\in\Sigma$. As $\alpha_S$ is a regular pattern as established before, the relationship $L(A_{\alpha_S}) = L_{NE}(\alpha_S)$ is apparent. Now, for each $S \subseteq \mathcal{C}_\alpha$, we can construct such an NFA $A_{\alpha_S}$. Finally, we can combine all of these NFA into one common NFA $A_{\alpha}$ where a new initial state is introduced that is connected to each initial state of the sub-automata $A_{\alpha_S}$ by an $\varepsilon$-transition, and each final state is preserved. So, using one outgoing transition of the initial state of $A_{\alpha}$, we choose which blocks of variables $S$ are to be substituted emptily together and which are not. Then, the respective sub-automaton $A_{\alpha_S}$ produces some word of the corresponding sub-language $L_E{\alpha,r_{\alpha_S}}$. In total, we see that the NFA $A_\alpha$ can be constructed and represents $L(A_{\alpha}) = L_E{\alpha,r_\alpha}$. So the same for $(\beta,r_\beta)$ by constructing an NFA $A_{\beta}$ analogously. It is well known that checking equivalence of the languages of two NFA is PSPACE-complete (see, e.g., \cite{Aho1974TheDA}), hence, in particular, decidable. Thus, we can check $L(A_{\alpha}) = L(A_{\beta})$ and use the result to decide $L_E(\alpha,r_\alpha) = L_E(\beta,r_\beta)$.
\end{proof}

\subsection{Proof of Lemma~\ref{lemma:empty-together-poly-equivalence}}
\label{proof:lemma:empty-together-poly-equivalence}
\begin{proof}

\textbf{(1):} 
        Let $\alpha = x_1 x_2 \dots x_{n_\alpha}$ and $\beta = y_1 y_2,\dots y_{n_\beta}$ be terminal-free and let $\mathcal{C}_\alpha$ and $\mathcal{C}_\beta$ be the equivalence classes. 
        Note that the empty word can be obtained by both patterns. Thus, in the following, we focus on non-empty words.
        
        Consider the pattern $\alpha$. Let $S \subseteq \mathcal{C}_\alpha$ with $S \neq \emptyset$ be the subset of equivalence classes chosen to be non-empty. Assume that all variables in the remaining classes are replaced by $\varepsilon$.
        Since each non-empty variable contributes at least one letter to the resulting word but has no other restriction, we can build all words of length at least $\sum_{C_{\alpha,i} \in S} |C_{\alpha,i}|$. 
        Now, when taking all possible subsets $S$ into account, no word $w$ of positive length below $\min_{C_{\alpha,i} \in \mathcal{C}_\alpha}|C_{\alpha,i}|$ is possible. Thus, the language of accepted words by $\alpha$ is
        \begin{align*}
            L_{E}(\alpha,r_\alpha) = \{\varepsilon\} \cup \{w \mid |w| \geq \min_{C_{\alpha,i} \in \mathcal{C}_\alpha}|C_{\alpha,i}|\}.
        \end{align*}
        Respectively, the language of accepted words by $\beta$ is 
        \begin{align*}
            L_{E}(\beta,r_\beta) = \{\varepsilon\} \cup \{w \mid |w| \geq \min_{C_{\beta,i} \in \mathcal{C}_\beta}|C_{\beta,i}|\}.
        \end{align*}
    
        Now, $\min_{C_{\alpha,i} \in \mathcal{C}_\alpha}|C_{\alpha,i}| = \min_{C_{\beta,i} \in \mathcal{C}_\beta}|C_{\beta,i}|$  if and only if both patterns are equivalent.
        This property can be tested in time $O(n_\alpha + n_\beta)$ by scanning once over the patterns and counting the sizes of the equivalence classes.
        
\textbf{(2):}
Assume $|\Sigma| = 1$ and $a \in \Sigma$. Note that the patterns $\alpha$ and $\beta$ must have the same number of terminals in order to be equivalent. Otherwise we could set all variables to $\varepsilon$ and obtain distinct words. Let $\ell$ be the number of terminals in each pattern.
By the same arguments as in (1), either all variables are empty, i.e., $\{a^\ell\} \in L_{E}(\alpha,r_\alpha)$ and $\{a^\ell\} \in L_{E}(\beta,r_\beta)$, or a non-empty subset $S$ of equivalence classes contributes $\sum_{C_{\alpha,i} \in S} |C_{\alpha,i}|$ (respectively $\sum_{C_{\beta,i} \in S} |C_{\beta,i}|$) $a$'s to the words. As we consider a single-letter alphabet, the language of accepted words by $\alpha$ is
        \begin{align*}
            L_{E}(\alpha,r_\alpha) = \{a^\ell\} \cup \{w \mid |w| \geq \ell + \min_{C_{\alpha,i} \in \mathcal{C}_\alpha}|C_{\alpha,i}|\}.
        \end{align*}
        Respectively, the language of accepted words by $\beta$ is 
        \begin{align*}
            L_{E}(\beta,r_\beta) = \{a^\ell\} \cup \{w \mid |w| \geq \ell + \min_{C_{\beta,i} \in \mathcal{C}_\beta}|C_{\beta,i}|\}.
        \end{align*}
        We obtain that both patterns are equivalent if and only if they have the same number of terminals and $\min_{C_{\alpha,i} \in \mathcal{C}_\alpha}|C_{\alpha,i}| = \min_{C_{\beta,i} \in \mathcal{C}_\beta}|C_{\beta,i}|$. This can again be tested in time $O(n_\alpha + n_\beta)$ by scanning once over the patterns and counting the sizes of the equivalence classes.
        
\textbf{(3):} 
        Assume there exists a letter $c \in \Sigma$ that does not appear as a terminal in either $\alpha$ or $\beta$.
        If all variables are set to $\varepsilon$, both patterns generate a word that matches their terminals. Hence, a necessary condition for equivalence is that the terminal words of $\alpha$ and $\beta$ are identical. This, we assume in the following and focus on words where at least one variable is non-empty.

        Let $m$ be the number of terminals in $\alpha$ or $\beta$. We write 
        \begin{align*}
            \alpha = V_{\alpha,0} T_{1} V_{\alpha,1} T_{2} \dots T_{m} V_{\alpha,m},    
        \end{align*}
        where $T_j, j \in \{1,\dots,m\}$ are blocks terminals and $V_{\alpha,j}, j \in \{0,\dots,m\}$ are blocks of variables. Since the terminals of both patterns are identical, $\beta$ has the same decomposition but the blocks of variables may be different, i.e., 
        \begin{align*}
            \beta =  V_{\beta,0} T_{1} V_{\beta,1} T_{2} \dots T_{m} V_{\beta,m}.
        \end{align*}

        First, consider the pattern $\alpha$. For each equivalence class $C_{\alpha,i}$, let $\mathbf{occ}_{\alpha,i}(j)$ be the number of occurrences of that class in block $V_{\alpha,j}$ and define 
        \begin{align*}
            \mathbf{occ}_{\alpha,i} := (\mathbf{occ}_{\alpha,i}(0),\mathbf{occ}_{\alpha,i}(1),\dots,\mathbf{occ}_{\alpha,i}(m)).
        \end{align*}

        If a subset $S\subseteq\mathcal{C}_\alpha$ is non-empty, the minimal number of letters in block $V_j$ is $\sum_{C_{\alpha,i}\in S} \mathbf{occ}_{\alpha,i}(j)$. Thus, every word of length at least $\sum_{C_{\alpha,i}\in S} \mathbf{occ}_{\alpha,i}(j)$ can be generated in that block. 

        We say that an equivalence class $C_{\alpha,i}$ is \emph{dominated} if there exists a set $S'$ of other classes, i.e., $S'\subset \mathcal{C}_\alpha$ and $C_{\alpha,i} \notin S'$, such that for all blocks $j \in \{0,\dots,m\}$ we have 
        \[\mathbf{occ}_{\alpha,i}(j) = 0 \Longleftrightarrow \sum_{C_{\alpha,i'} \in S'} \mathbf{occ}_{\alpha,i'}(j) = 0\] and
        \[\mathbf{occ}_{\alpha,i}(j) \geq \sum_{C_{\alpha,i'} \in S'}\mathbf{occ}_{\alpha,i'}(j).\]
        Then all words generated in each block individually by $C_{\alpha,i}$ can also be generated by the variables in $S'$ and we may remove all variables of that equivalence class without changing the language. 
        We may consider each block individually since $c$ does not occur as a terminal, we may assign the letter $c$ to all variable occurrences.
        By iteratively removing dominated classes we obtain a \emph{minimal pattern} of $\alpha$. For a removed class $i$, we set $\mathbf{occ}_{\alpha,i}(j) = 0$ for all $j \in \{0,\dots,m\}$.
        We do the same reduction step for $\beta$.
        Now, the languages are equal if and only if the minimal patterns are identical, i.e., for all $i \in \{1,\dots,k_\alpha\}$ there exists $i' \in \{1,\dots,k_\beta\}$ such that $\mathbf{occ}_{\alpha,i} = \mathbf{occ}_{\beta,i'}$.

        We now argue that this condition is correct. 
        Suppose the minimal patterns differ.  
        Then there exists a class $C$ for one pattern that has no matching class in the other pattern but is also not dominated. 
        Because $c$ is unused as a terminal, construct a word that assigns all variables of $C$ the letter $c$ and sets all other variables to $\varepsilon$. Due to the distinct number of occurrences, this gives in at least one block a word length that can not be matched by the other pattern.
        Thus, the languages differ.
        
        The block decomposition and counting the occurrences both require linear time.  
        Testing the dominance between classes takes at most $O(k_\alpha^2 m)$, and respectively $O(k_\beta^2 m)$ comparisons. Comparing the minimal patterns is also possible in linear time
        Hence, the complete procedure runs in polynomial time.
\end{proof}

\section{Additional Content for the Inclusion Problem}
\subsection{Proof of Theorem~\ref{prop:inclusion_non-erasing_abel-undec}}
\label{appendix:proof-inclusion-nonerasing-abel-undec}

\subsubsection{Definition of the Considered Universal Turing Machine}
Here, we define the universal Turing machine $U$ used in the referenced proof from \cite{BREMER201215} and referred to in the proofs for the undecidability of the inclusion problem for NE relational pattern languages under $r_{ab}$ and $_{rev}$ in Theorem~\ref{prop:inclusion_non-erasing_abel-undec}. Let $U = (Q,\Gamma,\delta)$ be the universal Turing machine $U_{15,2}$ wih 2 symbols and 15 states as described by Neary and Woods \cite{Neary2009}. This machine hat the states $Q = \{q_1,...,q_{15}\}$ and the tape alphabet $\Gamma = \{0,1\}$. The transition function $\delta:\Gamma\times Q\rightarrow(\Gamma\times\{L,R\}\times Q)\cup\mathtt{HALT}$ is given in Table~\ref{tab:UTM-delta}.

We follow with the definition of encodings of computations of $U$ as depicted in \cite{BREMER201215}.  The following conventions are needed to discuss configurations of $U$. The tape content of any configuration of $U$ is characterized by two infinite sequences $t_L = (t_{L,n})_{n\geq0}$ and $t_R = (r_{R,n})_{n\geq0}$ over $\Gamma$. The sequence $t_L$ describes the \emph{left side of the tape}, the sequence starting at the head position of $U$ (including) and extending to the left. Analogously, $t_R$ describes the \emph{right side of the tape}, the sequence starting directly after the head position and extending to the right. A \emph{configuration} $C = (q_i,t_L,t_R)$ of $U$ is a triple consisting of a state $q_i$, a left side of the tape $t_L$ and a right side of the tape $t_R$.

Let $e : \Gamma \rightarrow N$ be a function defined by $e(0) := 0$, $e(1) := 1$, and the extension to to infinite sequences $t = (t_n)_{n\geq0}$ over $\Gamma$ by $e(t) := \sum_{i=0}^{\infty}e(t_i)$. As in each configuration of $U$ only a finite number of cells consist of no blank symbol $(0)$, $e(t)$ is always finite and well-defined. Notice that we can always obtain the symbol that is closest to the head by $e(t)\mod 2$ (the symbol at the head position in the case of $t_L$ and the symbol right of the head position in the case of $t_R$). By multiplying or dividing the encoding $e(t)$ by $2$, each side can be lengthened or shortened, respectively. The \emph{encoding of configurations of $U$} indirectly referred to in this paper is defined by
$$\enc_{NE}(q_i,t_L,t_R) = 0^70^{e(t_R)}\#0^70^{e(t_L)}\#0^{i+6}$$
for every configuration $(q_i,t_L,t_R)$. Recall that $i > 0$ as $q_i\in\{q_1,...,q_{15}\}$. A \emph{computation} $\mathcal{C} = (C_1,...,C_n)$ on $U$ is a finite sequence of configurations of $U$. It is \emph{valid} if $C_1 = I$ ($I$ being some initial configuration), $C_n$ is a halting configuration, and $C_{i+1}$ is a valid successor configuration of $C_i$, for $i\in[n-1]$, as defined by $\delta$. In \cite{BREMER201215}, the notion is adopted that any possible configuration where both tape sides have a finite value under $e$ is a valid successor configuration of a halting configuration. The \emph{encoding of computations} of $U$ is given analogously to the definition in the case of nondeterministic 2-counter automata without input, i.e., for some computation $\mathcal{C} = (C_1,...,c_n)$, we have
$$\enc_{NE}(\mathcal{C}) = \#\#\enc_{NE}(C_1)\#\#\enc_{NE}(c_2)\#\#\ ...\ \#\#\enc_{NE}(C_n)\#\#.$$
Finally, also analogous to nondeterministic 2-counter automata without input, let
$$ \ValC_U(I) = \{ \enc_{NE}(\mathcal{C}) \mid\mathcal{C}\text{ is a valid computation from } I\ \}.$$

\begin{table}
    \centering
    \resizebox{\textwidth}{!}{
    \begin{tabular}{ | c | c c c c c c c c | }

        \hline

        & $q_{1}$
        & $q_{2}$
        & $q_{3}$
        & $q_{4}$
        & $q_{5}$
        & $q_{6}$
        & $q_{7}$
        & $q_{8}$ \\

        \hline

        $0$
        & $(0,R,q_2)$
        & $(1,R,q_3)$
        & $(0,L,q_7)$
        & $(0,L,q_6)$
        & $(1,R,q_1)$
        & $(1,L,q_4)$
        & $(0,L,q_8)$
        & $(1,L,q_9)$ \\

        $1$
        & $(1,R,q_1)$
        & $(1,R,q_1)$
        & $(0,L,q_5)$
        & $(1,L,q_5)$
        & $(1,L,q_4)$
        & $(1,L,q_4)$
        & $(1,L,q_7)$
        & $(1,L,q_7)$\\

        \hline

        & $q_{9}$
        & $q_{10}$
        & $q_{11}$
        & $q_{12}$
        & $q_{13}$
        & $q_{14}$
        & $q_{15}$
        & \\

        \hline

        $0$
        & $(0,R,q_1)$ 
        & $(1,L,q_{11})$ 
        & $(0,R,q_{12})$
        & $(0,R,q_{13})$
        & $(0,L,q_{2})$
        & $(0,L,q_{3})$
        & $(0,R,q_{14})$
        & \\

        $1$
        & $(1,L,q_{10})$
        & $\mathtt{HALT}$
        & $(1,R,q_{14})$
        & $(1,R,q_{12})$
        & $(1,R,q_{12})$
        & $(0,R,q_{15})$
        & $(1,R,q_{14})$
        &\\

        \hline
    \end{tabular}
    }
    \caption{Transition table of $U$, i.e., definition of $\delta$, as it is given in \cite{BREMER201215} or \cite{Neary2009}.}
    \label{tab:UTM-delta}
\end{table}

\subsubsection{The Proof}
\begin{proof}
As this proof works almost identical to the proof of Bremer and Freydenberer in \cite{BREMER201215}, we only
give an extended sketch and focus on the formal intricacies that occur as we are now considering abelian equivalence instead of the equivalence relation.

Let $\Sigma = \{0,\#\}$ and $R := \{r_{ab}\}$.
To show undecidability of this problem, we reduce the problem of deciding whether $\mathtt{UnivValc}_{NE}(I) = \emptyset$ for
some initial configuration $I$ of $U$ to the problem of deciding whether $L_{NE}(p_1,\alpha_R) \subseteq L_{NE}(p_2,\beta_R)$
for two relational patterns $(\alpha,r_\alpha), (\beta,r_\beta)\in \relpat$.
To be specific, we construct two such patterns $(\alpha,r_\alpha), (\beta,r_\beta)\in \relpat$ such that
$L_{NE}(\alpha,r_\alpha) \subseteq L_{NE}(\beta,r_\beta)$ if and only if $\ValC_U(I) = \emptyset$.

We define the relational pattern $(\beta,r_\beta)$ by
$$ \beta := a_1b_1\ \#^5\ a_2x_{1,1}x_{2,1}\cdots x_{\mu,1}b_2\ \#^5\ r_1\hat{\beta}_1r_2\hat{\beta}_2\cdots r_\mu\hat{\beta}_\mu r_{\mu+1} $$
where for all $i\in\{1,...,\mu\}$ we have
$$ \hat{\beta}_i := 0x_{i,2}\cdots x_{i,5}0\ \upgamma_i\ 0x_{i,6}\cdots x_{i,9}0\ \updelta_i\ 0x_{i,10}\cdots x_{i,13}0$$
such that $a_1,a_2,b_1,b_2,r_i,x_{j,k}$ are variables for all $i\in\{1,...,\mu+1\}$, $j\in\{1,...,\mu\}$, and $k\in\{1,...,13\}$ with
$(r_{ab},a_1,a_2),(r_{ab},b_1,b_2),(r_{ab},x_{i,j},x_{i,j'})\in r_\beta$, for $i\in\{1,...,\mu\}$ and $j,j'\in\{1,...,13\}$. Also, $\upgamma_i\updelta_i \in \relpat$
are patterns for all $i\in\{1,...,\mu\}$ which are described later, but for which we already mention that the factor $\#^4$ never appears in them. Additionally, we have that each variable $r_i$ is not related to any other variable
in $\beta$ and we have that the variables $a_1,a_2,b_1,$ and $b_2$ are not related to any other variable than the ones mentioned here.
For the relational patterns $\upgamma_i$ and $\updelta_i$ we also mention that all variables occurring in $\upgamma_i\updelta_i$ are not related to any other
variable occurring in some other $\upgamma_j\updelta_j$ for $i\neq j$ with $i,j\in\{1,...,\mu\}$ and that each variable $x_{i,j}$ is not related to any other variable in
$\upgamma_k\updelta_k$ for $i,k\in\{1,...,\mu\}$ and $j\in\{1,...,13\}$.

Now, we define the second relational pattern $(\alpha,r_\alpha)$ by
$$ \alpha := 0^{\mu+1}\ \#^5\ 0^\mu\#0^\mu\ \#^5\ tv0\alpha_10v0\alpha_20vt $$
where $v := 0\#^40$, $\alpha_1$ is some relational pattern not containing $\#^3$ as a factor, $\alpha_2$ is a relational pattern not containing $\#$,
and $t$ is some other terminal string. As we shall see later, we have $r_\alpha = \emptyset$. This will become clear once we specifically define $\alpha_1$ and $\alpha_2$.
We define $t$ using the non-erasing substitution $\psi : (\var(\beta)\cup\Sigma)^* \rightarrow \Sigma^*$ with
$\psi(x) = 0$ for all $x\in\var(\beta)$. Using that, we say $t := \psi(r_1\hat{\beta}_1\cdots r_\mu\hat{\beta}_\mu r_{\mu+1})$.
We see that both patterns are basically constructed in the same way as in \cite{BREMER201215}, just using the formalism of relational
patterns, i.e. all variables which are equal in the construction in \cite{BREMER201215}, are in this construction distinct variables that are, however, related to each other through the sets $r_\alpha$ and $r_\beta$.

The following lemma follows directly from the proof in \cite{BREMER201215} which can be converted directly to the formalism using
substitutions over relational patterns.

\begin{lemma}\label{lemma:ne-includ-t-props}\cite{BREMER201215}
    All $\psi(\hat{\beta}_i)$ with $1 \leq i \leq \mu$ and $t$ begin and end with the terminal letter $0$ and do not contain the terminal word $\#^4$ as a factor.
\end{lemma}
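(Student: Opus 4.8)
The plan is to trace back through the explicit definitions of $\psi$, $t$, and the $\hat{\beta}_i$ in the construction of $(\alpha,r_\alpha)$ and $(\beta,r_\beta)$, and verify the two claimed properties directly. First I would recall that $t := \psi(r_1\hat{\beta}_1\cdots r_\mu\hat{\beta}_\mu r_{\mu+1})$ where $\psi$ is the non-erasing substitution sending every variable to the single letter $0$. Hence $t$ is obtained from the word $r_1\hat{\beta}_1\cdots r_\mu\hat{\beta}_\mu r_{\mu+1}$ (viewed over $\var(\beta)\cup\Sigma$) by replacing each variable occurrence with $0$ and leaving terminal letters untouched. So to understand the terminal letters appearing in $t$ it suffices to look at the terminal letters appearing in $\beta$ together with the fact that variables become $0$'s.

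Next I would examine $\hat{\beta}_i := 0x_{i,2}\cdots x_{i,5}0\ \upgamma_i\ 0x_{i,6}\cdots x_{i,9}0\ \updelta_i\ 0x_{i,10}\cdots x_{i,13}0$ under $\psi$. Each block $0x_{i,k}\cdots x_{i,k+3}0$ maps to $0^6$, which clearly begins and ends with $0$ and contains no $\#$ at all, in particular no $\#^4$. The only terminal letters $\#$ that could possibly enter $\psi(\hat{\beta}_i)$ come from $\upgamma_i$ and $\updelta_i$; but the construction explicitly stipulates that the factor $\#^4$ never appears in $\upgamma_i$ or $\updelta_i$, and moreover between any two $\#$-runs coming from $\upgamma_i$, $\updelta_i$ there is always a surrounding $0$ (the literal $0$'s flanking each $x$-block). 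So $\psi(\hat{\beta}_i)$ begins with $0$, ends with $0$, and any maximal $\#$-run in it is a sub-run of one inside $\psi(\upgamma_i)$ or $\psi(\updelta_i)$, hence of length at most $3$; therefore $\#^4$ does not occur as a factor. For $t$ itself, I would note $t = \psi(r_1)\psi(\hat{\beta}_1)\psi(r_2)\cdots\psi(\hat{\beta}_\mu)\psi(r_{\mu+1})$, where each $\psi(r_i) = 0$; so $t$ starts with $\psi(r_1) = 0$ and ends with $\psi(r_{\mu+1}) = 0$, and the juxtaposition of the $\psi(\hat{\beta}_i)$'s — each flanked by a lone $0$ — cannot create a new long $\#$-run across a boundary, since every boundary carries at least one $0$. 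Thus no $\#^4$ factor arises in $t$ either.

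The main obstacle, and the only genuinely non-trivial point, is the boundary argument: one must be careful that concatenating $\psi(\hat{\beta}_i)$, $\psi(r_{i+1})$, $\psi(\hat{\beta}_{i+1})$ does not splice two short $\#$-runs into a $\#^4$. Here the key structural fact is that every $\hat{\beta}_i$ begins and ends with a literal terminal $0$ (from the outer $0\cdots 0$ of its first and last $x$-blocks), and every $r_i$ maps to $0$, so every such boundary is insulated by at least one $0$. Since this is exactly the content already isolated in \cite{BREMER201215}, I would cite that proof for the routine verification and only remark that replacing identical variables by distinct related variables and $\psi$ by the relational substitution formalism changes nothing, because $\psi$ ignores the relations entirely (it is non-erasing and constant on variables, hence trivially $r_\beta$-valid-irrelevant for this purely terminal-level statement). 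This yields \Cref{lemma:ne-includ-t-props} as stated.
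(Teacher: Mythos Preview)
Your proposal is correct and simply spells out the routine verification that the paper itself delegates entirely to \cite{BREMER201215} without giving any further argument. Since the paper's ``proof'' is just the citation together with the remark that the formalism change is immaterial, your explicit unwinding of the definitions of $\psi$, $t$, and $\hat{\beta}_i$---including the boundary check that the literal $0$'s flanking each $x$-block and each $\psi(r_i)=0$ insulate all concatenation points---is exactly the intended argument, and nothing beyond what you wrote is required.
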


We recall, that $H_{r_\alpha}$ denotes the set of all $r_\alpha$-valid substitutions.
Let $H_{r_\alpha^+}$ denote the subset of all non-erasing substitutions of $H_{r_\alpha}$.
Each pair $(\upgamma_i,\updelta_i)$ is interpreted as a predicate $\pi_i : H_{r_\alpha^+} \rightarrow \{0,1\}$ such that
any $h\in H_{r_\alpha^+}$ satisfies $\pi_i$ if there exists some $r_\beta$-valid non-erasing substitution
$\tau : (\var(\upgamma_i\updelta_i)\cap \Sigma)^* \rightarrow \Sigma^*$ with $\tau(\upgamma_i) = h(0\alpha_1 0)$ and $\tau(\updelta_i) = h(0\alpha_2 0)$.
We need to say that $\tau$ is $r_\beta$-valid for formal reasons, but as all variables in $\updelta_i\upgamma_i$ are only related to variables
in $\updelta_i\upgamma_i$ itself, we can consider such an isolated substitution as it does not interfere with the substitution of any other variable
of $\beta$. Using those predicates, we will see that $L_{NE}(\alpha,r_\alpha) \setminus L_{NE}(\beta,r_\beta)$ contains
those words $h(\alpha)$ for $h\in H_{r_\alpha^+}$ for which we have that $h$ does not satisfy any predicate $\pi_i$.
Essentially, all predicates $\pi_1$ to $\pi_\mu$ will give an exhaustive list of criteria for membership of $\overline{\ValC_U(I)}$.
We proceed with more technical preperations that we need for the easy and dynamic construction of predicates. Once this is settled, as
the constructed predicates from \cite{BREMER201215} also work in this case, we are done.

We say that a substitution $h \in H_{r_\alpha^+}$ is of \emph{bad form} if $h(0\alpha_10)$ contains a factor $\#^3$ 
or if $h(0\alpha_20)$ contains the letter $\#$. We use the predicates $\pi_1$ and $\pi_2$ to handle all such substitutions of bad form
and construct them exactly the same way as they are constructed in \cite{BREMER201215} by saying

\vspace{3mm}
\begin{tabular}{l l}
    $\upgamma_1 := y_{1,1}\#^ky_{1,2}$  & $\upgamma_2 := 0y_{2}0$ \\
    $\updelta_1 := 0 \hat{y}_1 0$       & $\updelta_2 := \hat{y}_{2,1} \# \hat{y}_{2,2}$
\end{tabular}
\vspace{3mm}

\noindent for new variables $y_{1,1}, y_{1,2}, y_{2}, \hat{y}_1, \hat{y}_{2,1}, \hat{y}_{2,2} \in X$ that are not related to any other variable in $\beta$.
As they are not related to any other variable, no intricacies regarding abelian equivalence in comparison to normal equality occur and we directly
get the following lemma from \cite{BREMER201215} as their proof needs no adaptation.

\begin{lemma}\label{lemma:ne-includ-bad-form}\cite{BREMER201215}
    A substitution $h\in H_{r_\alpha^+}$ is of \emph{bad-form} if and only if $h$ satisfies $\pi_1$ or $\pi_2$.
\end{lemma}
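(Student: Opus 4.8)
The statement to establish is that a non-erasing, $r_\alpha$-valid substitution $h \in H_{r_\alpha^+}$ is of bad form (meaning $h(0\alpha_10)$ contains a factor $\#^3$ or $h(0\alpha_20)$ contains the letter $\#$) if and only if $h$ satisfies at least one of the predicates $\pi_1$ or $\pi_2$. Since the definitions of $\upgamma_1,\updelta_1,\upgamma_2,\updelta_2$ are chosen to be literally the ones in \cite{BREMER201215}, and since the new variables $y_{1,1}, y_{1,2}, y_2, \hat y_1, \hat y_{2,1}, \hat y_{2,2}$ occurring in $\upgamma_1\updelta_1$ and $\upgamma_2\updelta_2$ are explicitly declared to be unrelated to any other variable of $\beta$, the plan is to argue that the abelian-equivalence constraints in $r_\beta$ impose nothing on the witnessing substitution $\tau$ here, so the argument reduces verbatim to the classical one.

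In more detail I would proceed as follows. First I would unwind the definition of "$h$ satisfies $\pi_i$": it means there is an $r_\beta$-valid non-erasing $\tau$ with $\tau(\upgamma_i) = h(0\alpha_1 0)$ and $\tau(\updelta_i) = h(0\alpha_2 0)$. Then I would observe that the only variables appearing in $\upgamma_i\updelta_i$ for $i\in\{1,2\}$ are the fresh, unrelated variables listed above (together with the explicit terminal letters $0$ and $\#$, $\#^k$), so the $r_\beta$-validity requirement on $\tau$ restricted to these variables is vacuous --- there are no pairs $(r_{ab}, x, x')\in r_\beta$ with both $x,x'$ among them. Hence "$\tau$ is $r_\beta$-valid" can be dropped, and the predicates $\pi_1,\pi_2$ become exactly the membership conditions "$h(0\alpha_1 0)$ is of the form $(\text{word})\#^k(\text{word})$ and $h(0\alpha_2 0)$ starts and ends with $0$" (for $\pi_1$) and symmetrically for $\pi_2$. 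After that reduction, I would simply cite the chain of equivalences from \cite{BREMER201215}: $h(0\alpha_1 0)$ matches $y_{1,1}\#^k y_{1,2}$ with $h(0\alpha_2 0)$ of the form $0\hat y_1 0$ iff $h(0\alpha_1 0)$ contains a factor $\#^k$ (and using $k\geq 3$, this is detecting a $\#^3$ factor), and $h(0\alpha_1 0)$ matching $0y_2 0$ with $h(0\alpha_2 0)$ matching $\hat y_{2,1}\#\hat y_{2,2}$ iff $h(0\alpha_2 0)$ contains the letter $\#$ --- here using Lemma~\ref{lemma:ne-includ-t-props} (that $t$ and the $\psi(\hat\beta_i)$ begin and end with $0$) and the fact that $h(0\alpha_1 0), h(0\alpha_2 0)$ always begin and end with $0$ by construction, so the outer $0$'s in $\upgamma_2,\updelta_1$ are matchable. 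The "only if" direction is the contrapositive of the same equivalences.

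The main obstacle --- and it is a mild one --- is making fully precise that no abelian-equivalence constraint can sneak in through the global relation set $r_\beta$: one must check that the declared isolation of the $y$- and $\hat y$-variables, together with the fact that $\upgamma_i\updelta_i$ for $i\in\{1,2\}$ contain no $x_{j,k}$, $a$-, $b$-, or $r$-variables, genuinely makes the substitution $\tau$ on $\var(\upgamma_i\updelta_i)$ freely choosable. Once that isolation is spelled out, the rest is a direct transcription of the corresponding lemma in \cite{BREMER201215}, so I would keep the write-up short and defer to that reference for the combinatorial core, exactly as the surrounding proof already does for Lemmas~\ref{lemma:ne-includ-t-props}.
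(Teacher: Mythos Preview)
Your proposal is correct and follows the same approach as the paper: the paper's entire argument for this lemma is to observe that the variables in $\upgamma_1\updelta_1$ and $\upgamma_2\updelta_2$ are unrelated to any other variable, so abelian equivalence imposes nothing and the proof from \cite{BREMER201215} carries over verbatim. One small remark: your invocation of Lemma~\ref{lemma:ne-includ-t-props} is unnecessary here, since that lemma concerns $t$ and the $\psi(\hat\beta_i)$, which play no role in $\pi_1$ or $\pi_2$; the fact that $h(0\alpha_10)$ and $h(0\alpha_20)$ begin and end with $0$ is immediate from the literal terminal $0$'s in the pattern, as you also note.
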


The following key lemma allows us to focus only on predicates from now on and is independent of the shape of the specific
predicates $\pi_3$ to $\pi_\mu$.

\begin{lemma}\label{lemma:ne-includ-predicate-language-contaiment}
    For every non-erasing substitution $h \in H_{r_\alpha^+}$, we have that 
    $h(\alpha) \in L_{NE}(\beta,r_\beta)$ if and only if $h$ satisfies some predicate $\pi_1$ to $\pi_\mu$.
\end{lemma}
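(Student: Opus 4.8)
The plan is to prove both implications by transcribing the argument of Bremer and Freydenberger \cite{BREMER201215} for the equality relation into the relational setting; conceptually nothing new happens, and the only genuinely new work is to check that the weaker relation $r_{ab}$ still pins down exactly the structure that equality pinned down there. The single observation driving this is elementary: if $(r_{ab},x,y)\in r_\beta$ and some $r_\beta$-valid substitution maps $x$ to a single letter (here $0$ or $\#$), or more generally to a power of one letter, then it must map $y$ to the very same word, since on unary words abelian equivalence coincides with equality. So wherever the rigid terminal skeleton of $\alpha$ and $\beta$ forces a related variable onto a unary value, the constraints in $r_\beta$ behave exactly like the equalities used in \cite{BREMER201215}.

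For the ($\Leftarrow$) direction, assume $h\in H_{r_\alpha^+}$ satisfies $\pi_i$, so there is a non-erasing $r_\beta$-valid $\tau$ on $\var(\upgamma_i\updelta_i)$ with $\tau(\upgamma_i)=h(0\alpha_10)$ and $\tau(\updelta_i)=h(0\alpha_20)$. I would build the witness $g$ with $g(\beta)=h(\alpha)$ exactly as in \cite{BREMER201215}: it ``selects block $i$'' by putting $g(a_1)=g(a_2)=0^{\mu+1-i}$, $g(b_1)=g(b_2)=0^{i}$, $g(x_{i,k})=\#$ for all $k\in\{1,\dots,13\}$, $g(x_{j,k})=0$ for all $j\neq i$ and all $k$, $g=\psi$ on all remaining variables of the $\hat\beta_j$ with $j\neq i$ and on the $r_\ell$ with $\ell\notin\{i,i+1\}$, $g=\tau$ on $\var(\upgamma_i\updelta_i)$, and $g(r_i),g(r_{i+1})$ chosen as the appropriate $\psi$-images so that the outer terminal material collapses to the two copies of $t=\psi(r_1\hat\beta_1\cdots r_{\mu+1})$. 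The block-by-block check that $g(\beta)=h(\alpha)$ is then literally the computation of \cite{BREMER201215} (using $v=0\#^40$, the central chunk $v\,h(0\alpha_10)\,v\,h(0\alpha_20)\,v$ produced by $\hat\beta_i$, and $t$ for the two outer parts, invoking \Cref{lemma:ne-includ-t-props}). Finally $g$ is $r_\beta$-valid because it assigns the \emph{same} word to both members of every related pair ($g(a_1)=g(a_2)$, $g(b_1)=g(b_2)$, $g(x_{i,k})=g(x_{i,k'})$, $g(x_{j,k})=g(x_{j,k'})$), and $\tau$ is $r_\beta$-valid by hypothesis; equal words are in particular abelian equivalent.

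For the ($\Rightarrow$) direction, let $g$ be a non-erasing $r_\beta$-valid substitution with $g(\beta)=h(\alpha)$. If $h$ is of bad form, \Cref{lemma:ne-includ-bad-form} already yields that $h$ satisfies $\pi_1$ or $\pi_2$, so assume $h$ is not of bad form; then $h(0\alpha_10)$ avoids $\#^3$ and $h(0\alpha_20)$ avoids $\#$, and by \Cref{lemma:ne-includ-t-props} the word $h(\alpha)$ contains $\#^5$ exactly twice and has exactly three maximal $\#$-blocks of length $\geq 4$, each of length $4$ and lying inside the three copies of $v$. The standard delimiter analysis of \cite{BREMER201215} (which I would re-run, spelling it out only where $r_{ab}$ enters) then forces $g(a_1b_1)=0^{\mu+1}$, $g(a_2x_{1,1}\cdots x_{\mu,1}b_2)=0^{\mu}\#0^{\mu}$ and $g(r_1\hat\beta_1\cdots r_{\mu+1})=t\,v\,0h(\alpha_1)0\,v\,0h(\alpha_2)0\,v\,t$. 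The first equality makes $g(a_1),g(b_1)$ nonempty powers of $0$, hence by $r_{ab}$-validity $g(a_2)=g(a_1)=0^{p}$ and $g(b_2)=g(b_1)=0^{\mu+1-p}$ for some $1\le p\le\mu$; substituting into the second equality leaves $g(x_{1,1}\cdots x_{\mu,1})=0^{\mu-p}\#0^{p-1}$, a word of length $\mu$ shared among $\mu$ non-erasing variables, so each $g(x_{j,1})$ is a single letter, exactly one of which, say $g(x_{\ell,1})$, equals $\#$ while the rest equal $0$. By the key observation (a single letter has only itself in its $r_{ab}$-class) this forces $g(x_{\ell,k})=\#$ for all $k$ and $g(x_{j,k})=0$ for all $j\neq\ell$ and all $k$. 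From here the reasoning is identical to \cite{BREMER201215}: the blocks $\hat\beta_j$ with $j\neq\ell$ contribute no factor $\#^4$ and the free variables $r_j$ cannot either (each is flanked by $0$'s, so a $\#^4$ or $\#^5$ inside it would create a spurious block), so the three copies of $v$ must come from the three $0x_{\ell,\cdot}0$-factors of $\hat\beta_\ell$; this forces $g(\upgamma_\ell)=h(0\alpha_10)$ and $g(\updelta_\ell)=h(0\alpha_20)$, whence $\tau:=g|_{\var(\upgamma_\ell\updelta_\ell)}$ is a non-erasing $r_\beta$-valid substitution witnessing that $h$ satisfies $\pi_\ell$.

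The main obstacle is the ($\Rightarrow$) direction, and within it the faithful reproduction of the (already delicate) delimiter-alignment/case analysis of \cite{BREMER201215}: establishing the three-part decomposition above, and ruling out that a free variable $r_\ell$ or a stray $\#$ inside some $\upgamma_j,\updelta_j$ smuggles in an extra $\#^4$- or $\#^5$-block that would shift the alignment. This is the longest part of the original proof, which is why it lives in the appendix; the additional care needed here is only to carry the relations along and confirm that at each point where \cite{BREMER201215} used an equality $x=y$, our related variables have already been forced onto a unary or single-letter value, on which $r_{ab}$ collapses to $=$, so the original bounds and case distinctions transfer unchanged.
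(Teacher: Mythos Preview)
Your proposal is correct and follows essentially the same route as the paper's proof: the $(\Leftarrow)$ direction constructs the very same witness substitution (selecting block $i$ via $a_k\mapsto 0^{\mu+1-i}$, $b_k\mapsto 0^i$, $x_{i,k}\mapsto\#$, $x_{j,k}\mapsto 0$, $r_\ell\mapsto\psi(\cdots)$), and the $(\Rightarrow)$ direction is the same delimiter/counting analysis, with the only extra ingredient being exactly your observation that $r_{ab}$ collapses to equality on single letters and on powers of $0$, which is precisely what the paper relies on. Your parenthetical about the $r_j$'s being flanked by $0$'s is slightly more than the paper spells out (and not literally true for $r_1$), but the pure counting argument you also give---three $\#^4$-blocks on the left, three already contributed by the explicit $v$'s on the right, hence no room for more---is all that is needed, and is what the paper uses.
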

\begin{proof}
    Due to the different formalism, we need to rewrite certain parts of the proof from \cite{BREMER201215}.
    However, the general idea is the same. We just have to show that it also works for abelian equivalence.

    ($\Rightarrow: $) Assume some $h\in H_{r_\alpha^+}$ satisfies some predicate $\pi_i$ for $i\in\{1,...,\mu\}$.
    Then there exists a non-erasing $r_\beta$-valid substitution $\tau : (\var(\upgamma_i\updelta_i)\cup \Sigma)^* \rightarrow \Sigma^*$ with
    $\tau(\upgamma_i) = h(0\alpha_10)$ and $\tau(\updelta_i) = h(0\alpha_20)$.
    We extend $\tau$ to a non-erasing $r_\beta$-valid substitution $\tau' \in H_{r_\beta^+}$ by 
    \begin{enumerate}
        \item $\tau'(x) := \begin{cases} \tau(x) & \text{ for all } x\in \var(\upgamma_i\updelta_i) \\ 0 & \text{ for all } x \in \var(\upgamma_j\updelta_j) \text{ with } j \neq i \end{cases}$
        \item $\tau'(x_{j,k}) := \begin{cases} \# & \text{ for } j = i, k\in\N \\ 0 & \text{ for } j \neq i, k\in\N \end{cases}$
        \item $\tau'(r_j) := \begin{cases}  
                                \psi(r_i\hat{\beta}_i \cdots r_\mu\hat{\beta}_\mu r_{\mu+1}) & \text{ for } j = i \\
                                \psi(r_1\hat{\beta}_{1} \cdots r_i\hat{\beta}_ir_{i+1} & \text{ for } j = i+1) \\
                                0 & \text{ else, }
                             \end{cases}$
        \item $\tau'(a_k) := 0^{\mu-i+1}$ for $k\in\{1,2\}$, and 
        \item $\tau'(b_k) := 0^i$ for $k\in\{1,2\}$.
    \end{enumerate}
    Notice that none of the variables in one factor $\var(\upgamma_i\updelta_i)$ appear outside of  
    $\upgamma_i$ and $\updelta_i$. By that, we can always define $\tau'$ that way. Also note, that the only significant
    difference to \cite{BREMER201215} lies in the definition of $\tau'(x_{j,k})$, $\tau'(a_k)$, and $\tau'(b_k)$ as no 
    variables can occur multiple timesin relational patterns; the rest remains the same.

    As in \cite{BREMER201215} we obtain $\tau'(\upgamma_i) = \tau(\gamma_i) = h(0\alpha_10)$,
    $\tau'(\updelta_i) = \tau(\updelta_i) = h(0\alpha_1 0)$, and
    \begin{align*}
    \tau'(a_1b_1\#^5 a_2 x_{1,1}\cdots x_{\mu,1}b_2\#^5) & = 0^{\mu-i+1}0^i\#^50^{\mu-i+1}0^{i-1}\#0^{\mu-i}0^i\#^5 \\
                                                 & = 0^{\mu+1}\#^50^\mu\#0^\mu\#^5
    \end{align*}
    as well as 
    \begin{align*}
        \tau'(\hat{\beta}_i) & = \tau'(0 x_{i,2}\cdots x_{i,5}0\upgamma_i0x_{i,6}\cdots x_{i,9}0\updelta_i0x_{i,10}\cdots x_{i,13}0) \\
                             & = 0\#^40\tau'(\upgamma_i)0\#^40\tau'(\updelta_i)0\#^40 \\
                             & = v h(0\alpha_10) v h(0\alpha_20) v.
    \end{align*}
    Also, as $\tau'(x) = \psi(x)$ for all $x\in \var(\beta)_j$ for $j \neq i$, we get for all $j \neq i$ that $\tau'(\hat{\beta}_j) = \psi(\hat{\beta}_j)$
    and by that we finally obtain
    \begin{align*}
        \tau'(\beta) = &\ \tau'(a_1b_2\#^5a_2x_{1,1}\cdots x_{\mu,1}\#^5r_1\hat{\beta}_1r_2\cdots r_\mu\hat{\beta}_\mu r_{\mu+1}) \\
                     = &\ 0^{\mu+1}\#^50^\mu\#0^\mu\#^5 \tau'(r_1\hat{\beta}_1r_2\cdots r_\mu\hat{\beta}_\mu r_{\mu+1}) \\
                     = &\ 0^{\mu+1}\#^50^\mu\#0^\mu\#^5 \tau'(r_1\hat{\beta}_1r_2\cdots r_{i-1} \hat{\beta}_{i-1}) \tau'(r_i) \tau'(\hat{\beta}_i) \tau'(r_{i+1}) \\
                       &\ \tau'(\hat{\beta}_{i+1} \cdots r_\mu\hat{\beta}_\mu r_{\mu+1}) \\
                     = &\ 0^{\mu+1}\#^50^\mu\#0^\mu\#^5 \psi(r_1\hat{\beta}_1r_2\cdots r_{i-1} \hat{\beta}_{i-1}) \psi(r_i\hat{\beta}_i \cdots r_\mu \hat{\beta}_\mu r_{\mu+1}) \tau'(\hat{\beta}_i) \\
                       &\ \psi(r_1\hat{\beta}_1 r_2 \cdots r_i \hat{\beta}_i r_{i+1}) \psi(\hat{\beta}_{i+1} \cdots r_\mu\hat{\beta}_\mu r_{\mu+1}) \\
                     = &\ 0^{\mu+1}\#^50^\mu\#0^\mu\#^5 \psi(r_1\hat{\beta}_1r_2 \cdots r_\mu\hat{\beta}_\mu r_{\mu+1}) \tau'(\hat{\beta}_i) \psi(r_1\hat{\beta}_ir_2 \cdots r_\mu\hat{\beta}_\mu r_{\mu+1}) \\
                     = &\ 0^{\mu+1}\#^50^\mu\#0^\mu\#^5 t \tau'(\hat{\beta}_i) t \\
                     = &\ 0^{\mu+1}\#^50^\mu\#0^\mu\#^5 t v h(0\alpha_10) v h(0\alpha_20) v t \\
                     = &\ h(\alpha).
    \end{align*}
    By that we get $h(\alpha) \in L_{NE}(\beta,r_\beta)$.

    ($\Leftarrow$): Now, assume $h(\alpha) \in L_{NE}(\beta,r_\beta)$.
    In the case of $h$ being of bad form, we know by Lemma \ref{lemma:ne-includ-bad-form} that $h$ satisfies $\pi_1$ or $\pi_2$.
    So, assume $h(0\alpha0)$ does not contain $\#^3$ as a factor and that $h(0\alpha_20)$ has no occurrence of the letter $\#$, hence being
    of the form $h(0\alpha_20) \in 00^+0$. As $h(\alpha) \in L_{NE}(\beta,r_\beta)$, there exists some non-erasing $r_\beta$-valid substitution
    $\tau\in H_{r_\beta^+}$ such that $\tau(\beta) = h(\alpha)$.

    As $h$ is of good form and as $t$ begins and ends with $0$ and does not contain any occurrence of $\#^4$ by
    Lemma \ref{lemma:ne-includ-t-props}, we know that $h(\alpha)$ contains the factor $\#^5$ exactly twice.
    That is also the case for $\tau(\beta)$ as $h(\alpha) = \tau(\beta)$.
    Hence, we can decompose the equation $h(\alpha) = \tau(\beta)$ into the following system consisting of the equations
    \begin{align}
        0^{\mu+1} & = \tau(a_1b_1) \\
        0^\mu\#0^\mu & = \tau(a_2x_{1,1}\cdots x_{\mu,1}b_2) \\
        tvh(0\alpha_10)vh(0\alpha_20)vt & = \tau(r_1\hat{\beta}_1r_2\cdots r_\mu\hat{\beta}_\mu r_{\mu+1}).
    \end{align}
    We see that in equation $2$ each $x_{i,1}$ for $i\in\{1,...,\mu\}$ we have that it has to be substituted by a single symbol, 
    as otherwise equation $1$ would be unsatisfied,
    because $(r_{ab},a_1,a_2),(r_{ab},b_1,b_2)\in r_\beta$. Also, we get by that that $a_2$ and $b_2$ can also be substituted to strings consisting of
    only $0's$. So, for some single $j\in\{1,...,\mu\}$ we have $\tau(x_j) = \#$ and for all other $j' \neq j$ we have $\tau(x_j') = 0$.
    Now, as $(r_{ab},x_{j,k},x_{j,k'})\in r_\beta$ for $k,k'\in\{1,...,13\}$, we get for that specific $j$ that 
    \begin{align*}
        \tau(\hat{\beta}_j) &= \tau(0x_{j,2}\cdots x_{j,5}0\upgamma_i0x_{j,6}\cdots x_{j,9}0\updelta_j0x_{j,10}\cdots x_{j,13}0) \\
                            &= 0\#^40 \tau(\upgamma_i) 0\#^40 \tau(\updelta_i) 0\#^40 \\
                            &= v \tau(\upgamma_i) v \tau(\updelta_j) v.
    \end{align*}
    This can be used to convert the right side of equation $3$ in the following manner.
    \begin{align*}
          &\ \tau(r_1\hat{\beta}_1r_2\cdots r_\mu\hat{\beta}_\mu r_{\mu+1}) \\
        = &\ \tau(r_1\hat{\beta}_1r_2\cdots r_{j-1}\hat{\beta}_{j-1}r_j)\tau(\hat{\beta}_j)\tau(r_{j+1}\hat{\beta}_{j+1}\cdots r_\mu\hat{\beta}_\mu r_{\mu+1}) \\
        = &\ \tau(r_1\hat{\beta}_1r_2\cdots r_{j-1}\hat{\beta}_{j-1}r_j)v \tau(\upgamma_i) v \tau(\updelta_j) v\tau(r_{j+1}\hat{\beta}_{j+1}\cdots r_\mu\hat{\beta}_\mu r_{\mu+1}) \\
    \end{align*}
    This results in the following adaptation of equation $3$.
    \begin{align*}
          &\ tvh(0\alpha_10)vh(0\alpha_20)vt \\
        = &\ \tau(r_1\hat{\beta}_1r_2\cdots r_{j-1}\hat{\beta}_{j-1}r_j)v \tau(\upgamma_i) v \tau(\updelta_j) v\tau(r_{j+1}\hat{\beta}_{j+1}\cdots r_\mu\hat{\beta}_\mu r_{\mu+1})
    \end{align*}
    We know that $h$ is of good form and we know that $t$ does not contain any factor $\#^4$.
    So, the left side of the equation contains exactly $3$ occurrences of $\#^4$ as a factor, that is, in each occurrernce of one $v$.
    The right side of the equation also contains exactly three occurrences of $v$ which themselves make up $3$ occurrences of $\#^4$.
    Thus, we obtain the following system of equations.
    \begin{align*}
        t &= \tau(r_1\hat{\beta}_1\cdots r_{j-1}\hat{\beta}_{j-1}r_j) \\
        h(0\alpha_10) &= \tau(\upgamma_j) \\
        h(0\alpha_20) &= \tau(\updelta_j) \\
        t &= \tau(r_{j+1}\hat{\beta}_{j+1}\cdots r_\mu\hat{\beta}_\mu r_{\mu+1})
    \end{align*}
    As these equations must hold and especially as $h(0\alpha_10) = \tau(\upgamma_j)$ and $h(0\alpha_20) = \tau(\updelta_j)$, we know
    that $h$ satisfies the predicate $\pi_j$.
\end{proof}    

By that, we can now select appropriate predicates $\pi_1$ to $\pi_\mu$ such that we obtain  
\[L_{NE}(\alpha,r_\alpha) \setminus L_{NE}(\beta,r_\beta) = \emptyset\]
if and only if $\ValC_U(I) = \emptyset$. Lemma \ref{lemma:ne-includ-predicate-language-contaiment} shows us that if
for some $h\in H_{r_\alpha^+}$ and $\tau\in H_{r_\beta^+}$ with $h(\alpha) = \tau(\beta)$ we have that
$h$ is of good form, then we can pick exactly one $i\in\{1,...,\mu\}$ such that we fulfill $\tau(0x_{i,j}\cdots x_{i,j+3}0) = 0\#^40 = v$ for $j\in\{2,6,9\}$.

For technical reasons, we define a predicate that sets a lower bound to the length of $h(\alpha_2)$ if unsatisfied for a substitution $h\in H_{r_\alpha^+}$
of good form. As in \cite{BREMER201215}, the predicate $\pi_3$ is defined by
\begin{align*}
    \upgamma_3 &:= y_{3,1}\hat{y}_{3,1}y_{3,2}\hat{y}_{3,2}y_{3,3}\hat{y}_{3,3}y_{3,4} \\
    \updelta_3 &:= 0\hat{y}_{3,1}\hat{y}_{3,2}\hat{y}_{3,3}
\end{align*}
for pairwise non-related variables $y_{3,1},...,y_{3,4}$ and $\hat{y}_{3,1},...,\hat{3,3}$.
As in \cite{BREMER201215}, if some $h\in H_{r_\alpha^+}$ satisfies $\pi_3$, then $h(\alpha_2)$ is a concatenation of three non-empty factors
of $h(\alpha_1)$. In other words, if some $h$ does not satisfy any of $\pi_1$,$\pi_2$, and $\pi_3$,
then $h(\alpha_2)\in 0^+$ has to be longer than the $3$ longest non-overlapping and non-touching factors of $0$'s in $h(\alpha_1)$.
This restriction allows for a simpler way to construct all other predicates $\pi_4$ to $\pi_\mu$.

As we will see, the next part regarding simple predicates is also very similar to the one defined and used in \cite{BREMER201215} but
had to be adapted formally to also work in the context of relational patterns.
Let $X_s := \{\hat{x}_{1,1},\hat{x}_{2,1},\hat{x}_{3,1},\hat{x}_{1,2},\hat{x}_{2,2},\hat{x}_{3,2}, ...\} \subset X$
and always assume that $(r_{ab},\hat{x}_{i,j},\hat{x}_{i,j'})\in r_p$ for $i\in\{1,2,3\}$ and $j,j'\in\N$ in any set of relational constraints $r_p$ of any relational pattern $(p,r_p)\in \relpat$
if any variable of $X_s$ is used in $p$.
Let $G_{r_\alpha^+}\subset H_{r_\alpha^+}$ denote the subset of $r_\alpha$-valid substitutions that are of good form
and let $S$ be the set of all non-erasing substitutions $s : (\Sigma \cup X_s)^* \rightarrow \Sigma^*$ for which $s(\hat{x}_{i,j})\in \{0\}^+$, for
all $i\in\{1,2,3\}$ and $j\in\N$, and for which we have $s(\hat{x}_{i,j}) = s(\hat{x}_{i',j'})$ if $i = i'$ for $i,i'\in\{1,2,3\}$ and $j,j'\in\N$.
For any relational pattern $(p_s,r_{p_s})$ with $p_s\in(\Sigma\cup X_s)^*\cap\relpat$ we define $S(p_s,r_{p_s}) := \{s(p_s)|s \in S\}$.
The next definition is exactly the same as in \cite{BREMER201215}, just adapted to the notation of relational patterns.

\begin{definition}\cite{BREMER201215}
    A predicate $\pi : G_{r_\alpha^+} \rightarrow \{0,1\}$ is called a NE-simple predicate for $0\alpha_10$ if there exists a relational
    pattern $(p_s,r_{p_s})$ with $p_s\in(\Sigma\cup X_s)^*\cap\relpat$ and $(r_{ab},\hat{x}_{i,j},\hat{x}_{i,j'})\in r_{p_s}$ for $i\in\{1,2,3\}$ and $j,j'\in\N$
    and languages $L_1\in\{0\Sigma^*,\{0\}\}$ and $L_2\in\{\Sigma^*0,\{0\}\}$ such that any non-erasing substitution $h\in H_{r_\alpha^+}$ satisfies
    $\pi$ if and only if $h(0\alpha_10) \in L_1 S(p_s,r_{p_s}) L_2$.
    If $L_1 = 0\Sigma^*$ and $L_2 = \Sigma^*0$, then we call $\pi$ an infix-predicate. If $L_1 = \{0\}$ and $L_2 = \Sigma^*0$, then we call
    $\pi$ a prefix-predicate in if its the other way around, we call $\pi$ a suffix predicate.
\end{definition}

We understand elements of $X_s$ as numerical paramters that describe (concatenative) powers of $0$ with non-erasing substitutions
$s : (\Sigma \cup X_s)^* \rightarrow \Sigma^*$ acting as assignments. The next lemma is the final step we need to take
in order to finish the adaptation of the original proof to the new case. It shows that for each simple predicate $\pi_s$, we
can construct a predicate $\pi$ defined over some $\upgamma$ and $\updelta$ such that each substitution of good form that
would have satisfied $\pi_s$ now satisfies either $\pi_3$ or $\pi_s$ which suffices the demands of the construction.
As we can see, the argument is again very similar to the one in \cite{BREMER201215} and had to be adapted mainly to the new formalism.

\begin{lemma}\label{lemma:ne-includ-simple-pi}
    For every NE-simple predicate $\pi_s$, there exists a predicate $\pi$ defined by relational patterns $(\upgamma,r_{\upgamma})$ and $(\updelta,r_{\updelta})$
    which can be embedded in $(\beta,r_\beta)$ such that for all non-erasing substitutions $h\in G_{r_\alpha^+}$ we have
    \begin{enumerate}
        \item if $h$ satisfies $\pi_s$, then $h$ also satisfies $\pi_3$ or the newly constructed $\pi$, and
        \item if $h$ satisfied the newly constructed $\pi$, then $h$ also satisfies $\pi_s$.
    \end{enumerate}
\end{lemma}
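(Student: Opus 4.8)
The idea is to transcribe the construction from \cite{BREMER201215} that turns a simple predicate into an embeddable pair of patterns, the sole modification being that every group of occurrences of a single variable is replaced by a family of distinct variables tied together by $r_{ab}$-constraints; the $X_s$-variables already carry their $r_{ab}$-constraints by the standing convention, and all remaining variables are fresh and related only among themselves, so the resulting $(\upgamma,r_\upgamma)$ and $(\updelta,r_\updelta)$ can legitimately be used as one of the pairs $(\upgamma_i,\updelta_i)$ in $\beta$. Concretely, from the defining data $(p_s,r_{p_s})$, $L_1\in\{0\Sigma^*,\{0\}\}$, $L_2\in\{\Sigma^*0,\{0\}\}$ of $\pi_s$ one builds $\upgamma$ as a copy of $p_s$ framed by two boundary variables realizing $L_1$ and $L_2$, and $\updelta$ as a word built from the terminal $0$, one dedicated ``slack'' variable, and the $r_{ab}$-partners of one representative of each $X_s$-class occurring in $p_s$; the $r_{ab}$-links between the $\updelta$-copies and the $\upgamma$-copies of these representatives play the role of the equalities exploited in \cite{BREMER201215}.

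The only genuinely new point -- and where the work lies -- is to check that for substitutions of good form these $r_{ab}$-constraints are no weaker than equality. This rests on the following bridging observation: if $h\in G_{r_\alpha^+}$ and $\tau\in H_{r_\beta^+}$ witnesses $\pi$ (that is, $\tau(\upgamma)=h(0\alpha_10)$ and $\tau(\updelta)=h(0\alpha_20)$), then, since $h$ is of good form, $h(0\alpha_20)$ is a non-empty power of $0$, so $\tau$ sends every variable occurring in $\updelta$ into $\{0\}^+$; because $\updelta$ touches one representative of each $X_s$-class present in $p_s$ and $u\equiv_{ab}v\in\{0\}^*$ forces $u\in\{0\}^*$, every $X_s$-variable and, more generally, every variable carrying an $r_{ab}$-constraint is then sent into $\{0\}^+$. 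On $\{0\}^*$ abelian equivalence is the identity, so each $r_{ab}$-constraint collapses to an equality of lengths, while the unrelated boundary and slack variables behave exactly as the free variables in \cite{BREMER201215}. Hence a relational witness for $\pi$ restricts, on the constrained variables, to a witness in the sense of \cite{BREMER201215}, and conversely.

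With this bridge in place both implications are transcriptions of \cite{BREMER201215}. For ``$\pi\Rightarrow\pi_s$'': from a $\tau$ witnessing $\pi$ put $s(\hat x_{i,j}):=\tau(\hat x_{i,j})$ -- well-defined and an element of $S$ because the $r_{ab}$-links are length-equalities on powers of $0$ -- and read $\ell_1\in L_1$, $\ell_2\in L_2$ off the boundary variables, so that $h(0\alpha_10)=\tau(\upgamma)=\ell_1\,s(p_s)\,\ell_2\in L_1\,S(p_s,r_{p_s})\,L_2$, i.e.\ $h$ satisfies $\pi_s$. For ``$\pi_s\Rightarrow\pi_3\vee\pi$'': given $h\in G_{r_\alpha^+}$ satisfying $\pi_s$, write $h(0\alpha_10)=\ell_1\,s(p_s)\,\ell_2$ with $\ell_1\in L_1$, $\ell_2\in L_2$, $s\in S$; if $h$ satisfies $\pi_3$ we are done, and otherwise $h\in G_{r_\alpha^+}$ (so $\neg\pi_1$ and $\neg\pi_2$) together with $\neg\pi_3$ forces $|h(\alpha_2)|$ to exceed the combined length of the three longest pairwise non-overlapping, non-touching blocks of $0$'s in $h(\alpha_1)$ -- in particular to exceed the length demanded by the literal $0$'s and the $X_s$-links of $\updelta$ -- leaving room for the slack variable of $\updelta$ to receive a non-empty power of $0$. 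Defining $\tau$ by the $s$-values on the $X_s$-variables, $\ell_1,\ell_2$ on the boundary variables, the matching powers of $0$ on the $\updelta$-links, and the leftover $0$'s on the slack variable, one checks that $\tau$ is $r_\beta$-valid (every $r_{ab}$-constrained variable receives a power of $0$, where $\equiv_{ab}$ is trivial) and that $\tau(\upgamma)=h(0\alpha_10)$ and $\tau(\updelta)=h(0\alpha_20)$, so $h$ satisfies $\pi$.

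The main obstacle is precisely the bridging step: one must argue that ``good form'' really does pin every $r_{ab}$-constrained variable -- on both the $\upgamma$ and the $\updelta$ side -- to a power of $0$, so that $\equiv_{ab}$ degenerates to equality there, while keeping the boundary and slack variables genuinely unrelated so that the $\pi_3$ length bookkeeping of \cite{BREMER201215} carries over verbatim. Everything past that point is a routine transcription of the original argument.
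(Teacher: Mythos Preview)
Your proposal is correct and follows essentially the same approach as the paper's proof: build $\upgamma$ as $p_s$ framed by boundary variables and $\updelta$ from the terminal $0$, a slack variable $\hat y$, and one $r_{ab}$-linked representative of each $X_s$-class in $p_s$, then argue the two implications exactly as in \cite{BREMER201215}. Your explicit ``bridging observation'' --- that good form forces every variable in $\updelta$, and hence by the $r_{ab}$-links every $X_s$-variable in $\upgamma$, into $\{0\}^+$ where $\equiv_{ab}$ collapses to equality --- is precisely the point the paper singles out as the reason the original construction carries over, so the two arguments coincide.
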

\begin{proof}
    First, consider the case of $L_1 = 0\Sigma^*$ and $L_2 = \Sigma^*0$.
    Assume $\pi_s$ is a NE-simple predicate and that $(p_s,r_{p_s})$ is a relational pattern with $p_s\in(\Sigma\cup X_s)^*\cap\relpat$ and
    $r_{p_s} := \{(r_{ab},\hat{x}_{i,j},\hat{x}_{i,j'}) | \hat{x}_{i,j},\hat{x}_{i,j'}\in X_s; i\in\{1,2,3\}; j,j'\in\N\}$ such that $h\in G_{r_\alpha^+}$
    satisfies $\pi_s$ if and only if $h(0\alpha_10) \in L_1S(p_s,r_{p_s})L_2$.
    Then we can define the relational patterns $(\upgamma,r_\upgamma)$ by 
    $\upgamma = y_1p_sy_2$ for two new and to no other variable related variables $y_1$ and $y_2$
    and set $r_\upgamma = r_{p_s}$. Additionally, we define the relational pattern $(\updelta,r_\updelta)$ by
    $\updelta = 0\theta\hat{y}0$ where $\theta := \{ \hat{i,y} | \exists\hat{x}_{i,j}\in\var(p_s) \text{ and } y = |\var(p_s)|+1 \}$ and
    $\hat{y}$ is a new to no other variable related variable, and by setting $r_\updelta = r_\upgamma$ (notice that for as long as we do not merge 
    the relational pattern $(\updelta,r_\updelta)$ with any other relational pattern, that no variable in $\updelta$ is related to each other,
    however, if we merge $(\updelta,r_\updelta)$ with $(\upgamma,r_\upgamma)$, then the corresponding $\hat{x}_{i,j}$ variables become related to
    each other. Also notice, that each variable in either of those two relational patterns will not be related to any other variable in $\beta$,
    allowing for them to be directly embedded into $(\beta,r_\beta)$ without further steps).

    (1.) Assume that $h\in G_{r_\alpha^+}$ satisfies $\pi_s$.
    Then there exist words $w_1\in 0\Sigma^*$ and $w_2\in\Sigma^*0$ and a non-erasing substitution $s\in S$ such that
    $h(0\alpha_10) = w_1s(p_s)w_2$.
    If $h(\alpha)_2$ is shorter than the $3$ longest non-overlapping and non-touching factors of the form $0^+$ in $h(\alpha_1)$ concatenated,
    then $\pi_3$ is satisfied.
    If that is not the case, we define a substitution $\tau$ by setting $\tau(y_1) := w_1$, $\tau(y_2) = w_2$ and $\tau(\hat{x}_{i,j}) := s(\hat{x}_{i,j})$
    for all $i\in\{1,2,3\}$ and $j\in\N$. We also set $\tau(\hat{y}) = 0^m$ where
    $$ m := |h(\alpha_2)| - \sum_{i\in\{1,2,3\}}\begin{cases}\tau(\hat{x}_{i,j}) & \text{ if there exists some } x_{i,j}\in\var(p_s) \text{ f.s. } j\in\N \\ 0 & \text{ otherwise. } \end{cases}$$
    As $h$ does not satisfy $\pi_3$, we have $m > 0$. We obtain
    \begin{align*}
        \tau(\upgamma) &= \tau(y_1)\tau(p_s)\tau(y_2) = w_1s(p_s)w_2 = h(0\alpha_1 0) \\
        \tau(\updelta) &= 00^{|h(\alpha_2)|}0 = h(0\alpha_20).
    \end{align*}
    By that, we see that $h$ also satisfies the constructed predicate $\pi$.

    (2.) Now assume that $h\in G_{r_\alpha^+}$ satisfies the constructed predicate $\pi$ (assuming it is embedded in $\beta$).
    Hence, there exists a non-erasing substitution $\tau\in H_{r_\beta^+}$ with $h(0\alpha_10) = \tau(\upgamma)$ and $h(0\alpha_20) = \tau(\updelta)$.
    We get by the construction that $\tau(y_1) \in 0\Sigma^*$ and $\tau(y_2)\in \Sigma^*0$. Now define $s(\hat{x}_{i,j}) = \tau(\hat{x}_{i,j})$ for all 
    $\hat{x}_{i,j} \in \var(\updelta)$.
    We see that $h(0\alpha_10) \in L_1S(p_s,v_R)L_2$ holds.
    So, $h$ must satisfy the simple predicate $\pi_s$.

    The other cases, where $L_1$ or $L_2$ is chosen to be the language containing only the word $0$, are handled analogously by omitting $y_1$ or $y_2$
    and keeping the rest of the construction.
\end{proof}

So, in general, if $h$ is of good form but does not satisfy $\pi_3$, then $h(\alpha_2)\in 0^+$ is long enough to provide building
blocks for NE-simple predicates that use variables from $X_s$.

Lemma \ref{lemma:ne-includ-simple-pi} shows us that the construction of NE-simple predicates in \cite{BREMER201215} can be used in the
setting of relational patterns under abelian equivalence in exactly the same way. Mainly, this is caused by the fact that each variable $\hat{x}_{i,j}\in X_s$
is substituted by a unary string. By that, each related variable has to be substituted exactly in the same way as they are substituted in \cite{BREMER201215}, i.e. equally.
Having shown that the same simple predicates from \cite{BREMER201215} also work in this setting in exactly the same way, we define $\pi_4$ to $\pi_\mu$
exactly as they are defined in \cite{BREMER201215}, as all of those predicates are given described as simple predicates.

We set $\alpha_1 := \#\#enc_{NE}(I)\#\#x\#0^60^10\#\#$ and $\alpha_2 := y00$ for pairwise distinct variables $x$ and $y$ (notice, that $x$ and $y$
are the only variables in the pattern $\alpha$, hence for $(\alpha,r_\alpha)$ we have $r_\alpha = \emptyset$).
By the construction of $\pi_1$ to $\pi_\mu$ and essentially the same argument over the embedding of all invalid computations in $(\beta,r_\beta)$ from \cite{BREMER201215},
we obtain that $L_{NE}(\alpha,r_\alpha) \setminus L_{NE}(\beta,r_\beta) = \emptyset$ if and only if $\ValC_U(I) = \emptyset$. This concludes the case $|\Sigma| = 2$.

For the case of larger alphabets, i.e. alphabets of size $|\Sigma| \geq 3$, the exact construction in \cite{BREMER201215} can also be adapted to the formality of our
setting and works by the same arguments as in \cite{BREMER201215}. The relational patterns $(\alpha,r_\alpha)$ and $(\beta,r_\beta)$ need to be slightly
adapted to work with larger alphabets. Essentially, suffixes to both the patterns $\alpha$ and $\beta$ are added to allow for another new variable $\hat{x}$ to be mapped 
to any letter of $\Sigma$ (similar to how we ensured that one variable $x_{i,j}$ is mapped to $\#$). Then, $2$ new predicates are added to ensure that
if this letter occurs in any substitution of $h(\alpha_1)$ or $h(\alpha_2)$, then one of these predicates is satisfied. 

Assume $\Sigma = \{\ta_1, \ta_2, ... , \ta_\sigma\}$ with $\sigma \geq 3$.
The adapted relational patterns $(\tilde{\alpha},r_{\tilde{\alpha}})$ and $(\tilde{\beta},r_{\tilde{\beta}})$ are defined by
\begin{align*}
    \tilde{\alpha} &:= \alpha \#^5 0 \ta_1 \ta_2 \cdots \ta_\sigma 0 \#^5 0 \ta_1 \ta_2 \cdots \ta_\sigma 0 \#^5, \\
    \tilde{\beta} &= \beta \#^5 \tilde{y}_1 \tilde{x}_{1,1} \tilde{z} \#^5 \tilde{y}_2 \tilde{x}_{2,1} \tilde{z}_2 \#^5
\end{align*}
for new and pairwise non-related variables $\tilde{y}_1, \tilde{y}_2, \tilde{x}_{1,1}, \tilde{x}_{2,1}, \tilde{z}_1, \tilde{z_2}$.
We mention here that also the variables $\tilde{x}_{1,2}$ and $\tilde{x}_{2,2}$ will occur in $\tilde{\beta}$. For those, we
have $(r_{ab},\tilde{x}_{1,1},\tilde{x}_{1,2}),(r_{ab},\tilde{x}_{2,1},\tilde{x}_{2,2})\in 
r_{\tilde{\beta}}$.
To work with the new construction and the two new predicates defined below, 
we need an extended version of the definition of $\psi : (\var{\hat{\beta}_1...\hat{\beta}_{\mu+2}}) \rightarrow \Sigma^*$.
Now, we also have $\psi(\tilde{x}_1) = \psi(\tilde{x}_2) = \ta_1 \cdots \ta_\sigma$ and $\psi(x) = 0$ for all $x\in\var(\hat{\beta}_1 \cdots \hat{\beta}_{\mu+1})\setminus\{\tilde{x}_1,\tilde{x}_2\}$.
Finally, we define an adapted version of the two new predicates $\pi_{\mu+1}$ and $\pi_{\mu+2}$ by

\addtolength{\tabcolsep}{4pt}    
\vspace{3mm}
\begin{tabular}{l l}
    $\upgamma_{\mu+1} := y_{\mu+1,1}\ \tilde{x}_{1,2}\ y_{\mu+1,2}$  & $\upgamma_2 := 0\ y_{\mu+2}\ 0$ \\
    $\updelta_{\mu+1} := 0\ \hat{y}_{\mu+1}\ 0$       & $\updelta_2 := \hat{y}_{\mu+2,1}\ \tilde{x}_{2,2}\ \hat{y}_{\mu+2,2}$
\end{tabular}
\vspace{3mm}
\addtolength{\tabcolsep}{-4pt}

where $y_{\mu+1,1}, y_{\mu+1,2}, y_{\mu+2}, \hat{y}_{\mu+1}, \hat{y}_{\mu+2,1},$ and $\hat{y}_{\mu+2,2}$ are new and pairwise distinct and unrelated variables.
Remember, that we have $(r_{ab},\tilde{x}_{1,1},\tilde{x}_{1,2})\in r_{\tilde{\beta}}$ and $(r_{ab},\tilde{x}_{2,1},\tilde{x}_{2,2})\in r_{\tilde{\beta}}$. Now, using these constructions,
we obtain the exact argument as in \cite{BREMER201215} to obtain the final result. The main argument why this construction can be used by the same
argument is that it works over unary substitutions of $\tilde{x}_{1,1}, \tilde{x}_{1,2}, \tilde{x}_{2,1},$ and $\tilde{x}_{2,2}$ of length $1$.
Hence, using substitutions of this type, abelian equivalence behaves exactly the same as equality between variables.
Thus, we omit the details at this point and refer to \cite{BREMER201215} for a more detailed proof.

This concludes the case of larger alphabets of size $|\Sigma|\geq 3$. We obtain undecidability of the inclusion problem for non-erasing relational pattern languages under abelian equivalence for alphabets of size $|\Sigma|\geq 2$.
\end{proof}

\subsection{Proof of Lemma~\ref{lemma:empty-together-poly-inclusion}}
\label{proof:lemma:empty-together-poly-inclusion}
\begin{proof}
    The proof relies on the proof of \Cref{lemma:empty-together-poly-equivalence}.
    
    \textbf{(1:)} As shown in the first part of proof of \Cref{lemma:empty-together-poly-equivalence}, the language of accepted words of $\alpha$ is
        \begin{align*}
            L_{E}(\alpha,r_\alpha) = \{\varepsilon\} \cup \{w \mid |w| \geq \min_{C_{\alpha,i} \in \mathcal{C}_\alpha}|C_{\alpha,i}|\}
        \end{align*}
        and the language of accepted words of $\beta$ is 
        \begin{align*}
            L_{E}(\beta,r_\beta) = \{\varepsilon\} \cup \{w \mid |w| \geq \min_{C_{\beta,i} \in \mathcal{C}_\beta}|C_{\beta,i}|\}.
        \end{align*}
        Thus, $L_{E}(\alpha,r_\alpha) \subseteq L_{E}(\beta,r_\beta)$ if and only if $\min_{C_{\alpha,i} \in \mathcal{C}_\alpha}|C_{\alpha,i}| \geq \min_{C_{\beta,i} \in \mathcal{C}_\beta}|C_{\beta,i}|$.
        This property can be tested in time $O(n_\alpha + n_\beta)$ by scanning once over the patterns and counting the sizes of the equivalence classes.

    \textbf{(2:)} Assume $|\Sigma| = 1$ and $a \in \Sigma$. In order to be included, $\alpha$ must have at least as many terminals as $\beta$. Otherwise we could set all variables in $\alpha$ to $\varepsilon$ and obtain a word that is not in the language of $\beta$. Let $\ell_\alpha$ be the number of terminals in $\alpha$ and $\ell_\beta$ be the number of terminals in $\beta$.
        As shown in the second part of proof of \Cref{lemma:empty-together-poly-equivalence}, the language of accepted words of $\alpha$ is
        \begin{align*}
            L_{E}(\alpha,r_\alpha) = \{a^{\ell_\alpha}\} \cup \{w \mid |w| \geq \ell_\alpha + \min_{C_{\alpha,i} \in \mathcal{C}_\alpha}|C_{\alpha,i}|\}
        \end{align*}
        and the language of accepted words of $\beta$ is 
        \begin{align*}
            L_{E}(\beta,r_\beta) = \{a^{\ell_\beta}\} \cup \{w \mid |w| \geq \ell_\beta + \min_{C_{\beta,i} \in \mathcal{C}_\beta}|C_{\beta,i}|\}.
        \end{align*}
        Thus, $L_{E}(\alpha,r_\alpha) \subseteq L_{E}(\beta,r_\beta)$ if and only $\ell_\alpha \geq \ell_\beta$ and $\min_{C_{\alpha,i} \in \mathcal{C}_\alpha}|C_{\alpha,i}| \geq \min_{C_{\beta,i} \in \mathcal{C}_\beta}|C_{\beta,i}|$. This can be tested in time $O(n_\alpha + n_\beta)$ by scanning once over the patterns and counting the sizes of the equivalence classes.
\end{proof}

\subsection{Proof of Theorem~\ref{thm:uirev}}
\label{appendix:proof-inclusion-erasing-rev}

\subsubsection{Definition of Nondeterministic 2-Counter Automata Without Input}
\label{appendix:def-n2cmwi}
A \emph{nondeterministic 2-counter automaton without input} (see e.g. \cite{Iberra1978}) is a 4-tuple $A = (Q,\delta,q_0,F)$ which consists of a set of states $Q$, a transition function $\delta : Q \times \{0,1\}^2 \rightarrow \mathcal{P}(Q \times \{\-1,0,+1\}^2)$, an initial state $q_0\in Q$, and a set of accepting states $F \subseteq Q$. A \emph{configuration} of $A$ is defined as a triple $(q,m_1,m_2)\in Q\times\N\times\N$ in which $q$ indicates the current state and $m_1$ and $m_2$ indicate the contents of the first and second counter. We define the relation $\vdash_A$ on $Q\times\N\times\N$ by $\delta$ as follows. For two configurations $(p,m_1,m_2)$ and $(q,n_1,n_2)$ we say that $(p,m_1,m_2) \vdash_A (q,n_1,n_2)$ if and only if there exist $c_1,c_2\in\{0,1\}$ and $r_1,r_2\in\{-1,0,+1\}$ such that
\begin{enumerate}
	\item if $m_i = 0$ then $c_i = 0$, otherwise if $m_i > 0$, then $c_i = 1$, for $i\in\{1,2\}$,
	\item $n_i = m_i + r_i$ for $i\in\{1,2\}$,
	\item $(q,r_1,r_2)\in \delta(p,c_1,c_2)$, and
	\item we assume if $c_i = 0$ then $r_i \neq -1$ for $i\in\{1,2\}$.
\end{enumerate}
Essentially, the machine checks in every state whether the counters equal $0$ and then changes the value of each counter by at most one per transition before entering a new state. A \emph{computation} is a sequence of configurations. An \emph{accepting computation} of $A$ is a sequence $C_1,...,C_n\in (Q\times\N\times\N)^n$ with $C_1 = (q_0,0,0)$, $C_i \vdash_A C_{i+1}$ for all $i\in\{1,...,n-1\}$, and $C_n\in F\times\N\times\N$ for some $n\in\N$ with $n > 0$.

We \emph{encode} configurations of $A$ by assuming $Q = \{q_0,...,q_e\}$ for some $e\in\N$ and defining a function $\enc$ $:$ $Q\times\N\times\N \rightarrow \{0,\#\}^*$ by 
$$ \enc(q_i,m_1,m_2) := 0^{x+i}\#0^{c_1 + y_2\cdot m_1}\#0^{c_2 + y_2\cdot m_2} $$
for some numbers $x,c_1,y_2,c_1,y_2\in\N$. The values for these numbers depend on the construction of the respective proofs and are not specified here. This is extended to encodings of computations by defining for every $n\geq 1$ and every sequence $C_1,...,C_n\in Q\times\N\times\N$
$$ \enc(C_1,...,C_n) := \#\#\ \enc(C_1)\ \#\#\ ...\ \#\#\ \enc(C_n)\ \#\#. $$
Encodings of this kind are used to prove different undecidability results regarding the inclusion problem for erasing relational pattern languages for various considered relations, in particular the results in Subsection~\ref{subsec:inc-era}. 
For some nondeterministic 2-counter automaton without input $A$, define the set of encodings of accepting computations 
$$\mathtt{ValC}(A) := \{\enc(C_1,...,C_n)\ |\ C_1,...,C_n \text{ is an accepting computation of } A\}$$
and let $\mathtt{InvalC}(A) = \{0,\#\}^*\setminus \mathtt{ValC}(A)$. 
The emptiness problem for deterministic 2-counter-automata is undecidable (cf. e.g. \cite{Iberra1978,Minsky1961}), thus it is also undecidable whether a nondeterministic 2-counter automaton without input has an accepting computation \cite{Freydenberger2010, Jiang1995}. That the emptiness problem for universal Turing machines is undecidable is a known fact.

\subsubsection{Proof of the Case $r_{rev}$}
\begin{proof}
    The proof is based on the construction by Freydenberger and Reidenbach~\cite{Freydenberger2010}. For better understanding for those who don't know the construction and for a really detailed verification, we copied the proof of Theorem 2 of~\cite{Freydenberger2010} and adapted it at several points. The main difference is the definition of $u$ and the definition of the delimiter. Why these adaptations work is shorter described in the proof of the case $r_{ab}$ for those who already are familiar with the construction. 

    We begin with the case $|\Sigma| = 2$, so let $\Sigma :=\{0, \#\}$. Let $A := (Q, \delta, q_0, F)$ be a nondeterministic 2-counter automaton; w.l.o.g.\ let $Q :=\{q_0, \ldots, q_s\}$ for some $s \in \mathbb{N}_0$. Our goal is to construct patterns $\alpha_A, \beta_A \in \relpat$ such that $L_E(\alpha_A,r_{\alpha_A}) \subseteq L_E(\beta_A, r_{\beta_A})$ if and only if $\mathtt{ValC}(A) = \emptyset$. We note, here, that in the remaining parts of this proof, we sometimes use $(x,y)\in R$ instead of $(r_{rev},x,y)\in r_{\beta_A}$. This is done for readability reasons and due to the fact that the relational pattern $(\alpha,r_{\alpha_A})$ does not use any relational constraints at all. We define
    $$\alpha_A := v\ v\ \#^6\ v\ x\ v\ y\ v\ \#^6\ v\ u\ v$$
    where $x,y$ are distinct variables, $v = 0\#^30$ and $u = 0\#^50$. Furthermore, for a yet unspecified $\mu \in \mathbb{N}$ that shall be defined later, let 
    \[ \beta_A:=x_1x_{1'}\ \ldots x_{\mu}x_{\mu'}\#^6\hat{\beta}_1\ldots\hat{\beta}_{\mu}\#^6\ddot{\beta_1}\ldots\ddot{\beta_{\mu}}, \]
    with, for all $i \in \{1, \ldots, \mu\}$, $\hat{\beta}_i := x_{i_1}\ \gamma_i\ x_{i_2}\ \delta_i\ x_{i_3}$ and $\ddot{\beta_i} := x_{i_4}\ \eta_i\ x_{i_5}$, where $x_1$, $x_{1'}$, $x_{1_1}$, $x_{1_2}$ ,$x_{1_3}$ ,$x_{1_4}$ ,$x_{1_5}$, $\ldots$, $x_{\mu}$, $x_{\mu'}$, $x_{\mu_1}$, $x_{\mu_2}$, $x_{\mu_3}$, $x_{\mu_4}$, $x_{\mu_5}$ are distinct variables with $(x_i, x_{i'}), (x_i, x_{i_1}), (x_i, x_{i_2}),(x_i,x_{i_3}),(x_i,x_{i_4}),(x_i,x_{i_5}) \in R$ and all 
    
    \noindent
    $\gamma_i, \delta_i, \eta_i \in X^*$ are terminal-free patterns. The patterns $\gamma_i$ and $\delta_i$ shall be defined later; for now, we only mention:
    \begin{enumerate}
        \item $\eta_i := z_{i}\ \hat{z}_{i_1}\ \hat{z}_{i_2}\ \hat{z}_{i_3}\ \hat{z}_{i_4}\ \hat{z}_{i_5}\ z_{i'}$,
        \item var($\gamma_i\delta_i\eta_i$) $\cap$ var($\gamma_j\delta_j\eta_j$) = $\emptyset$ for all $i,j \in \{1, \ldots, \mu\}$
    \end{enumerate}
    Thus, for every $i$, the elements of var($\gamma_i\delta_i\eta_i$) appear nowhere but in these three factors. Let $H$ be the set of all substitutions $\sigma :(\Sigma\ \cup\ \{x,y\})^* \rightarrow \Sigma^*$. We interpret each triple ($\gamma_i, \delta_i, \eta_i$) as a predicate $\pi_i: H \rightarrow \{0,1\}$ in such a way that $\sigma \in H$ \textit{satisfies} $\pi_i$ if there exists a morphism $\tau: \text{var}(\gamma_i\delta_i\eta_i)^* \rightarrow \Sigma^*$ with $\tau(\gamma_i) = \sigma(x)$, $\tau(\delta_i) = \sigma(y)$ and $\tau(\eta_i) = u$ - in the terminology of word equations (cf. Karhumäki et al.~\cite{DBLP:journals/jacm/KarhumakiMP00}), this means that $\sigma$ satisfies $\pi_i$ if and only if the system consisting of the three equations $\tau(\gamma_i) = \sigma(x)$, $\tau(\delta_i) = \sigma(y)$ and $\tau(\eta_i) = u$ has a solution $\tau$. Later, we shall see that $L_E(\alpha_A,r_{\alpha_A}) \setminus L_E(\beta_A,r_{\beta_A})$ exactly contains those $\sigma(\alpha_A)$ for which $\sigma$ does not satisfy any of $\pi_1$ to $\pi_{\mu}$, and choose these predicates to describe $\mathtt{InvalC}(A)$. The encoding of $\mathtt{InvalC}(A)$ shall be handled by $\pi_4$ to $\pi_{\mu}$, as each of these predicates describes a sufficient criterium for membership in $\mathtt{InvalC}(A)$. But at first we need a considerable amount of technical preparations. A substitution $\sigma$ is of \textit{good form} if $\sigma(x) \in \{0, \#\}^*$, $\sigma(x)$ does not contain $\#^3$ as a factor, and $\sigma(y) \in 0^*$. Otherwise, $\sigma$ is of \textit{bad form}. The predicates $\pi_1$ and $\pi_2$ handle all cases where $\sigma$ is of bad form and are defined through
    
    \begin{table}[h]
        \centering
        \begin{tabular}{ll}
          $\gamma_1 := y_{1,1}\ \hat{z}_{1_6}\ \hat{z}_{1_7}\ \hat{z}_{1_8}\ y_{1,2},$   & $\gamma_2 := y_2,$ \\
           $\delta_1 := \hat{y}_1,$  & $\delta_2 := \hat{y}_{2,1}\ \hat{z}_{2_6}\ \hat{y}_{2,2}$
        \end{tabular}
        \label{tab:my_label}
    \end{table}
    
    where $y_{1,1}$, $y_{1,2}$, $y_2$, $\hat{y}_1$, $\hat{y}_{2,1}$, $\hat{y}_{2,2}$, $\hat{z}_{1_6}$, $\hat{z}_{1_7}$, $\hat{z}_{1_8}$ and $\hat{z}_{2_6}$ are pairwise distinct variables with $(\hat{z}_{1_1}, \hat{z}_{1_6})$, $(\hat{z}_{1_1}, \hat{z}_{1_7})$, $(\hat{z}_{1_1}, \hat{z}_{1_8})$, $(\hat{z}_{2_1}, \hat{z}_{2_6}) \in R$. Recall that 

    \noindent
    $\eta_i := z_{i}\ \hat{z}_{i_1}\ \hat{z}_{i_2}\ \hat{z}_{i_3}\ \hat{z}_{i_4}\ \hat{z}_{i_5}\ z_{i'}$ for all $i$. It is not very difficult to see that $\pi_1$ and $\pi_2$ characterize the morphisms that are of bad form:
    \begin{lemma}\label{lem:pi1}
        A substitution $\sigma \in H$ is of bad form if and only if $\sigma$ satisfies $\pi_1$ or $\pi_2$.
    \end{lemma}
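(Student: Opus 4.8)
The plan is to establish the biconditional by its two implications, both driven by one computation: a complete description of the solutions of $\tau(\eta_i)=u$ for $i\in\{1,2\}$, where $\tau$ ranges over the $r_\beta$-valid morphisms on $\var(\gamma_i\delta_i\eta_i)$ admitted in the definition of the predicate $\pi_i$. Recall that $\eta_i=z_{i}\hat{z}_{i_1}\hat{z}_{i_2}\hat{z}_{i_3}\hat{z}_{i_4}\hat{z}_{i_5}z_{i'}$, that $u=0\#^50$, and that the variables inside $\eta_i$ carry the relations $(z_i,z_{i'})\in R$ and $(\hat{z}_{i_1},\hat{z}_{i_j})\in R$ for $j\in\{2,3,4,5\}$ (the relational renderings of the repeated variables in the Freydenberger--Reidenbach pattern).

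First I would solve $\tau(\eta_i)=u$. Since $R=\{r_{rev}\}$ and reversal preserves length, $r_\beta$-validity gives $|\tau(z_{i'})|=|\tau(z_i)|$ and $|\tau(\hat{z}_{i_j})|=|\tau(\hat{z}_{i_1})|$, hence $2|\tau(z_i)|+5|\tau(\hat{z}_{i_1})|=|u|=7$; the only solution over $\mathbb{N}_0$ is $|\tau(z_i)|=|\tau(\hat{z}_{i_1})|=1$. On words of length $1$ reversal is the identity, so $\tau(z_{i'})=\tau(z_i)$ and $\tau(\hat{z}_{i_j})=\tau(\hat{z}_{i_1})$, giving $\tau(\eta_i)=\tau(z_i)\,\tau(\hat{z}_{i_1})^5\,\tau(z_i)$; comparing with $0\#^50$ forces $\tau(z_i)=0$ and $\tau(\hat{z}_{i_1})=\#$.

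From this the implication ``$\sigma$ satisfies $\pi_1$ or $\pi_2$ $\Rightarrow$ $\sigma$ of bad form'' is short. If $\sigma$ satisfies $\pi_1$ through a valid $\tau$, then $\tau(\hat{z}_{1_1})=\#$ and $(\hat{z}_{1_1},\hat{z}_{1_j})\in R$ for $j\in\{6,7,8\}$ force $\tau(\hat{z}_{1_6})=\tau(\hat{z}_{1_7})=\tau(\hat{z}_{1_8})=\#$, so $\sigma(x)=\tau(\gamma_1)=\tau(y_{1,1})\,\#^3\,\tau(y_{1,2})$ contains $\#^3$ as a factor; the remaining equation $\tau(\delta_1)=\sigma(y)$ imposes nothing, as $\delta_1=\hat{y}_1$ is a fresh variable. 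Symmetrically, if $\sigma$ satisfies $\pi_2$, then $\tau(\hat{z}_{2_6})=\#$ via $(\hat{z}_{2_1},\hat{z}_{2_6})\in R$, whence $\sigma(y)=\tau(\delta_2)=\tau(\hat{y}_{2,1})\,\#\,\tau(\hat{y}_{2,2})\notin 0^*$, while $\tau(\gamma_2)=\sigma(x)$ is vacuous since $\gamma_2=y_2$. In either case $\sigma$ is of bad form.

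For the converse, given a $\sigma$ of bad form I would exhibit an explicit witness morphism. If $\sigma(x)$ contains $\#^3$, fix a factorization $\sigma(x)=w_1\#^3w_2$ and set $\tau(y_{1,1}):=w_1$, $\tau(y_{1,2}):=w_2$, $\tau(\hat{y}_1):=\sigma(y)$, $\tau(z_1):=\tau(z_{1'}):=0$, and $\tau(\hat{z}_{1_k}):=\#$ for every $\hat{z}_{1_k}$ occurring in $\gamma_1\delta_1\eta_1$; all constrained images have length $1$, so $\tau$ is $r_\beta$-valid, and one checks $\tau(\gamma_1)=\sigma(x)$, $\tau(\delta_1)=\sigma(y)$, $\tau(\eta_1)=0\#^50=u$, so $\sigma$ satisfies $\pi_1$. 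Otherwise $\sigma(y)\notin 0^*$, say $\sigma(y)=w_1\#w_2$, and $\tau(y_2):=\sigma(x)$, $\tau(\hat{y}_{2,1}):=w_1$, $\tau(\hat{y}_{2,2}):=w_2$, $\tau(z_2):=\tau(z_{2'}):=0$, $\tau(\hat{z}_{2_k}):=\#$ shows $\sigma$ satisfies $\pi_2$. I expect the only delicate point to be the first step, but under $r_{rev}$ it collapses to the length identity $2a+5b=7$ together with the triviality of reversal on single letters, so in the full proof I would present it as a short table of cases, i.e.\ the reversal analogue of Table~\ref{tab:posu} from the proof of \Cref{thm:uirab}.
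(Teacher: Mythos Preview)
Your proof is correct and follows essentially the same approach as the paper's: construct explicit witnesses for the forward direction and, for the converse, deduce from $\tau(\eta_i)=u$ that $\tau(z_i)=0$ and $\tau(\hat{z}_{i_1})=\#$, then propagate via the relations to the extra $\hat{z}$-variables in $\gamma_1$ and $\delta_2$. The only difference is presentational: the paper simply asserts the values forced by $\tau(\eta_i)=0\#^50$, whereas you supply the length-counting argument $2|\tau(z_i)|+5|\tau(\hat{z}_{i_1})|=7$ together with the observation that reversal is trivial on letters, which is a cleaner justification than a case table and already complete as written.
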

    \begin{proof}
        We begin with the \textit{only if} direction. If $\sigma(x) = w_1\#^3w_2$ for some $w_1,w_2 \in \Sigma^*$, choose $\tau(y_{1,1}) := w_1$, $\tau(y_{1,2}) := w_2$, $\tau(\hat{z}_{1_1}) := \#$, $\tau(\hat{z}_{1_2}) := \#$, $\tau(\hat{z}_{1_3}) := \#$, $\tau(\hat{z}_{1_4}) := \#$, $\tau(\hat{z}_{1_5}) := \#$, $\tau(\hat{z}_{1_6}) := \#$, $\tau(\hat{z}_{1_7}) := \#$, $\tau(\hat{z}_{1_8}) := \#$, $\tau(\hat{y}_1) := \sigma(y)$, $\tau(z_1) := 0$ and $\tau(z_{1'}) := 0$. Then $\tau(y_1) = \sigma(x), \tau(\delta_1) = \sigma(y)$ and $\tau(\eta_1) = u$; thus $\sigma$ satisfies $\pi_1$.

        If $\sigma(y) = w_1 \# w_2$, for some $w_1,w_2 \in \Sigma^*$, let $\tau(y_2) := \sigma(x)$, $\tau(\hat{y}_{2,1}):= w_1$, $\tau(\hat{y}_{2,2}) := w_2$, and $\tau(\hat{z}_{2_1}) := \#$, $\tau(\hat{z}_{2_2}) := \#$, $\tau(\hat{z}_{2_3}) := \#$, $\tau(\hat{z}_{2_4}) := \#$, $\tau(\hat{z}_{2_5}) := \#$, $\tau(\hat{z}_{2_6}) := \#$, and $\tau(z_2) := 0, \tau(z_{2'}) := 0$. It is easy to see that $\sigma$ satisfies $\pi_2$.

        For the \textit{if} direction, if $\sigma$ satisfies $\pi_1$, then there exists a morphism $\tau$ with $\tau(\gamma_1) = \sigma(x)$ and $\tau(\eta_1) = 0 \#^50$. Thus, $\tau(\hat{z}_{1_1}) = \tau(\hat{z}_{1_2}) = \tau(\hat{z}_{1_3}) = \tau(\hat{z}_{1_4}) = \tau(\hat{z}_{1_5}) = \#$ and $\tau(z_1) = \tau(z_{1'}) = 0$ must hold. Consequently, $\tau(\hat{z}_{1_6}) = \tau(\hat{z}_{1_7}) = \tau(\hat{z}_{1_8}) = \#$ and thus,
        $\sigma(x)$ contains $\#^3$, and $\sigma$ is of bad form.

        Analogously, if $\sigma$ satisfies $\pi_2$, then $\sigma(y)$ contains the letter $\#$, and $\sigma$ is of bad form.
    \end{proof}

    This allows us to make the following observation, which serves as the central part of the construction and is independent from the exact shape of $\pi_3$ to $\pi_{\mu}$:

    \begin{lemma}\label{lem:pi}
        For every substitution $\sigma \in H$, $\sigma(\alpha_A) \in L_E(\beta_A, r_{\beta_A})$ if and only if $\sigma$ satisfies one of the predicates $\pi_1$ to $\pi_{\mu}$.
    \end{lemma}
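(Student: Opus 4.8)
\emph{Plan.} The plan is to carry the proof of the corresponding lemma of Freydenberger and Reidenbach~\cite{Freydenberger2010} over to the relational setting essentially verbatim; the only genuine point that needs checking is that every pair of variables related by $r_{rev}$ inside $\beta_A$ is, in all situations that arise, forced onto a \emph{palindromic} image --- a single letter, a unary block $0^k$, or the palindrome $v=0\#^30$ --- so that $r_{rev}$ there acts exactly like equality and the original bookkeeping applies unchanged. Note first that for every $\sigma\in H$ we have $\sigma(\alpha_A)=v\,v\,\#^6\,v\,\sigma(x)\,v\,\sigma(y)\,v\,\#^6\,v\,u\,v$, since $v$ and $u$ are terminal words.

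\emph{The ``if'' direction.} Suppose $\sigma$ satisfies $\pi_i$, witnessed by a morphism $\tau$ on $\var(\gamma_i\delta_i\eta_i)$ with $\tau(\gamma_i)=\sigma(x)$, $\tau(\delta_i)=\sigma(y)$, $\tau(\eta_i)=u$. One may take $\tau$ to be $r_{\beta_A}$-valid on $\var(\gamma_i\delta_i\eta_i)$: the equation $\tau(\eta_i)=\tau(z_i\hat z_{i_1}\cdots\hat z_{i_5}z_{i'})=0\#^50$ together with $(z_i,z_{i'}),(\hat z_{i_1},\hat z_{i_k})\in R$ forces $\tau(z_i)=\tau(z_{i'})=0$ and $\tau(\hat z_{i_k})=\#$ (a length/letter check, made only easier than in the abelian case since $0\#^50$ is a palindrome of odd length), and the remaining relations occurring in $\pi_1,\dots,\pi_\mu$ likewise tie together only variables forced onto single letters or onto unary blocks, on which $r_{rev}$ is the identity. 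Now extend $\tau$ to a substitution $\tau'$ of $\beta_A$ by putting $\tau'(x_i)=\tau'(x_{i'})=\tau'(x_{i_1})=\dots=\tau'(x_{i_5})=v$, putting every $x_j,x_{j'},x_{j_1},\dots,x_{j_5}$ with $j\neq i$ and every variable of $\gamma_j\delta_j\eta_j$ with $j\neq i$ to $\varepsilon$, and letting $\tau'$ agree with $\tau$ elsewhere. Since $v^R=v$ and $\varepsilon^R=\varepsilon$, $\tau'$ is $r_{\beta_A}$-valid, and a direct computation gives $\tau'(x_1x_{1'}\cdots x_\mu x_{\mu'})=vv$, $\tau'(\hat\beta_1\cdots\hat\beta_\mu)=\tau'(\hat\beta_i)=v\,\sigma(x)\,v\,\sigma(y)\,v$, and $\tau'(\ddot\beta_1\cdots\ddot\beta_\mu)=\tau'(\ddot\beta_i)=v\,u\,v$, hence $\tau'(\beta_A)=\sigma(\alpha_A)$ and $\sigma(\alpha_A)\in L_E(\beta_A,r_{\beta_A})$.

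\emph{The ``only if'' direction.} Let $\sigma(\alpha_A)=\tau(\beta_A)$ for some $r_{\beta_A}$-valid $\tau$. If $\sigma$ is of bad form we are done by Lemma~\ref{lem:pi1}, so assume $\sigma$ is of good form; then $\sigma(\alpha_A)$ contains the factor $\#^6$ exactly twice and no longer $\#$-run, because $\sigma(x)$ has no factor $\#^3$, $\sigma(y)\in 0^*$, and $v,u$ begin and end with $0$. Since $\beta_A$ carries exactly two explicit occurrences of $\#^6$, and none of $\tau(x_1x_{1'}\cdots x_\mu x_{\mu'})$, $\tau(\hat\beta_1\cdots\hat\beta_\mu)$, $\tau(\ddot\beta_1\cdots\ddot\beta_\mu)$ may contain $\#^6$ or abut an explicit one with a further $\#$ (that would create a third long $\#$-run or a run longer than $6$), the explicit delimiters must align with the two $\#^6$'s of $\sigma(\alpha_A)$, forcing $\tau(x_1x_{1'}\cdots x_\mu x_{\mu'})=vv$, $\tau(\hat\beta_1\cdots\hat\beta_\mu)=v\sigma(x)v\sigma(y)v$, and $\tau(\ddot\beta_1\cdots\ddot\beta_\mu)=vuv$. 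Matching $vv=0\#^30\,0\#^30$ against $x_1x_{1'}\cdots x_\mu x_{\mu'}$: if $i$ is least with $\tau(x_i)\neq\varepsilon$, then $\tau(x_{i'})=\tau(x_i)^R\neq\varepsilon$, and a short inspection of the prefixes of $vv$ shows that the only way $\tau(x_i)\tau(x_i)^R$ can be a prefix of $vv$ is $\tau(x_i)=v$ (whence $\tau(x_{i'})=v^R=v$), which already exhausts $vv$; hence $\tau(x_j)=\tau(x_{j'})=\varepsilon$ for all $j\neq i$. The relations $(x_i,x_{i_k})\in R$ then give $\tau(x_{i_k})=v^R=v$ and $\tau(x_{j_k})=\varepsilon$ for $j\neq i$. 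Substituting this, $\tau(\ddot\beta_1\cdots\ddot\beta_\mu)$ reduces to a product of the $\tau(\eta_j)$ with two interleaved copies of $v$ around $\tau(\eta_i)$; a $\#$-run count using that $vuv=0\#^30\,0\#^50\,0\#^30$ has exactly the runs $3,5,3$ (each delimited by $0$'s) forces $\tau(\eta_j)=\varepsilon$ for $j\neq i$ and $\tau(\eta_i)=u$. Likewise $\tau(\hat\beta_1\cdots\hat\beta_\mu)$ reduces to a product of the $\tau(\gamma_j\delta_j)$ with three interleaved copies of $v$ around $\tau(\gamma_i)$ and $\tau(\delta_i)$; since $v$ contains $\#^3$ while $\sigma(x)$ does not and $\sigma(y)\in 0^*$, the copies of $v$ act as delimiters and, reading the resulting equation from the outside in, force the side products to vanish and $\tau(\gamma_i)=\sigma(x)$, $\tau(\delta_i)=\sigma(y)$. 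Thus the restriction of $\tau$ to $\var(\gamma_i\delta_i\eta_i)$ witnesses that $\sigma$ satisfies $\pi_i$.

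\emph{Main obstacle.} The work lies entirely in the ``only if'' direction: pushing through the delimiter bookkeeping --- that the two explicit $\#^6$ must align with the actual ones, and that within each aligned block the copies of $v$ isolate $\tau(\gamma_i),\tau(\delta_i),\tau(\eta_i)$ and annihilate every contribution with $j\neq i$ --- and, crucially, verifying at every use of a relation of $r_{\beta_A}$ that the value involved is a single letter, a unary block, or $v$, so that $r_{rev}$ coincides with equality there. This last verification is precisely the point where the equality-based construction of Freydenberger and Reidenbach adapts to $r_{rev}$, and it is the only conceptual difference from their proof.
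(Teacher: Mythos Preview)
Your proposal is correct and follows essentially the same approach as the paper's proof, which in turn adapts Freydenberger--Reidenbach's argument to the relational setting. The only notable addition is that you make explicit what the paper leaves implicit: the $r_{rev}$-validity checks go through because every related pair is forced onto a palindrome ($\varepsilon$, a single letter, a unary block, or $v=0\#^30$), and you spell out the prefix analysis showing $\tau(x_i)\tau(x_i)^R$ can only be a prefix of $vv$ when $\tau(x_i)=v$; the paper simply states ``it is easy to see'' at that point.
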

    \begin{proof}
        We begin with the \textit{if} direction. Assume $\sigma \in H$ satisfies some predicate $\pi_i$. Then there exists a morphism $\tau : \text{var}(\gamma_i\delta_i\eta_i) \rightarrow \Sigma^*$ such that $\tau(\gamma_i) = \sigma(x), \tau(\delta_i) = \sigma(y)$ and $\tau(\eta_i)=u$. We extend $\tau$ to a substitution $\tau'$ defined by
        \begin{enumerate}
            \item $\tau'(x) := \tau(x)$ for all $x \in \text{var}(\gamma_i\delta_i\eta_i)$,
            \item $\tau'(x_i) := 0 \#^30 = v$ and $\tau'(x_{i'}) := \tau'(x_{i_1}) := \tau'(x_{i_2}) := \tau'(x_{i_3}) := \tau'(x_{i_4}) := \tau'(x_{i_5}) := 0\#^30 = v$,
            \item $\tau'(0) := 0$ and $\tau'(\#) := \#$,
            \item $\tau'(x) := \epsilon$ in all other cases.
        \end{enumerate}
        By definition, none of the variables in var($\gamma_i\delta_i\eta_i$) appear outside of these factors. Thus, $\tau'$ can always be defined in this way. We obtain
        \begin{align*}
            \tau'(\hat{\beta}_i) &= \tau'(x_{i_1}\ \gamma_i\ x_{i_2}\ \delta_i\ x_{i_3}) \\ &= v\ \tau(\gamma_i)\ v\ \tau(\delta_i)\ v \\ &= v\ \sigma(x)\ v\ \sigma(y)\ v, \\\tau'(\ddot{\beta_i})\ &= \tau'(x_{i_4}\ \eta_i\ x_{i_5}) \\ &= v\ \tau(\eta_i)\ v \\ &= v\ u\ v.
        \end{align*}
        As $\tau'(\gamma_j) = \tau'(\delta_j) = \tau'(\eta_j) = \tau'(\hat{\beta}_j) = \tau'(\ddot{\beta_j}) = \epsilon$ for all $j \neq i$, this leads to
        \begin{align*}
            \tau'(\beta_A) &= \tau'(x_1x_{1'}\ \ldots x_{\mu}x_{\mu'}\#^6\hat{\beta}_1\ldots\hat{\beta}_{\mu}\#^6\ddot{\beta_1}\ldots\ddot{\beta_{\mu}}) \\ &= \tau'(x_ix_{i'})\#^6\tau'(\hat{\beta}_i)\#^6\tau'(\ddot{\beta_i}) \\ &= v\ v\ \#^6\ v\ \sigma(x)\ v\ \sigma(y)\ v\ \#^6\ v\ u\ v \\&= \sigma(\alpha_A)
        \end{align*}
        This proves $\sigma(\alpha_A) \in L_E(\beta_A,r_{\beta_A})$.

        For the other direction, assume that $\sigma(\alpha_A) \in L_E(\beta_A,r_{\beta_A})$. If $\sigma$ is of bad form, then by \Cref{lem:pi1}, $\sigma$ satisfies $\pi_1$ of $\pi_2$. Thus, assume $\sigma(x)$ does not contain $\#^3$ as a factor, and $\sigma(y) \in 0^*$. Let $\tau$ be a substitution with $\tau(\beta_A) = \sigma(\alpha_A)$.

        Now, as $\sigma$ is of good form, $\sigma(\alpha_A)$ contains exactly two occurrences of $\#^6$, and these are non-overlapping. As $\sigma(\alpha_A) = \tau(\beta_A)$, the same holds for $\tau(\beta_A)$. Thus, the equation $\sigma(\alpha_A) = \tau(\beta_A)$ can be decomposed into the system consisting of the following three equations:
        \begin{align}
            &0\#^30\ 0\#^30 = \tau(x_1x_{1'}\ \ldots x_{\mu}x_{\mu'}), \label{eq:start}\\
            &0\#^30\ \sigma(x)\ 0\#^30\ \sigma(y)\ 0\#^30 = \tau(\hat{\beta}_1\ldots\hat{\beta}_{\mu}), \label{eq:middle}\\
            &0\#^30\ u\ 0\#^30 = \tau(\ddot{\beta_1}\ldots\ddot{\beta_{\mu}})\label{eq:end}
        \end{align}
        First, consider \Cref{eq:start} and choose the smallest $i$ for which $\tau(x_i) \neq \epsilon$. Then $\tau(x_i)$ has to start with $0$, and as
        $$\tau(x_1x_{1'}\ \ldots x_{\mu}x_{\mu'}) = 0\#^30\ 0\#^30$$
        and $(x_i,x_{i'})\in R$ for all $i \in \{1, \ldots, \mu\}$, it is easy to see that $\tau(x_i) = 0\#^30 = v=  \tau(x_{i'})$ and $\tau(x_j) = \tau(x_{j'}) = \epsilon$ for all $j \neq i$ must hold. Furthermore, as $(x_i, x_{i_1}), (x_i, x_{i_2}),(x_i,x_{i_3}),(x_i,x_{i_4}),(x_i,x_{i_5}) \in R$ for all $i \in \{1, \ldots, \mu\}$, we get $\tau(x_{i_1}) = \tau(x_{i_2}) = \tau(x_{i_3}) = \tau(x_{i_4}) = \tau(x_{i_5}) = v$ and  $\tau(x_{j_1}) = \tau(x_{j_2}) = \tau(x_{j_3}) = \tau(x_{j_4}) = \tau(x_{j_5}) = \epsilon$ for all $j \neq i$.

        Note that $u$ does not contain $0\#^30$ as a factor, and does neither begin with $\#^30$, nor end on $0\#^3$. But as $\tau(\ddot{\beta_i})$ begins with and ends on $0\#^30$, we can use \Cref{eq:end} to obtain $0\#^30\ u\ 0\#^30 = \tau(\ddot{\beta_i})$ and $\tau(\ddot{\beta_j}) = \epsilon$ for all $j \neq i$. As $\ddot{\beta_i} = x_{i_4}\ \eta_i\ x_{i_5}$ and $\tau(x_{i_4}) = \tau(x_{i_5}) = 0\#^30$, $\tau(\eta_i) = u$ must hold.

        As $\sigma$ is of good form, $\sigma(0\#^30\ x\ 0\#^30\ y\ 0\#^30)$ contains exactly three occurrences of $\#^3$. But there are already three occurrences of $\#^3$ in $\tau(\hat{\beta}_i) = 0\#^30\ \tau(\gamma_i)\ 0\#^30\ \tau(\delta_i)\ 0\#^30$. This, and \Cref{eq:middle}, lead to $\tau(\hat{\beta}_j) = \epsilon$ for all $j \neq i$ and, more importantly, $\tau(\gamma_i) = \sigma(x)$ and $\tau(\delta_i) = \sigma(y)$. Therefore, $\sigma$ satisfies the predicate $\pi_i$.
    \end{proof}

    Thus, we can select predicates $\pi_1$ to $\pi_{\mu}$ in such a way that $L_E(\alpha_A, r_{\alpha_A}) \setminus L_E(\beta_A,r_{\beta_A}) = \emptyset$ if and only if $\mathtt{ValC}(A) = \emptyset$ by describing $\mathtt{InvalC}(A)$ through a disjunction of predicates on $H$. The proof of \Cref{lem:pi} shows that if $\sigma(\alpha_A) = \tau(\beta_A)$ for substitutions $\sigma, \tau$, where $\sigma$ is of good form, there exists exactly one $i, 3 \leq i \leq \mu$, such that $\tau(x_i) = 0\#^30$.

    Due to technical reasons, we need a predicate $\pi_3$ that, if unsatisfied, sets a lower bound on the length of $\sigma(y)$, defined by
    \begin{align*}
        \gamma_3\ &:= \gamma_{3,1}\hat{\gamma}_{3,1}\gamma_{3,2}\hat{\gamma}_{3,2}\gamma_{3,3}\hat{\gamma}_{3,3}\gamma_{3,4}, \\
        \delta_3\ &:= \hat{\gamma}_{3,1'}\hat{\gamma}_{3,2'}\hat{\gamma}_{3,3'},
    \end{align*}
    where all of $\gamma_{3,1}$ to $\gamma_{3,4}$ and $\hat{\gamma}_{3,1}$ to $\hat{\gamma}_{3,3}$ and $\hat{\gamma}_{3,1'}$ to $\hat{\gamma}_{3,3'}$ are pairwise distinct variables with $(\hat{\gamma}_{3,1}, \hat{\gamma}_{3,1'})$, $(\hat{\gamma}_{3,2}, \hat{\gamma}_{3,2'})$, $(\hat{\gamma}_{3,3}, \hat{\gamma}_{3,3'}) \in R$. Clearly, if some $\sigma \in H$ satisfies $\pi_3$, $\sigma(y)$ is a concatenation of three (possibly empty) reversed factors of $\sigma(x)$. Thus, if $\sigma$ satisfies none of $\pi_1$ to $\pi_3$, $\sigma(y)$ must be longer than the three longest non-overlapping sequences of $0$s in $\sigma(x)$. This allows us to identify a class of predicates definable by a rather simple kind of expression, which we use to define $\pi_4$ to $\pi_{\mu}$ in a less technical way. The following definitions are adapted to work with our setting of relational patterns.

    Let $X' := \{\hat{x}_{1,1},\hat{x}_{2,1},\hat{x}_{3,1},\hat{x}_{1,2},\hat{x}_{2,2},\hat{x}_{3,2},\ldots\} \subset X$ be an infinite set of new variables, let $G_{v_{R}} \subset H_{v_{R}}$ denote the set of all $\alpha$-valid substitutions of good form, and let $S$ be the set of all substitutions $\rho: (\Sigma\ \cup\ X')^* \rightarrow \Sigma^*$ for which $\rho(0)=0, \rho(\#) = \#$ and $\rho(\hat{x}_{i,j}) \in 0^*$ for $i \in \{1,2,3\}$ and $j \in \mathbb{N}$. For relational patterns $(p,v_{R})$ with $p \in (\Sigma\ \cup\ X')^*$ and $v_{R} := \{(r_{rev},\hat{x}_{i,j},\hat{x}_{i,j'})\ |\ i \in \{1,2,3\};j,j' \in \mathbb{N}\}$, we define $S(p,v_{R}) := \{\rho(p)\ |\ \rho \in S \cap H_{v_R}\}$.

    \begin{definition}
        A predicate $\pi: G_{v_R} \rightarrow \{0,1\}$ is called a \textit{simple predicate} if there exists a pattern $p \in (\Sigma\ \cup\ X')^*$ and languages $L_1,L_2 \in \{\Sigma^*,\{\epsilon\}\}$ such that $\sigma$ satisfies $\pi$ if and only if $\sigma(x) \in L_1\ S(p, v_{R})\ L_2$.
    \end{definition}

    From a slightly different point of view, the elements of $X'$ can be understood as numerical parameters describing (concatenational) powers of $0$, with substitutions $\rho \in S$ acting as assignments. For example, if $\sigma \in G_{v_R}$ satisfies a simple predicate $\pi$ if and only if $\sigma(x) \in \Sigma^*S(\#\hat{x}_{1,1}\#\hat{x}_{2,1}\#\hat{x}_{1,2},v_R)$, we can also write that $\sigma$ satisfies $\pi$ if and only if $\sigma(x)$ has a suffix of the form $\#0^m\#0^n0\#0^m$ (with $m,n \in \mathbb{N}_0$), which could also be written as $\#0^m\#0^*0\#0^m$, as $n$ occurs only once in this expression. Using $\pi_3$, our construction is able to express all simple predicates: 
    \begin{lemma}\label{lem:simplepred}
        For every simple predicate $\pi_S$ over variables from $X'$, there exists a predicate $\pi$ defined by terminal-free patterns $\gamma,\delta,\eta$ such that for all substitutions $\sigma \in G_{v_R}$:
        \begin{enumerate}
            \item if $\sigma$ satisfies $\pi_s$, then $\sigma$ also satisfies $\pi$ or $\pi_3$,
            \item if $\sigma$ satisfies $\pi$, then $\sigma$ also satisfies $\pi_S$.
        \end{enumerate}
    \end{lemma}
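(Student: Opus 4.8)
The plan is to mirror the construction of Freydenberger and Reidenbach, the only genuinely new point being that $\gamma,\delta,\eta$ must be \emph{terminal-free} and that the ``shared variables'' of the classical proof must be emulated through $r_{rev}$-constraints. I keep $\eta$ to be the standard gadget of the form $\eta_i := z_{i}\ \hat z_{i_1}\ \hat z_{i_2}\ \hat z_{i_3}\ \hat z_{i_4}\ \hat z_{i_5}\ z_{i'}$ with its constraints $(z_i,z_{i'}),(\hat z_{i_1},\hat z_{i_k})\in R$; by the case analysis already performed for Table~\ref{tab:posu}, every valid $\tau$ with $\tau(\eta)=u$ forces $\tau(z_i)=\tau(z_{i'})=0$ and $\tau(\hat z_{i_k})=\#$. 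Since $a^R=a$ for a single letter $a$, any fresh variable tied by an $r_{rev}$-constraint to $z_i$ (resp.\ $\hat z_{i_1}$) is likewise pinned to $0$ (resp.\ $\#$). I then define $\gamma$ by taking the pattern $p$ of $\pi_S$, replacing every terminal $0$ in $p$ by a new variable $r_{rev}$-related to $z_i$, every terminal $\#$ by a new variable $r_{rev}$-related to $\hat z_{i_1}$, keeping each $\hat x_{i,j}$ of $p$ together with the $v_R$-constraints among them, and finally prefixing a fresh unrelated variable $y_{\gamma,L}$ iff $L_1=\Sigma^*$ and suffixing a fresh unrelated variable $y_{\gamma,R}$ iff $L_2=\Sigma^*$. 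I define $\delta := \hat x'_{i_1}\cdots\hat x'_{i_k}\,\hat y$, where there is one fresh variable $\hat x'_i$ for each family index $i\in\{1,2,3\}$ that occurs in $p$ (each $r_{rev}$-related to some $\hat x_{i,j}$ of $\gamma$, and hence to the whole family via $v_R$) and $\hat y$ is a fresh free variable; all the new variables are unrelated to anything outside $\gamma\delta\eta$, so the predicate can be embedded into $(\beta_A,r_{\beta_A})$ as required.

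For direction (2), from a valid $\tau$ realising $\pi$ I first read off, via $\tau(\eta)=u$, that $\tau(z_i)=0$ and $\tau(\hat z_{i_1})=\#$, so the simulated terminals inside $\gamma$ take their intended letter values; and from $\tau(\delta)=\sigma(y)\in 0^*$ (legitimate because $\sigma\in G_{v_R}$ is of good form) every $\tau(\hat x'_{i_j})$ lies in $0^*$, hence by the $r_{rev}$-ties and the $v_R$-constraints every $\tau(\hat x_{i,j})$ is a common power of $0$ within its family. Therefore $\tau$ restricted to $\{0,\#\}\cup X'$ (reading the simulated-terminal variables as $0$ and $\#$) is a substitution $\rho\in S\cap H_{v_R}$, and $\tau(\gamma)=w_1\,\rho(p)\,w_2$ with $w_1=\tau(y_{\gamma,L})\in L_1$ (or $w_1=\varepsilon\in\{\varepsilon\}=L_1$) and similarly $w_2\in L_2$; thus $\sigma(x)=\tau(\gamma)\in L_1\,S(p,v_R)\,L_2$, i.e.\ $\sigma$ satisfies $\pi_S$.

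For direction (1), given $\sigma(x)=w_1\,\rho(p)\,w_2$ witnessing $\pi_S$, I build a (possibly erasing) $\tau$: put $y_{\gamma,L}\mapsto w_1$, $y_{\gamma,R}\mapsto w_2$, the simulated-$0$ and simulated-$\#$ variables to $0$ and $\#$, $\hat x_{i,j}\mapsto\rho(\hat x_{i,j})$, the $\eta$- and $\delta$-copy variables to their forced/intended values, and $\hat y\mapsto 0^{m}$ with $m:=|\sigma(y)|-\sum_{i}|\rho(\hat x_{i,1})|$ (sum over the occurring family indices). A routine check gives $\tau(\gamma)=\sigma(x)$, $\tau(\eta)=u$, $\tau(\delta)=\sigma(y)$, and that every relational constraint is met, since each such constraint only ever compares two single letters or two powers of $0$, on which $r_{rev}$ acts as the identity; hence $\sigma$ satisfies $\pi$ — provided $m\ge 0$. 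The only case needing the escape hatch is $m<0$: then $|\sigma(y)|$ is strictly smaller than the sum of the at most three parameter-blocks $\rho(\hat x_{i,1})$, so $\sigma(y)\in 0^*$ can be split into three powers of $0$, each a prefix of one of these blocks (taken in left-to-right order of their first occurrences in $p$) and hence a non-overlapping factor of $\sigma(x)$, which is exactly what makes $\sigma$ satisfy $\pi_3$. The cases where $L_1$ or $L_2$ equals $\{\varepsilon\}$ are handled by simply omitting $y_{\gamma,L}$ resp.\ $y_{\gamma,R}$. The main thing to keep honest is the \textbf{two-sided tension} in the design: $\delta$ must pin all $X'$-variables of $\gamma$ into $0^*$ (needed for direction (2)) while still leaving $\hat y$ enough room inside $\sigma(y)$ (needed for direction (1)); by contrast, the $r_{rev}$-versus-equality point is harmless, precisely because every substitution value that any $R$-constraint ever inspects is either a single terminal letter or a unary word over $\{0\}$.
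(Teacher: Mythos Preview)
Your proposal is correct and follows essentially the same construction as the paper: replace the terminals of $p$ by fresh variables tied through $r_{rev}$ to the $\eta$-gadget (which pins them to $0$ and $\#$), add optional border variables for $L_1,L_2$, and let $\delta$ carry one $r_{rev}$-copy of each occurring $X'$-family together with a slack variable $\hat y$ so that $\tau(\delta)=\sigma(y)\in 0^*$ forces all $X'$-images into $0^*$, with the $\pi_3$ escape hatch when the slack would be negative. The only cosmetic deviations are that the paper always puts all three family representatives $\hat{x}_{1,j}\hat{x}_{2,j}\hat{x}_{3,j}$ into $\delta$ (you restrict to occurring families, which is harmless) and that the case analysis you cite as ``Table~\ref{tab:posu}'' is formally the $r_{ab}$ table, whereas for $r_{rev}$ the paper simply asserts the forced values $\tau(z_i)=\tau(z_{i'})=0$, $\tau(\hat z_{i_k})=\#$ directly---the underlying length-and-palindrome argument is the same, so nothing is lost.
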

    \begin{proof}
        We first consider the case of $L_1 = L_2 = \Sigma^*$. Assume that $\pi_S$ is a simple predicate, and $p \in (\Sigma\ \cup\ X')^*$ is a pattern such that $\sigma \in G_{v_R}$ satisfies $\pi_S$ if and only if $\sigma(x) \in L_1\ S(p,v_R)\ L_2$. Then define $\gamma := y_1p'y_2$, where $p'$ is obtained from $p$ by replacing all $k, k\in \mathbb{N}_0$ occurrences of $0$ with new variables $\tilde{z}_j, j \in [k]$ with $(\tilde{z}_1,\tilde{z}_j) \in R$ for all $j \in [k]$ and all $g, g \in \mathbb{N}_0$ occurrences of $\#$ with different variables $\ddot{z}_j, j \in [g]$ with $(\ddot{z}_1,\ddot{z}_j) \in R$ for all $j \in [g]$, while leaving all present elements of $X'$ unchanged. Furthermore, let $\delta := \hat{x}_{1,j}\hat{x}_{2,j}\hat{x}_{3,j}\hat{y}$ with $j \in \mathbb{N}$ such that $\hat{x}_{1,j},\hat{x}_{2,j},\hat{x}_{3,j}$ are new variables and (in order to stay consistent with the $\eta_i$ appearing in $\beta_A$) $\eta :=\tilde{z}\ \ddot{z}_{g+1}\ \ddot{z}_{g+2}\ \ddot{z}_{g+3}\ \ddot{z}_{g+4}\ \ddot{z}_{g+5}\ \tilde{z}'$ with $(\tilde{z}_1, \tilde{z})$, $(\tilde{z}_1,\tilde{z}')$, $(\ddot{z}_1,\ddot{z}_{g+1})$, $(\ddot{z}_1,\ddot{z}_{g+2})$, $(\ddot{z}_1,\ddot{z}_{g+3})$, $(\ddot{z}_1,\ddot{z}_{g+4})$, $(\ddot{z}_1,\ddot{z}_{g+5}) \in R$.

        Now, assume that $\sigma \in G_{v_R}$ satisfies $\pi_S$. Then there exist two words $w_1,w_2 \in \Sigma^*$ and a substitution $\rho \in S$ such that $\sigma
        (x) = w_1\rho(p)w_2$. If $\sigma(y)$ is not longer than any three non-overlapping factors of the form $0^*$ of $\sigma(x)$ combined, $\pi_3$ is satisfied. Otherwise, we can define $\tau$ by setting $\tau(y_1) := w_1, \tau(y_2) := w_2, \tau(\tilde{z}) :=0, \tau(\tilde{z}') :=0,\tau(\tilde{z}_j) :=0$ for all $j \in [k], \tau(\ddot{z}_j) := \#$ for all $j \in [g+5], \tau(\hat{x}_{i,j}) := \rho(\hat{x}_{i,j})$ for all $i \in \{1,2,3\},j \in \mathbb{N}$ where $\hat{x}_{i,j}$ appears in $p$ and $\tau(\hat{x}_{i,j}) := \epsilon$ where $\hat{x}_{i,j}$ does not appear in $p$. Finally, let $\tau(\hat{x}_{i,j}) = \tau(\hat{x}_{i,j'})$ for all other $i \in \{1,2,3\},j \in \mathbb{N}$ for one arbitrary, specific $j' \in \mathbb{N}$ such that $\tau(\hat{x}_{i,j'})$ is defined and $\tau(\hat{y}):= 0^m$, where
        $$m := |\sigma(y)| - \sum_{\hat{x} \in \text{var}(p)}{|\sigma(\hat{x})|}$$
        ($m > 0$ must hold, as $\sigma$ does not satisfy $\pi_3$). Then $\tau(p') = \rho(p)$, and
        \begin{align*}
            \tau(\gamma)\ &=\ \tau(y_1)\ \tau(p)\ \tau(y_2) \\
            &= w_1\ \rho(p)\ w_2 = \sigma(x), \\
            \tau(\delta)\ &=\ \tau(\hat{x}_{1,j}\hat{x}_{2,j}\hat{x}_{3,j}\hat{y}) \\ &= 0^{|\sigma(y)|}= \sigma(y), \\
            \tau(\eta)\ &=\ \tau(\tilde{z}\ \ddot{z}_{g+1}\ \ddot{z}_{g+2}\ \ddot{z}_{g+3}\ \ddot{z}_{g+4}\ \ddot{z}_{g+5}\ \tilde{z}') \\
            &= 0\#\#\#\#\#0 = u.
        \end{align*}
        Therefore, $\sigma$ satisfies $\pi$, which concludes this direction.

        For the other direction, assume that $\sigma \in G_{v_R}$ satisfies $\pi$. Then there is a morphism $\tau$ such that $\sigma(x) = \tau(\gamma),\sigma(y) = \tau(\delta)$ and $\tau(\eta) = u$. As $\eta = \tau(\tilde{z}\ \ddot{z}_{g+1}\ \ddot{z}_{g+2}\ \ddot{z}_{g+3}\ \ddot{z}_{g+4}\ \ddot{z}_{g+5}\ \tilde{z}')$ with 
        $(\tilde{z}_1, \tilde{z})$, $(\tilde{z}_1,\tilde{z}')$, $(\ddot{z}_1,\ddot{z}_{g+1})$, $(\ddot{z}_1,\ddot{z}_{g+2})$, $(\ddot{z}_1,\ddot{z}_{g+3})$, $(\ddot{z}_1,\ddot{z}_{g+4})$, $(\ddot{z}_1,\ddot{z}_{g+5}) \in R$ and $u = 0\#\#\#\#\#0$, $\tau(\tilde{z}) = \tau(\tilde{z}') = 0$ and $\tau(\ddot{z}_j) = \#$ for all $j \in \{g+1, \ldots, g+5\}$ must hold. By definition $\tau(y_1),\tau(y_2) \in \Sigma^*$. If we define $\rho(\hat{x}_{i,j}) := \tau(\hat{x}_{i,j})$ for all $i \in \{1,2,3\},j \in \mathbb{N}$ such that $\tau(\hat{x}_{i,j})$ is defined and $\rho(\hat{x}_{i,j}) := \rho(\hat{x}_{i,j'})$ for all other  $i \in \{1,2,3\},j \in \mathbb{N}$ for one arbitrary, specific $j' \in \mathbb{N}$ such that $\rho(\hat{x}_{i,j'})$ is defined, we see that $\sigma(x) \in L_1S(p,v_R)L_2$ holds. Thus, $\sigma$ satisfies $\pi_S$ as well.

        The other three cases for choices of $L_1$ and $L_2$ can be handled analogously by omitting $y_1$ or $y_2$ as needed. Note that this proof also works in the case $p = \epsilon$.
    \end{proof}

    Roughly speaking, if $\sigma$ does not satisfy $\pi_3$, then $\sigma(y)$ (which is in $0^*$, due to $\sigma \in G_{v_R}$) is long enough to provide building blocks for simple predicates using variables from $X$.

    Our next goal is a set of predicates that (if unsatisfied) forces $\sigma(x)$ into a basic shape common to all elements of $\mathtt{ValC}(A)$. We
    say that a word $w \in \{0,\#\}^*$ is of \textit{good structure} if $w\in (\#\#0^+\#0^+\#0^+)^+\#\#$. Otherwise, $w$ is of \textit{bad structure}. Recall that due to the definition of enc, all elements of $\mathtt{ValC}(A)$ are of good structure, thus being of bad structure, we define predicates $\pi_4$ to $\pi_{13}$ through simple predicates as follows:
    
    \begin{tabular}{ll}
        $\pi_4: \sigma(x) = \epsilon$, & $\pi_9: \sigma(x)$ ends on $0$, \\
        $\pi_5: \sigma(x) = \#$, & $\pi_{10}: \sigma(x)$ ends on $0\#$, \\
        $\pi_6: \sigma(x) = \#\#$, & $\pi_{11}:\sigma(x)$ contains a factor $\#\#0^*\#\#$, \\
        $\pi_7: \sigma(x)$ begins with $0$, & $\pi_{12}: \sigma(x)$ contains a factor $\#\#0^*\#0^*\#\#$, \\
        $\pi_8: \sigma(x)$ begins with $\#0$, & $\pi_{13}: \sigma(x)$ contains a factor $\#\#0^*\#0^*\#0^*\#0$.
    \end{tabular}
    
    Due to \Cref{lem:simplepred}, the predicates $\pi_1$ to $\pi_{13}$ do not strictly give rise to a characterization of substitutions with images that are of bad structure, as there are $\sigma \in G_{v_R}$ where $\sigma(x)$ is of good structure, but $\pi_3$ is satisfied due to $\sigma(y)$ being too short. But this problem can be avoided by choosing $\sigma(y)$ long enough to leave $\pi_3$ unsatisfied, and the following holds:
    \begin{lemma}\label{lem:allpi}
        A word $w \in \Sigma^*$ is of good structure if and only if there exists a substitution $\sigma \in H_{v_R}$ with $\sigma(x) = w$ such that $\sigma$ satisfies none of the predicates $\pi_1$ to $\pi_{13}$.
    \end{lemma}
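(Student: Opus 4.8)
The plan is to prove the equivalence directly, in its two directions, leaning on the characterizations already in hand: \Cref{lem:pi1} says that $\pi_1,\pi_2$ are satisfied by exactly the substitutions of bad form, and \Cref{lem:simplepred} lets me move between each embedded predicate $\pi_i$ (for $4\le i\le 13$) and the corresponding simple condition on $\sigma(x)$ listed in the table (equal to a short word, begins with a given factor, ends with a given factor, or contains a given factor) — but one direction of this correspondence is only valid modulo $\pi_3$. So in both directions I will arrange that $\pi_3$ stays unsatisfied, which turns the condition "$\sigma$ satisfies none of $\pi_1$–$\pi_{13}$" into "$\sigma$ is of good form and $\sigma(x)$ violates all ten simple conditions of $\pi_4$–$\pi_{13}$".

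For the "only if" direction, given $w\in(\#\#0^+\#0^+\#0^+)^+\#\#$, I would set $\sigma(x):=w$ and $\sigma(y):=0^N$ with $N$ larger than three times the length of the longest maximal $0$-run of $w$ (e.g.\ $N:=3|w|+1$); since $\sigma$ uses none of the variables of $X'$, it is trivially $v_{R}$-valid, so $\sigma\in H_{v_{R}}$. Because $w$ has no factor $\#^3$ and $\sigma(y)\in 0^+$, $\sigma$ is of good form, so by \Cref{lem:pi1} it fails $\pi_1$ and $\pi_2$. If $\sigma$ satisfied $\pi_3$, then $0^N=\sigma(y)$ would be a concatenation of three reversed factors of $w$, each of which, being a run of $0$'s, has length at most the longest $0$-run of $w$; this contradicts the choice of $N$, so $\pi_3$ fails. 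Finally each of the ten simple conditions of $\pi_4$–$\pi_{13}$ fails for $\sigma(x)=w$ by inspection of the shape of $w$ (it is nonempty and is not $\#$ or $\#\#$; it begins with $\#\#0$ and ends with $0\#\#$; and no block between two consecutive $\#\#$-delimiters is made of one, two, or at least four $0$-blocks, so none of the factors $\#\#0^*\#\#$, $\#\#0^*\#0^*\#\#$, $\#\#0^*\#0^*\#0^*\#0$ occurs), so by item (2) of \Cref{lem:simplepred} the embedded predicates $\pi_4$–$\pi_{13}$ fail as well.

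For the "if" direction, I would take $\sigma\in H_{v_{R}}$ with $\sigma(x)=w$ satisfying none of $\pi_1,\dots,\pi_{13}$. From the failed $\pi_1,\pi_2$ and \Cref{lem:pi1}, $\sigma$ is of good form, so $w\in\{0,\#\}^*$ has no factor $\#^3$ (and $\sigma(y)\in 0^*$). Since $\pi_3$ and the embedded $\pi_4$–$\pi_{13}$ all fail, item (1) of \Cref{lem:simplepred} yields that $w$ violates all ten simple conditions. The remaining — and main — step is the purely combinatorial argument that a $\#^3$-free word over $\{0,\#\}$ violating those ten conditions lies in $(\#\#0^+\#0^+\#0^+)^+\#\#$: $\pi_4$–$\pi_6$ make $w$ different from $\varepsilon,\#,\#\#$; $\pi_7,\pi_8$ together with $\#^3$-freeness force it to begin with $\#\#0$, and symmetrically $\pi_9,\pi_{10}$ force it to end with $0\#\#$; as $\#$-runs have length $1$ or $2$, the length-$2$ runs are precisely the delimiters $\#\#$, and the content strictly between two consecutive delimiters is a nonempty concatenation of maximal $0$-blocks separated by single $\#$'s; $\pi_{11}$ rules out one such block, $\pi_{12}$ rules out two, and $\pi_{13}$ rules out four or more, so every inter-delimiter segment is exactly $0^+\#0^+\#0^+$, which is the claim. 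I expect the chief difficulty to be carrying out this final case analysis precisely — in particular aligning each "$0^*$" inside the forbidden factors with one maximal $0$-block, and handling the interplay between the delimiters and $\#^3$-freeness at the two ends of $w$ and at the segment boundaries.
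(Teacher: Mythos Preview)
Your proposal is correct and follows essentially the same approach as the paper: in both directions you pick $\sigma(y)$ long enough to defeat $\pi_3$, use \Cref{lem:pi1} for $\pi_1,\pi_2$, and carry out the same incremental combinatorial analysis of $w$ via the complements of $\pi_4$--$\pi_{13}$ (the paper builds the prefix of $w$ step by step rather than speaking of inter-delimiter segments, but the content is identical). The only cosmetic differences are that you invoke \Cref{lem:simplepred} explicitly where the paper leaves the passage between simple and embedded predicates implicit, and the paper gets by with $\sigma(y)=0^{|w|+1}$ rather than your $0^{3|w|+1}$.
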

    \begin{proof}
        We begin with the \textit{if} direction. Assume $\sigma \in H_{v_R}$ such that there is no $i \in \{1, \ldots, 13\}$ for which $\sigma$ satisfies $\pi_i$. Due to \Cref{lem:pi1}, $\sigma$ is of good form and, thus, $\sigma(y) \in 0^*$. As $\pi_3$ does not hold, $\sigma(y)$ is also longer than any three non-overlapping factors of $0^*$ of $\sigma(x)$. Thus, the structure of $\sigma(x)$ can be inferred by intersecting the complements of the simple predicates given in the definitions $\pi_4$ to $\pi_{13}$.

        As $\sigma$ does not satisfy $\pi_4$, $\sigma(x) \neq \epsilon$. Due to $\pi_7$ and $\pi_9$, the first and the last letter of $\sigma(x)$ is $\#$, and neither is $\#0$ a prefix, nor $0\#$ a suffix of $\sigma(x)$, as otherwise $\pi_8$ or $\pi_{10}$ would be satisfied. Therefore, $\sigma(x)$ has $\#\#$ as prefix and suffix, but, as $\pi_6$ is not satisfied, $\sigma(x) \neq \#\#$. As $\sigma$ is of good form, $\sigma(x)$ does not contain $\#\#\#$ as a factor, and
        $$\sigma(x) \in \#\#0^+\Sigma^*\#\#$$
        must hold. But as $\pi_{11}$ is not satisfied, it is possible to refine this observation to
        $$\sigma(x) \in \#\#0^+\#0^+\Sigma^*\#\#,$$
        which in turn leads to
        $$\sigma(x) \in \#\#0^+\#0^+\#0^+\Sigma^*\#\#$$
        by considering $\pi_{12}$ as well. Now, there are two possibilities. If
        $$\sigma(x) \in \#\#0^+\#0^+\#0^+\#\#,$$
        then $\sigma(x)$ is of good structure, but if
        $$\sigma(x) \in \#\#0^+\#0^+\#0^+\#\Sigma^*\#\#,$$
        then $\pi_{13}$ and $\sigma \in G_{v_R}$ lead to
        $$\sigma(x) \in \#\#0^+\#0^+\#0^+\#\#0^+\Sigma^*\#\#.$$
        In this case, we can continue referring subsequently to one of $\pi_{11}$ to $\pi_{13}$ together with $\sigma \in G_{v_R}$, and conclude
        $$\sigma(x) \in (\#\#0^+\#0^+\#0^+)^+\#\#.$$
        Therefore, if $\sigma$ satisfies none of $\pi_1$ to $\pi_{13}$, then $\sigma(x)$ has to be of good structure.

        Regarding the \textit{only if} direction, assume some $w \in \Sigma^*$ is of good structure. Define $\sigma$ by $\sigma(x) = w$ and $\sigma(y) = 0^{|w| +1}$. As $\sigma$ is of good form, \Cref{lem:pi1} demonstrates that $\sigma$ satisfies neither $\pi_1$ nor $\pi_2$; and as $\sigma(y)$ is longer than any word which results from concatenating any number of non-overlapping factors of the form $0^*$ of $w$, $\pi_3$ cannot be satisfied either. By looking at the cases used above to define $\pi_4$ to $\pi_{13}$, we see that none of these predicates can be satisfied.
    \end{proof}

    For every $w$ of good structure, there exist uniquely determined 
    
    $n,i_1,j_1,k_1, \ldots,i_n,j_n,k_n \in \mathbb{N}$ such that
    
    $w = \#\#0^{i_1}\#0^{j_1}\#0^{k_1}\#\#\ldots\#\#0^{i_n}\#0^{j_n}\#0^{k_n}\#\#$. Thus, if $\sigma\in H_{v_R}$ does not satisfy any of $\pi_1$ to $\pi_{13}$, $\sigma(x)$ can be understood as an encoding of a sequence $T_1, \ldots, T_n$ of triples $T_i \in (\mathbb{N})^3$, and for every sequence of that form, there is a $\sigma \in H_{v_R}$ such that $\sigma(x)$ encodes a sequence of triples of positive integers, and $\sigma$ does not satisfy any of $\pi_1$ to $\pi_{13}$.

    In the encoding of computations that is defined by enc, $\#\#$ is always a border between the encodings of configurations, whereas single $\#$ separates the elements of configurations. As we encode every state $q_i$ with $0^{i+1}$, the predicate $\pi_{14}$, which is to be satisfied whenever $\sigma(x)$ contains a factor $\#\#00^{s+1}$, handles all encoded triples $(i,j,k)$ with $i > s+1$. If $\sigma$ does not satisfy this simple predicate (in addition to the previous ones), there is a computation $C_1, \ldots, C_n$ of $A$ with enc$(C_1, \ldots, C_n) = \sigma(x)$.

    All that remains is to choose an appropriate set of predicates that describe all cases where $C_1$ is not the initial configuration, $C_n$ is not an accepting configuration, or there are configurations $C_i,C_{i+1}$ such that $C_i \vdash_A C_{i+1}$ does not hold (thus, the exact value of $\mu$ depends on the number of invalid transitions in $A$).

    To ensure $C_1 = (q_0,0,0)$, we define a predicate
    \begin{enumerate}
        \item $\sigma(x)$ has a prefix of the form $\#\#00$,
    \end{enumerate}
    that is satisfied if $C_1$ has a state $q_i$ with $i > 0$, and the two predicates
    \begin{enumerate}[resume]
        \item $\sigma(x)$ has a prefix of the form $\#\#0^*\#00$,
        \item $\sigma(x)$ has a prefix of the form $\#\#0^*\#0^*\#00$,
    \end{enumerate}
    to cover all cases where one of the counters is set to a value other than 0. Next, we handle the cases where the last state is not an accepting state. For every $i$ with $q_i \in Q \setminus F$, we define a predicate that is satisfied if
    \begin{enumerate}[resume]
        \item $\sigma(x)$ has a suffix of the form $\#\#00^i\#0^*\#\#$.
    \end{enumerate}
    Thus, if $\sigma \in H_{v_R}$ satisfies none of the predicates defined up to this point, $\sigma(x) = \text{enc}(C_1, \ldots, C_n)$ for some computation $C_1, \ldots, C_n$ with $C_1 =(q_0,0,0)$ and $C_n \in F \times \mathbb{N}_0 \times \mathbb{N}_0$, there is a $\sigma \in H_{v_R}$ with $\sigma(x)= \text{enc}(C_1, \ldots, C_n)$, and $\sigma$ satisfies none of these predicates.

    All that remains is to define a set of predicates that describe those $C_i,C_{i+1}$ for which $C_i \vdash_A C_{i+1}$ does not hold. To simplify this task, we define the following four predicates that are satisfied if one of the counters is changed by more than 1:
    \begin{enumerate}[resume]
        \item $\sigma(x)$ contains a factor of the form $\#0^m\#0^*\#\#0^*\#00\ 0^m$ for some $m \in \mathbb{N}_0$,
        \item $\sigma(x)$ contains a factor of the form $0^m\ 00\#0^*\#\#0^*\#0^m\#$ for some $m \in \mathbb{N}_0$,
        \item $\sigma(x)$ contains a factor of the form $\#0^m\#\#0^*\#0^*\#00\ 0^m$ for some $m \in \mathbb{N}_0$,
        \item $\sigma(x)$ contains a factor of the form $0^m\ 00\#\#0^*\#0^*\#0^m\#$ for some $m \in \mathbb{N}_0$,
    \end{enumerate}
    Here, the first two predicates cover incrementing (or decrementing) the first counter by 2 or more; the other two do the same for the second counter. Then, for all $i,j \in \{1, \ldots, s\}$, all $c_1,c_2 \in \{0,1\}$ and all $r_1,r_2 \in \{-1,0,+1\}$ for which $(q_j,r_1,r_2) \notin \delta(q_i,c_1,c_2)$, we define a predicate that is satisfied if $\sigma(x)$ contains such a transition. We demonstrate this only for the exemplary case $c_1 =0,c_2 = 1,r_1 = +1,r_2=0$ without naming $i$ or $j$ explicitly. The predicate covering non-existing transitions of this form is
    \begin{enumerate}[resume]
        \item $\sigma(x)$ contains a factor of the form $\#\#0^{i+1}\#0\#000^m\#\#0^{j+1}\#00\#000^m\#\#$ for some $m \in \mathbb{N}_0$.
    \end{enumerate}
    All other predicates for illegal transitions are defined analogously. Note that we can safely assume that none of the counters is changed by more than 1, as these errors are covered by the predicates we defined under points 5-8. The number of predicates required for these points and point 9 determine the exact value of $\mu$.

    Now, if there is a substitution $\sigma$ that does not satisfy any of $\pi_1$ to $\pi_{\mu}$, then $\sigma(x) = \text{enc}(C_1, \ldots, C_n)$ for a computation $C_1, \ldots, C_n$, where $C_1$ is the initial and $C_n$ a final configuration, and for all $i \in \{1, \ldots, n-1\}, C_i \vdash_A C_{i+1}$. Thus, if $\sigma(\alpha_A) \notin L_E(\beta_A,r_{\beta_A})$, then $\sigma(x) \in \mathtt{ValC}(A)$, which means that $A$ has an accepting computation.

    Conversely, if there is some accepting computation $C_1, \ldots, C_n$ of $A$, we can define $\sigma$ through $\sigma(x) := \text{enc}(C_1,\ldots, C_n)$, and choose $\sigma(y)$ to be an appropriately long sequence from $0^*$. Then $\sigma$ does not satisfy any of the predicates $\pi_1$ to $\pi_{\mu}$ defined above, thus $\sigma(\alpha_A) \notin L_E(\beta_A,r_{\beta_A})$, and $L_E(\alpha_A,r_{\alpha_A}) \nsubseteq L_E(\beta_A,r_{\beta_A})$.

    We conclude that $A$ has an accepting computation iff $L_E(\alpha_A,r_{\alpha_A})$ is not a subset of $L_E(\beta_A,r_{\beta_A})$. Therefore, any algorithm deciding the inclusion problem of \Cref{thm:uirev} can be used to decide whether a nondeterministic 2-counter automaton without input has an accepting computation. As this problem is known to be undecidable, the inclusion problem of \Cref{thm:uirev} is also undecidable.

    This proof can be extended to larger (finite) alphabets. Assume that $\Sigma =\{0,\#,a_1, \ldots,a_n\}$ for some $n \geq 1$. We extend $H$ to the set of all substitutions $\sigma:(\Sigma \cup \{x,y\})^* \rightarrow \Sigma^*$, but do not extend the definition of substitutions of good form to our new and larger alphabet. Thus, $\sigma \in H$ is of good form if $\sigma(x) \in \{0,\#\}^*,\sigma(y)\in 0^*$ and $\sigma(x)$ does not contain $\#^3$ as a factor. In addition to the predicates $\pi_1$ to $\pi_{\mu}$, for each new letter $a_i$, we define a predicate $\pi_{\mu +2i-1}$ which implies that $\sigma(x)$ contains an occurrence of $a_i$, and a predicate $\pi_{\mu+2i}$ which implies that $\sigma(y)$ contains an occurrence of $a_i$. To this end, we define
    $$ \alpha_A := v\ v\ \#^6\ v\ x\ v\ y\ v\ \#^6\ v\ u\ v$$
    where $x,y$ are distinct variables, $v = 0\#^30$ and $u=0\#^5a_1^5...a_n^50$ (instead of $u=0\#^50$), add the new predicates $\pi_{\mu +1}$ to $\pi_{\mu +2n}$ (which we still leave unspecified for a moment) to $\beta_A$ and use
    $$\eta_i:=z_i\ \hat{z}_{i_1}\ \hat{z}_{i_2}\ \hat{z}_{i_3}\ \hat{z}_{i_4}\ \hat{z}_{i_5}\ \ddot{z}_{i,1_1}\ \ddot{z}_{i,1_2}\ \ldots\ \ddot{z}_{i,n_1}\ \ddot{z}_{i,n_2}\ z_{i'}$$
    instead of $\eta_i := z_{i}\ \hat{z}_{i_1}\ \hat{z}_{i_2}\ \hat{z}_{i_3}\ \hat{z}_{i_4}\ \hat{z}_{i_5}\ z_{i'}$, where all $z_i,z_{i'},\hat{z}_{i_j},\ddot{z}_{i,k_j}$ are pairwise different variables with
    \begin{align*}
        &(z_i,z_{i'}),(\hat{z}_{i_1},\hat{z}_{i_2}),(\hat{z}_{i_1},\hat{z}_{i_3}),(\hat{z}_{i_1},\hat{z}_{i_4}),(\hat{z}_{i_1},\hat{z}_{i_5}),(\ddot{z}_{i,k_1},\ddot{z}_{i,k_2}) \in R.
    \end{align*}
    Referring to the new shape of $u$, we can make the following observation:
    \begin{lemma}\label{lem:finitealph}
        Let $n \geq 2$, $\{x_1,x_1',x_{2_1},x_{2_2},x_{2_3},x_{2_4},x_{2_5},x_{3_1},x_{3,2},\ldots,x_{n_1},x_{n_2}\} \subset X$ and $\{a_1, \ldots, a_n\} \subseteq \Sigma$. If 
        $$\alpha = x_1\ x_{2_1}\ x_{2_2}\ x_{2_3}\ x_{2_4}\ x_{2_5}\ x_{3_1}\ x_{3_2} \ldots x_{n_1}\ x_{n_2}\ x_1'$$
        with $(x_1,x_1'),(x_{2_1},x_{2_2}),(x_{2_1},x_{2_3}),(x_{2_1},x_{2_4}),(x_{2_1},x_{2_5}),(x_{i_1},x_{i_2}) \in R$ for all $i \in \{3, \ldots,n\}$ and there is a morphism $\sigma: X^* \rightarrow \Sigma^*$ with 

        \noindent
        $\sigma(\alpha)=a_1(a_2)^5(a_3)^2\ldots (a_n)^2a_1$, then $\sigma(x_1)=\sigma(x_1')=a_1$ and $\sigma(x_{i_j}) = a_i$ for each $i \in \{1, \ldots,n\}$.
    \end{lemma}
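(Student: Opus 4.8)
The plan is to exploit throughout that we are in the setting $R=\{r_{rev}\}$, so that every constraint $(x,x')\in R$ forces $\sigma(x)=\sigma(x')^R$, and that the target word $w:=a_1(a_2)^5(a_3)^2\cdots(a_n)^2a_1$ has an extremely rigid letter distribution: since $a_1,\dots,a_n$ are pairwise distinct (as in the construction of this proof), $a_1$ occurs in $w$ only in the first and last position, $a_2$ occurs only in positions $2,\dots,6$, and for each $i\in\{3,\dots,n\}$ the letter $a_i$ occurs only in the two consecutive positions $2i+1,2i+2$. Note also that $\alpha$ has exactly $2n+3=|w|$ variables, so $\sigma$ cuts $w$ into $2n+3$ consecutive (possibly empty) blocks, one per variable, and I will repeatedly use the positions these blocks occupy inside $w$.

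First I would determine $\sigma(x_1)$ and $\sigma(x_1')$. Put $p:=|\sigma(x_1)|=|\sigma(x_1')|$ and $q:=|\sigma(x_{2_1})|=\dots=|\sigma(x_{2_5})|$, the lengths being equal because reversal preserves length. Comparing lengths, $5q=|w|-2p-2\sum_{i=3}^{n}|\sigma(x_{i_1})|$ is odd, so $q$ is odd and in particular $q\ge 1$. If $n=2$ this already reads $2p+5q=7$ with $q$ odd, hence $p=q=1$. If $n\ge 3$: since $\sigma(x_1)$ is the length-$p$ prefix and $\sigma(x_1')$ the length-$p$ suffix of $w$ and $\sigma(x_1)=\sigma(x_1')^R$, the first $p$ letters of $w$ must equal the last $p$ letters read backwards; comparing the second letter from the front ($a_2$) with the second letter from the back ($a_n$) gives $p\le 1$, and if $p=0$ then $\sigma(x_{2_1})$ is the prefix $w[1..q]$, which begins with $a_1$, so $\sigma(x_{2_2})=\sigma(x_{2_1})^R$ ends with $a_1$ at position $2q$, which is even and greater than $1$ --- impossible, since $a_1$ only occupies positions $1$ and $2n+3$. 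Hence $p=1$, i.e.\ $\sigma(x_1)=\sigma(x_1')=a_1$.

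With $\sigma(x_1)=a_1$ fixed, $\sigma(x_{2_1})$ occupies $w[2..q+1]$ and therefore begins with $a_2$; then $\sigma(x_{2_3})=\sigma(x_{2_1})^R$ ends with $a_2$, yet it ends at position $3q+1$, which is $\ge 7$ whenever $q\ge 2$ and so carries no $a_2$ --- a contradiction. Thus $q=1$ and $\sigma(x_{2_j})=a_2$ for every $j$, leaving exactly $w[7..2n+2]=(a_3)^2\cdots(a_n)^2$ to be matched by $\sigma(x_{3_1})\sigma(x_{3_2})\cdots\sigma(x_{n_1})\sigma(x_{n_2})$, with $\sigma(x_{i_1})=\sigma(x_{i_2})^R$. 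I claim every $m_i:=|\sigma(x_{i_1})|$ satisfies $m_i\le 1$: otherwise pick $i_0$ minimal with $m_{i_0}\ge 2$ and set $s:=\sum_{3\le i<i_0}m_i\le i_0-3$; then $\sigma(x_{i_0,1})$ starts at position $7+2s=2(s+3)+1$, the first cell of the $a_{s+3}$-block, so $\sigma(x_{i_0,1})$ begins with $a_{s+3}$ and hence $\sigma(x_{i_0,2})$ ends with $a_{s+3}$, but $\sigma(x_{i_0,2})$ ends at position $6+2s+2m_{i_0}$, which carries $a_{s+2+m_{i_0}}\neq a_{s+3}$ (the index is strictly larger, and still at most $n$ because the block lies inside $w[7..2n+2]$) --- a contradiction. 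So all $m_i\le 1$, and because $\sum_{i=3}^{n}m_i=n-2$ is a sum of $n-2$ terms each at most $1$, every $m_i=1$; therefore $\sigma(x_{i_1})=\sigma(x_{i_2})=a_i$ for $i=3,\dots,n$. Together with the previous paragraphs this is exactly the assertion.

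The step I expect to be the main obstacle is coping with the fact that $\sigma$ may be \emph{erasing}: a priori some variable, or an entire reversal pair, could be mapped to $\varepsilon$, which shifts the induced factorisation of $w$ and defeats naive position bookkeeping. The device that keeps this under control is never to reason about which blocks are empty, but to prove directly that no block among $\sigma(x_{2_1}),\dots,\sigma(x_{2_5})$ or $\sigma(x_{3_1}),\dots,\sigma(x_{n_2})$ can have length $\ge 2$, and then to invoke the exact length identities, which pin the remaining lengths to exactly $1$ and thereby exclude the empty case automatically. A secondary nuisance, already flagged above, is that for $n=2$ the word $w=a_1(a_2)^5a_1$ is a palindrome, so the ``second letter from each end'' comparison carries no information; there one instead reads $p=q=1$ directly off the length equation $2p+5q=7$ together with $q$ odd.
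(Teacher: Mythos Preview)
Your proof is correct and follows essentially the same left-to-right strategy as the paper --- first pin down $\sigma(x_1)=\sigma(x_1')=a_1$, then $\sigma(x_{2_j})=a_2$, then the remaining $\sigma(x_{i_j})=a_i$ --- but you have filled in the details the paper leaves implicit (in particular the length-parity argument forcing $q$ odd, the position bookkeeping that excludes $p=0$ and $q\ge 2$, and the separate treatment of $n=2$ where the palindromic shape of $w$ defeats the prefix/suffix comparison). The paper's proof is a terse sketch of the same argument.
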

    \begin{proof}
        Assume 
        $$\sigma( x_1\ x_{2_1}\ x_{2_2}\ x_{2_3}\ x_{2_4}\ x_{2_5}\ x_{3_1}\ x_{3_2} \ldots x_{n_1}\ x_{n_2}\ x_1')= a_1(a_2)^5(a_3)^2\ldots (a_n)^2a_1$$. 
        If $\sigma(x_1) = \varepsilon$, then
        $$\sigma(x_{2_1}\ x_{2_2}\ x_{2_3}\ x_{2_4}\ x_{2_5}\ x_{3_1}\ x_{3_2} \ldots x_{n_1}\ x_{n_2})= a_1(a_2)^5(a_3)^2\ldots (a_n)^2a_1$$
        leads to an immediate contradiction. But $\sigma(x_1) \neq \varepsilon$ implies $\sigma(x_1)=a_1$. Therefore,
        $$\sigma(x_{2_1}\ x_{2_2}\ x_{2_3}\ x_{2_4}\ x_{2_5}\ x_{3_1}\ x_{3_2} \ldots x_{n_1}\ x_{n_2})= (a_2)^5(a_3)^2\ldots (a_n)^2$$
        must hold. Now, $\sigma(x_{2_j})=a_2$ must hold for every $j \in \{1,2,3,4,5\}$ since 
        
        $(x_{2_1},x_{2_2}),(x_{2_1},x_{2_3}),(x_{2_1},x_{2_4}),(x_{2_1},x_{2_5}),(x_{i_1},x_{i_2}) \in R$ for all $i \in \{3, \ldots,n\}$. Thus, we also get that $\sigma(x_{i_1}) = \sigma(x_{i_2}) = a_i$ for every $i \in \{3, \ldots, n\}$.
    \end{proof}
    \Cref{lem:finitealph} allows $\pi_{\mu +1}$ to $\pi_{\mu +2n}$ to be analogously constructed to $\pi_2$. To this end, we define

    \begin{table}[h]
        \centering
        \begin{tabular}{ll}
          $\gamma_{\mu +2i-1} := y_{\mu+2i-1,1}\ \ddot{z}_{\mu +2i -1,i}'\ y_{\mu +2i -1,2},$   & $\gamma_{\mu +2i} := y_{\mu +2i},$ \\
           $\delta_{\mu +2i-1} := \hat{y}_{\mu+2i-1},$  & $\delta_{\mu+2i} := \hat{y}_{\mu+2i,1}\ \ddot{z}_{\mu +2i,i}'\ \hat{y}_{\mu +2i,2}$
        \end{tabular}
        \label{tab:newalph}
    \end{table}

    for each $i \in [n]$. 
    
    Again, all $y_{j,k},\hat{y}_{j,k},\ddot{z}_{j,k}'$ are pairwise different variables with $(\ddot{z}_{i,1_1},\ddot{z}_{\mu +2i -1,i}')$, $(\ddot{z}_{i,1_1},\ddot{z}_{\mu +2i,i}') \in R$. Now \Cref{lem:pi1} applies (mutatis mutandis) as for binary alphabets, and since all substitutions of good form behave for $\Sigma$ as for the binary alphabet, we can use the very same predicates and the same reasoning as before to prove undecidability of the inclusion problem of the case $r_{rev}$ for larger alphabets.

    This concludes the proof of the case $r_{rev}$.
\end{proof}

\subsubsection{Proof of the Case $r_{ab}$}
\begin{proof}
    This proof is analogous to the case $r_{ab}$. Therefore, we only describe how to get to the analogy and refer for a detailed proof to the case $r_{ab}$. Let $A$ be a nondeterministic 2-counter automaton. Consider 
    $$\alpha_A := v\ v\ \#^6\ v\ x\ v\ y\ v\ \#^6\ v\ u\ v$$
    where $x,y$ are distinct variables, $v = 0\#^30$ and $u = 0\#^50$ and 
    \[ \beta_A:=x_1x_{1'}\ \ldots x_{\mu}x_{\mu'}\#^6\hat{\beta}_1\ldots\hat{\beta}_{\mu}\#^6\ddot{\beta_1}\ldots\ddot{\beta_{\mu}} \]
    with, for all $i \in \{1, \ldots, \mu\}$, and $\ddot{\beta_i} := x_{i_4}\ \eta_i\ x_{i_5}$, where $x_1$, $x_{1'}$, $x_{1_4}$, $x_{1_5}$, $\ldots$ are distinct variables with $(x_i, x_{i'}), (x_i, x_{i_1}), (x_i, x_{i_2}),(x_i,x_{i_3}),(x_i,x_{i_4}),(x_i,x_{i_5}) \in R$ and all $\eta_i \in X^*$ are terminal-free patterns for some $\mu$ and some $\hat{\beta}_i$ as in the case $r_{ab}$. We have
    $\eta_i := z_{i}\ \hat{z}_{i_1}\ \hat{z}_{i_2}\ \hat{z}_{i_3}\ \hat{z}_{i_4}\ \hat{z}_{i_5}\ z_{i'}$ with distinct variables $z_i$, $z_{i'}$, $\hat{z}_{i_1}$, $\hat{z}_{i_2}$, $\hat{z}_{i_3}$, $\hat{z}_{i_4}$, $\hat{z}_{i_5}$ and $(z_i,z_{i'}),(\hat{z}_{i_1},\hat{z}_{i_2}),(\hat{z}_{i_1},\hat{z}_{i_3}),(\hat{z}_{i_1},\hat{z}_{i_4}),(\hat{z}_{i_1},\hat{z}_{i_5}) \in R$.
    
    The main concept of the proof for $r_{ab}$ is to show that iff $L_E(\alpha_A,r_{\alpha_A}) \subseteq L_E(\beta_a,r_ {\beta_A})$ then $A$ has no accepting computation. All possible non accepting computations are listed in the $\hat{\beta}_1 \ldots \hat{\beta}_{\mu}$ part where each $\hat{\beta}_i, i \in \{1, \ldots, \mu \}$ encodes one non accepting computation together with selecting variables $x_{i_j}$ for $j \in {1,2,3}$. The part of $\beta_A$ and $\alpha_A$ before the first $\#^6$ ensures that exactly one pair $x_i,x_i', i \in \{1, \ldots, \mu\}$ is selected, resulting in that exactly one non accepting computation is selected and that exactly one $\ddot{\beta}_i, i \in \{1, \ldots, \mu\}$ is selected. With this, we get that $\eta_i, i \in \{1, \ldots, \mu\}$ matches $u$ and thus, that $z_i = 0$ and $\hat{z}_{i_1} = \#$. We use these two variables to encode the $0$ and the $\#$ in the non accepting computation encodings. If we can show that this selecting process also works in the setting $R = r_{ab}$ and that we get two variables with which we can encode the letters $0$ and $\#$, then this proof also works in this setting. For two variables $x,y$, we have if $x = 0$ (or $x = \#$) and $(x,y) \in r_{ab}$, then $y = 0$ (or $y = \#$ respectively). Thus, we can model with the relation $r_{ab}$ identical variables if we know a one letter substitution of one variable.

    If we want $L_E(\alpha_A,r_{\alpha_A}) \subseteq L_E(\beta_A,r_{\beta_A})$, we need to match the factor $v\ v$ to $x_1x_{1'} \ldots x_{\mu}x_{\mu'}$ and $v\ u\ v$ to $\ddot{\beta}_1 \ldots \ddot{\beta}_{\mu}$ due to our delimiter $\#^6$. We start with the match of $v\ v$ to $x_1x_{1'} \ldots x_{\mu}x_{\mu'}$. We have $v\ v = 0\ \#^3 0\ 0 \#^3 0$. Since the variables are pairwise in relation, we can always assume that we start substituting with the left variable. If a variable gets substituted to something different from $\epsilon$, then its corresponding variable also gets substituted to something different from $\epsilon$. We get the following reasonable possibilities for the beginning of the match:
    \begin{table}[h]
        \centering
        \resizebox{\textwidth}{!}{
            \begin{tabular}{l|l|l}
                $x_i$ &  $x_{i'}$ & works? \\
                \hline
                 $0$ & $0$ & $00 \neq 0\#$ \\
                 $0\#$ & $0\#$ or $\# 0$ & $0\#0\#,\ 0\#^20 \neq 0\#^3$ \\
                 $0\#^2$ & $0\#^2$ or $\#0\#$ or $\#\#0$ & $0\#^20\#^2,\ 0\#^30\#,\ 0\#^4 0 \neq 0\#^300$ \\
                 $0\#^3$ & $0\#^3$ or $\#0\#^2$ or $\#^20\#$ or $\#^30$ &  $0\#^30\#^3,\ 0\#^3\#0\#^2,\ 0\#^50\#,\ 0\#^630 \neq 0\#^300\#^2$ \\
                 $0\#^30$ & $0\#^3 0$ & $0 \#^300\#^30 = v\ v$
            \end{tabular}
        }
        \caption{Possibilities for the first match with $v\ v$}
        \label{tab:posv-appdx}
    \end{table}
    
    Thus, the only possibility is that for exactly one $i \in \{1, \ldots, \mu\}$ the variables $x_i$ and $x_{i'}$ get substituted with $v$ and all other variables $x_j,x_{j'}$ for $j \neq i, j \in \{1, \ldots, \mu\}$ get substituted with $\epsilon$. This together with the matching from $v\ u\ v$ to $\ddot{\beta}_1 \ldots \ddot{\beta}_{\mu}$ implies that $\eta_i$ needs to match $u$. For this matching between $\eta_i = z_{i}\ \hat{z}_{i_1}\ \hat{z}_{i_2}\ \hat{z}_{i_3}\ \hat{z}_{i_4}\ \hat{z}_{i_5}\ z_{i'}$ and $u = 0\#^50$, we get the following possibilities. Remember that $(z_i,z_{i'}),(\hat{z}_{i_1},\hat{z}_{i_2}),(\hat{z}_{i_1},\hat{z}_{i_3}),(\hat{z}_{i_1},\hat{z}_{i_4}),(\hat{z}_{i_1},\hat{z}_{i_5}) \in R$.
    
    \begin{table}[h]
        \centering
        \begin{tabular}{l|l|l}
           $z_i$  & $\hat{z}_{i_1}$ & works? \\ \hline
            $\epsilon$ & $0$ or $\#$ & $0^5,\ \#^5 \neq 0\#^50$ \\
            $0$ & $\#$ & $0\#^50 = 0\#^5 0 = u$ \\
            $0\#$ & $\#$ & $0\#\#^5\#0 \neq 0 \#^50$ \\
            $0\#\#$ & $\#$ & $0\#^2\#^5\#^20 \neq 0\#^50$ \\
            $0\#\#\#$ & $\epsilon$ & $0\#^60 \neq 0\#^50$
        \end{tabular}
        \caption{Possibilities for the match with $u$}
        \label{tab:posu-appdx}
    \end{table}

    Thus, the only possibility is that $z_i$ and $z_{i'}$ get substituted with $0$ and $\hat{z}_{i_k}, k \in \{1,2,3,4,5\}$ get substituted with $\#$. With this, we can encode the $0$ and the $\#$ and thus, the proof of the case $r_{rev}$ works in this setting analogously. This concludes the proof of Theorem~\ref{thm:uirev}.
\end{proof}

\newpage

\section{Additional Content for the Membership Problem}
\label{appendix:membership}
\subsection{Proof of Lemma~\ref{lemma:membership-trivial-hardness}}\label{proof:lemma:membership-trivial-hardness}
\begin{proof}
    This result follows from known reductions, e.g., \cite{Manea2019}, for all considered cases in this Lemma. For reference, we recall the construction formally in the relational pattern framework and state why it translates over.

    First, we recall \emph{positive 1-in-3-SAT}. A formula $\varphi$ in positive 3-CNF (conjunctive normal form) is a set of clauses $\varphi = \{C_1,...,C_n\}$, for some $n\in\mathbb{N}$, where each clause $C_i$ is a set of 3 non-negated literals $C_i = \{X_{i,1},X_{i,2},X_{i,3}\}$, where each $X_{i,j}\in X_p$ ($X_p$ being the set of all propositional variables). A positive 3-CNF formula $\varphi$ is 1-in-3 satisfiable if there exists an assignment $\beta : X_p \rightarrow \{0,1\}$ that sets exactly one literal per clause to $1$.

    Assume an arbitrary positive 3-CNF formula $\varphi = \{C_1,...,C_n\}$. Construct the word $w = (1\#)^{n-1}1$ over the binary alphabet $\{1,\#\}$. Next, construct the pattern 
    \[\alpha = x_{1,1}x_{1,2}x_{1,3} \# ... \# x_{n,1}x_{n,2}x_{n,3},\] 
    for independent variables $x_{i,j}\in X$ where $i,j\in\mathbb{N}$. Construct the relational constraints $r_\alpha$ by adding, for each $X_{i,j}$ and $X_{i',j'}$ occurring in $\varphi$ where $X_{i,j} = X_{i',j'}$ are the same variable, the pairs $(x_{i,j},x_{i',j'}),(x_{i',j'},x_{i,j})\in r_\alpha$. Notice that $|w|_{\#} = |\alpha|_{\#} = n-1$. Hence, each $r_\alpha$-valid substitution $h$ must set all variables either to $1$ or to $\varepsilon$ to obtain $h(\alpha) = w$.

    Now, we see that if $\varphi$ is 1-in-3-satisfiable, then there exists an assignment of variables $\beta$ that sets exactly one variable per clause to $1$. Based on $\beta$, construct a substitution $h$ where $h(x_{i,j}) = 1$ iff $\beta{X_{i,j}} = 1$. Otherwise set $h(x_{i,j}) = \varepsilon$. As each clause only has one variable set to true by $\beta$, we obtain $h(\alpha) = w$. For the other direction, notice that we can analogously construct an assignment $\beta$ out of a $r_\alpha$-valid substitution $h$ with $h(\alpha) = w$. This concludes the erasing case. For the non-erasing case, notice that we can construct $(\alpha,r_\alpha)$ in exactly the same manner and construct a word $w = (1\#)^4\#1^4$. As the non-erasing case forces each variable to have at least length 1 and all variables must be substituted to multiples of the letter $1$, we can use length $2$ substitutions to represent the positive literals. The argument follows analogously.

    \begin{enumerate}
        \item If $R = \{r_{|w|}\}$, then variables have to be substituted with equal length. As any substitution resulting in $w$ substituted each variable only over unary words over the letter $1$, having equal length also means being the equal word. This concludes this relation for the E and NE cases.
        \item If $R = \{r_{ssq}\}$, then relating two variables $x$ and $y$ in both directions, i.e., putting $(r_{ssq},x,y)\in r_{\alpha'}$ as well as $(r_{ssq},y,x)\in r_{\alpha'}$ results in some substitution $h\in H$ to be valid if and only if $h(x) = h(y)$. This concludes this relation for the E and NE cases.
        \item If $R = \{r_{ab}\}$ the same argument as for $r_{|w|}$ holds for the E and NE cases.
        \item If $R = \{r_{perm}\}$ the same argument as for $r_{|w|}$ holds for the E and NE cases.
        \item If $R = \{r_{rev}\}$ the same argument as for $r_{|w|}$ holds for the E and NE cases.
        \item If $R = \{r_{*}\}$ the the same argument as for $r_{ssq}$ holds for the E and NE cases.
        \item If $R = \{r_{com^+}\}$ then, in the erasing case, substituting one variable to the letter $1$, as $\varepsilon$ is not allowed, forces related variables to also be substituted to the letter $1$ for a $r_\alpha$-valid substitution $h$ to result in $h(\alpha) = w$.
        \item If $R = \{r_{\varepsilon_=}\}$ the same argument as for $r_{com^+}$ holds for the E case.
    \end{enumerate}

    This concludes this proof.
\end{proof}

\subsection{Proof of Theorem~\ref{theorem:membership-np-complete-cases}}\label{proof:theorem:membership-np-complete-cases}
\begin{proof}
    Indeed, NP-containment of all considered cases follows from \Cref{prop:membership-innp-relational}. For most cases, NP-hardness has been confirmed in \ref{lemma:membership-trivial-hardness}. What is left to show is that, first, in the case of non-erasing pattern languages, NP-hardness also holds for the relations $r_{com^+}$ and $r_{com^*}$. Indeed, in the NE case, these two relations result in the same languages, hence they can be considered together. Second, in the erasing case, the relation $r_{com^*}$ is still open. The rest of this proof provides two reductions, showing NP-hardness for the remaining two cases.

    \textbf{(Non-Erasing ($r_{com^+}$/$r_{com^*}$):} W.l.o.g. we consider $R = \{r_{com^+}\}$. We reduce the general 3-SAT problem to the respective membership problem. Let $\varphi = \{C_1,...,C_n\}$ be a 3-CNF formula, for some $n>0$, where $C_i = \{X_{i,1},X_{i,2},X_{i,3}\}$ for clauses $C_i$ and propositional variables $X_{i,j}\in\{X_1,...,X_m,\overline{X_1},...,\overline{X_m}\}$. Notice that in this variant of the problem, negated variables may occur as literals. First, we construct the word
    \[w = \#\#\ s_1\ \#\#\ ...\ \#\#\ s_m\ \#\#\ t_1\ \#\#\ ...\ \#\#\ t_n\ \#\#\]
    where $s_i = 1\#1$ and $t_j = 1^{10}\#1^{10}\#1^{10}$, for each $1 < i \leq n$ and $1 \leq j \leq m$. Next, we construct the regular pattern
    \[\alpha = \#\#\ u_{0}v_{0}\ \#\#\ ...\ \#\#\ u_{m}v_{m}\ \#\#\ \alpha_{1}\ \#\#\ ...\ \#\#\ \alpha_{n}\ \#\#\]
    where $\alpha_{i} = z_{i,1}\ x_{i,1}\ z_{i,2}\ x_{i,2}\ z_{i,3}\ x_{i,3}\ z_{i,4}$, for $1 \leq i \leq n$ and variables $z_{i,1}$ to $z_{i,4}$ as well as $x_{i,1}$ to $x_{i,3}$. 
    
    Notice that the pattern $\alpha$ is indeed regular and that each variable occurs only once. Finally, we construct the regular constraints $r_\alpha$. For all non-negated variables $X_k\in\{X_1,...,X_m\}$ and literals equal to that variable $X_{i,j} = X_k$, we add the triple $(r_{com^*},u_k,x_{i,j})\in r_\alpha$. Analogously, for all negated variables $\overline{X_k}\in\{\overline{X_1},...,\overline{X_m}\}$, we add the triple $(r_{com^*},v_k,x_{i,j})\in r_\alpha$. In conclusion, there exist two parts. First, each factor $s_k$, $k\in[m]$, in $w$ has to be matched against $u_kv_k$ in $\alpha$, and each factor $t_i$, $\in[n]$ has to be matched against $\alpha_i$. We proceed by proving the arguments.
    
    ($\Rightarrow$:) Assume there exists a satisfying assignment of variables $\beta$ for $\varphi$. Let $h$ be a substitution with the following properties: For every variable $X_k$, $k\in[m]$, where $\beta(X_k) = 1$ (thus $\beta(\overline{X_k}) = 0$), we set $h(u_k) = 1$, $h(v_k) = \#1$ and, for every literal $X_{i,j}$ with $X_{i,j} = X_k$, we set $h(x_{i,j}) = 1$, and, for every literal with $X_{i,j} = \overline(X_k)$, we set $h(x_{i,j}) = \#1$. Analogously, if $\beta(X_{k}) = 0$ and $\beta(\overline{X_{k}}) = 1$, we set $h(u_k) = 1\#$ and $h(v_k) = 1$ and the $x_{i,j}$ variables correspondingly. As all variables are assigned some value in $\beta$, all variables of the form $x_{i,j}$ in $\alpha$ are substituted by some value (either $1$ or $1\#$ or $\#1$). For each $\alpha_i$, $i\in[n]$, we can now set the surrounding independent variables $z_{i,1}$ to $z_{i,4}$ to non-empty words such that $h(\alpha_i) = 1^10\#1^10\#1^10$. As there is at least one variable $x_{i,j}$ in each $\alpha_i$ that is substituted by a word not containing $\#$ (as $\beta$ is a satisfying assignment), we will not obtain too many $\#$ by the substitutions for $x_{i,j}$s alone. Additionally, as there are enough $1$s in between each $\#$ in $w$, we can find a non-empty substitution for each $r_{i,j}$ variable. As all variables related by constraints in $r_{\alpha}$ are substituted by equal words, conforming the relation $r_{com^+}$, the chosen substitution is $r_\alpha$-valid. We notice that each $h(u_{k}v_{k}) = 1\#1$, each $h(\alpha_i) = 1^10\#1^10\#1^10$, and therefore $h(\alpha) = w$.

    ($\Leftarrow$:) Now assume there exists a $r_\alpha$-valid substitution $h$ such that $h(\alpha) = w$. We notice that no variable can be substituted to contain the factor $\#\#$, as otherwise $|h(\alpha)|_{\#\#} > |w|_{\#\#}$. Hence, we must have $h(u_kvk) = s_k = 1\#1$ as well as $h(\alpha_i) = t_i = 1^10\#1^10\#1^10$. As this is the non-erasing case, for each $k\in[m]$, either $h(u_k)$ or $h(v_k)$ contains the symbol $\#$ while the other is substituted only by $1$. As each $h(\alpha_i)$ only contains two $\#$ symbols, there is at least one variable of the form $x_{i,j}$ which is substituted by a unary word in $\{1\}^+$. As each such variable is related to some $u_k$ or $v_k$ respectively, this can only be the case if both are substituted by unary words in $\{1\}^+$. Construct an assignment $\beta$ that sets, for each $k\in[m]$, $\beta(X_k) = 1$ and $\beta(\overline{X_k}) = 0$ if $h(u_k) = 1$ (thus $h(v_k) = \#1$), and, otherwise, set $\beta(X_k) = 0$ and $\beta(\overline{X_k}) = 1$ if $h(v_k) = 1$ (thus $h(u_k) = 1\#$). By the argument from above, we now obtain an assignment that sets at least one variable per clause to $1$, satisfying $\varphi$.

    \textbf{(Erasing ($r_{com^*}$):} In this case, we can construct a simplified, but analogoue version of the reduction of the NE case. Assume $\varphi$ to be a formula of the form from above. Construct the word 
    \[w = \#\#\ s_1\ \#\#\ ...\ \#\#\ s_m\ \#\#\ t_1\ \#\#\ ...\ \#\#\ t_n\ \#\#\]
    where, this time, $s_i = 1\#1$ and $t_j = 1$, for $1 \leq i \leq m$ and $1 \leq j \leq n$, and let
    \[\alpha = \#\#\ u_{0}v_{0}\ \#\#\ ...\ \#\#\ u_{m}v_{m}\ \#\#\ \alpha_{1}\ \#\#\ ...\ \#\#\ \alpha_{n}\ \#\#\]
    where $\alpha_{i} = x_{i,1}x_{i,2}x_{i,3}$ for variables $x_{i,1}$ to $x_{i,3}$. We set $r_\alpha$ analogously to the NE case. The first direction of the proof, assuming a satisfying assignment for $\varphi$, we can create a $r_\alpha$-substitution similarly to how it is done in the NE case. Set $u_k$ and $v_k$ analogously to the NE case. For each clause, choose one of the literal variables $x_{i,j}$ that is related to either some $u_k$ or $v_k$ substituted by $1$ and substitute it by $1$. The other direction of the proof follows by the same reasoning from the NE case that the $\#\#$ factors have to align in $w$ and $h(\alpha)$, therefore forcing $h(u_kv_k) = 1\#1$ and $h(\alpha_i) = 1$. Hence, for each $i\in[n]$, some variable $x_{i,j}$ is substituted by $1$. This is only possible if either the related $u_k$ or $v_k$ is substituted by either $\varepsilon$ or $1$ due to the definition of $r_{com^*}$. From that, we can construct a satisfying assignment $\beta$ for $\varphi$, analogously to before, concluding this proof idea.
    
    Hence, the membership problem for all cases considered in \Cref{theorem:membership-np-complete-cases} is NP-complete, concluding this proof.
\end{proof}
\end{document}